\date{}
\newcommand{\be}{\begin{equation}}
\newcommand{\ee}{\end{equation}}
\def\la{\langle}
\def\ra{\rangle}
\def\R{\mathbb{R}}
\def\C{\mathbb{C}}
\def\Z{\mathbb{Z}}
\def\N{\mathbb{N}}
\renewcommand{\Re}{\text{{\rm Re}\;}}
\renewcommand{\Im}{\text{{\rm Im}\;}}
\newcommand{\supp}{\text{{\rm supp}}}
\newcommand{\dist}{\text{\rm dist}}
\newtheorem{theorem}{Theorem}[section]
\newtheorem{lemma}[theorem]{Lemma}
\newtheorem{proposition}[theorem]{Proposition}
\newtheorem{corollary}[theorem]{Corollary}
\theoremstyle{definition}
\newtheorem{remark}[theorem]{Remark}
\numberwithin{equation}{section}
\title{Resonance widths for the molecular predissociation}
\author{
Alain GRIGIS${}^1$ \& 
Andr\'e MARTINEZ$ {}^2$
 }
\begin{document}

\maketitle 
\addtocounter{footnote}{1}
\footnotetext{{\tt\small Universit\'{e} Paris 13,  
D\'epartement de Math\'ematiques,  LAGA UMR CNRS 7539, Av. J.-B. Cl\'ement, 93430 Villetaneuse, France, grigis@math.univ-paris13.fr}  \\Partly supported by the GDR  INdAM-CNRS GrefiMefi}  
\addtocounter{footnote}{1}
\footnotetext{{\tt\small Universit\`a di Bologna,  
Dipartimento di Matematica, Piazza di Porta San Donato, 40127
Bologna, Italy, 
andre.martinez@unibo.it }\\Partly supported by Universit\`a di Bologna, Funds for Selected Research Topics}  
\begin{abstract}
We consider a semiclassical $2\times 2$ matrix Schr\"odinger operator of the form $P=-h^2\Delta {\bf I}_2 +
{\rm diag}(V_1(x) ,  V_2(x)) +hR(x,hD_x)$, where  $V_1, V_2$ are real-analytic, $V_2$ admits a non degenerate minimum at 0, $V_1$ is non trapping at energy $V_2(0)=0$, and $R(x,hD_x)=(r_{j,k}(x,hD_x))_{1\leq j,k\leq 2}$ is a symmetric off-diagonal $2\times 2$ matrix of first-order pseudodifferential operators with analytic symbols. We also assume that $V_1(0) >0$. Then, denoting by $e_1$ the first eigenvalue of $-\Delta +  \la  V_2''(0)x,x\ra /2$, and under some ellipticity condition on $r_{1,2}$ and additional generic geometric assumptions, we show that  the unique resonance $\rho_1$ of $P$ such that $\rho_1 = V_2(0) + (e_1+r_{2,2}(0,0))h + {\mathcal O}(h^2)$ (as $h\rightarrow 0_+$) satisfies, 
$$
\Im \rho_1 = -h^{n_0+(1-n_\Gamma)/2}f(h,\ln\frac1{h})e^{-2S/h},
$$
where $f(h,\ln\frac1{h}) \sim \sum_{0\leq m\leq\ell} f_{\ell,m}h^\ell(\ln\frac1{h})^m$ is a symbol with $f_{0,0}>0$, $S>0$ is the so-called Agmon distance associated with the degenerate metric $\max(0, \min(V_1,V_2))dx^2$, between 0 and $\{V_1\leq 0\}$, and $n_0\geq 1$, $n_{\Gamma}\geq 0$ are integers that depend on the geometry.
\end{abstract}  
\vskip 4cm
{\it Keywords:} Resonances; Born-Oppenheimer approximation; eigenvalue crossing; microlocal analysis.
\vskip 0.5cm
{\it Subject classifications:} 35P15; 35C20; 35S99; 47A75.

\baselineskip = 18pt 
\vfill\eject
\section{Introduction}
The theory of predissociation goes back to the very first years of quantum mechanics (see, e.g., \cite{Kr, La, Ze, St}). 
Rouhgly speaking, it describes the possibility for a molecule to dissociate spontaneously (after a sufficiently large time) into several sub-molecules, for energies below the crossing of the corresponding energy surfaces of the initial molecule and the final dissociated state. From a physical point of view, one naturally expects that this (typically quantic) phenomenon occurs with extremely small (but non zero) probability.

Despite the fact that statements concerning this problem are present in the physics literature for more than 70 years, the first mathematically rigorous result is due to M. Klein \cite{Kl} in 1987, where an upper bound on the time of predissociation is given in the framework of the Born-Oppenheimer approximation. More precisely, denoting by $h$ the square root of the ratio of electronic to nuclear mass, M. Klein proves the existence of resonances $\rho$ with real part below the crossing of the energy surfaces, and with exponentially small imaginary part, that is,
$$
|\Im\rho | ={\mathcal O}(e^{-2(1-\varepsilon)S/h})
$$
where $S>0$ is a geometric constant, $\varepsilon >0$ is fixed arbitrarily, and the estimate holds uniformly as $h$ goes to zero.

In terms of probabilities, this result corresponds to give an upper bound on the transition probability between the initial molecule and the dissociated state. The purpose of this article is to obtain a more complete information on this quantity, and in particular a lower bound on it. More precisely, under suitable conditions, we prove that the imaginary part of the lowest resonance admits a complete asymptotic expansion of the type,
$$
\Im \rho_1 = -h^{n_0+(1-n_\Gamma)/2}e^{-2S/h}\sum_{0\leq m\leq\ell} f_{\ell,m}h^\ell(\ln\frac1{h})^m,
$$
in the sense that, for any $N\geq 1$, one has 
\begin{eqnarray*}
|\Im \rho_1 +h^{n_0+(1-n_\Gamma)/2}e^{-2S/h}\sum_{0\leq m\leq\ell\leq N} f_{\ell,m}h^\ell(\ln\frac1{h})^m| \\
={\mathcal O}(h^{n_0+(1-n_\Gamma)/2+N}e^{-2S/h}),
\end{eqnarray*}
where $S>0$, $n_0\geq 1$ and $n_{\Gamma}\geq 0$ are all geometric constants,
and where the leading coefficient $f_{0,0}$ is positive.
\vskip 0.3cm
As it is well-known, the quantity $\Im\rho$ is closely related to the oscillatory behavior of the corresponding resonant state in the unbounded classically allowed region. Hence, the main issue will be to know sufficiently well this behavior.
\vskip 0.3cm
The strategy of the proof consists in starting from the WKB construction  at the bottom of the well, and then trying to extend them as much as possible, and at least up to the classically allowed unbounded region. This is mainly the same strategy used in \cite{HeSj2} for the study of shape resonances.
\vskip 0.3cm

However, from a technical point of view, several new problems are encountered, because of the crossing of the electronic levels. 

\vskip 0.3cm
The first one is that, at the crossing, the only reference on WKB constructions is that of \cite{Pe}, that has been done for a special type of matrix Schr\"odinger operators. In particular, it strongly uses  the fact that only differential operators are involved. In our case, since our operator comes from a Born-Oppenheimer reduction, it is necessarily of pseudodifferential kind (see, e.g., \cite{KMSW, MaSo}). As a consequence, our first step will consist in extending the method of \cite{Pe} to pseudodifferential operators. Unfortunately, this extension is far from being straightforward, and needs a specific formal calculus adapted to expressions involving the Weber functions.

\vskip 0.3cm
The second one is that, after having overcome the crossing, the symbols of the resulting WKB expansions do not anymore satisfy analytic estimates (usually needed in order to re-sum them, up to exponentially small error terms). In particular, this prevents us from using directly the constructions of \cite{HeSj2} near the classically allowed unbounded region. Instead, we have to adapt the method of \cite{FLM} that, without analyticity, allows us to extend the WKB constructions into the classically allowed unbounded region up to a distance of order $(h\ln|h|)^{2/3}$ from the barrier. This is not much, but this is enough for having a sufficient control, in this region, on the difference between the true solution and the WKB one. This is actually done by adapting the specific arguments of propagation introduced in \cite{FLM}, where the propagation takes place in $h$-dependent domains.

\vskip 0.3cm
In the next section, we describe in details the geometrical context and the assumptions. 

In Section \ref{secres}, we state our main result. 

Section \ref{constWKB} is devoted to the WKB constructions, starting from the well and proceeding away along some minimal geodesics, until crossing the boundary of the classically  forbidden region. It is in this section that we develop a formal pseudodifferential calculus adapted to expressions involving the Weber functions. 

Next, in Section \ref{secagm}, we extend the well-known Agmon estimates to our pseudodifferential context. In this case, the main feature is that, since we cannot use general Lipschitz weight-functions, we replace them by $h$-dependent smooth functions with bounded gradient, but with derivatives of higher order that can grow to infinity as $h\to 0$. 

In Section \ref{secglob}, we use these estimates in order to obtain a bound for the difference between the WKB solutions and a solution of a modified problem, and this permits us to define an asymptotic solution in a whole neighborhood of the classically forbidden region  (but only up to a distance of order $(h\ln|h|)^{1/3}$ from this region). 

Section \ref{compsol} contains the {\it a priori} estimates and the propagation arguments that lead to a good control on the difference between the asymptotic solution and the actual one.

Finally, Section \ref{secwidt} makes the link with the width of the resonance. Even if the idea is standard (practically an application of the Green  formula: see, e.g., \cite{HeSj2}), here we have to be careful with the double problem  that, on the one hand, we deal with pseudodifferential (not differential) operators, and, on the other hand, the magnitude of freedom outside the classically forbidden region is of order $(h\ln|h|)^{1/3}$ as $h\to 0$.

\section{Geometrical Assumptions}

We consider the semiclassical $2\times 2$ matrix Schr\"odinger operator,
\be
\label{operator}
P= 
\left(\begin{array}{cc}
P_1 & 0\\
0 & P_2
\end{array}\right) + hR(x,hD_x)
\ee
with,
$$
P_j := -h^2\Delta +V_j(x) \quad (j=1,2),
$$
where $x=(x_1,\dots ,x_n)$ is the current variable in $\R^n$ ($n\geq 1$),
$h>0$ denotes the semiclassical parameter, and $R(x,hD_x)=(r_{j,k}(x,hD_x))_{1\leq j,k\leq 2}$ is a  formally self-adjoint  $2\times 2$ matrix of first-order
semiclassical pseudodifferential operators. 

Let us observe that this is typically the kind of operator one obtains in the Born-Oppenheimer approximation, after reduction to an effective Hamiltonian (see \cite{KMSW, MaSo}). In that case, the quantity $h^2$ stands for the inverse of the mass of the nuclei.

\vskip 0.2cm

{\bf Assumption 1.} { \it The potentials $V_1$ and $V_2$ are smooth and 
bounded on $\R^n$,   and
satisfy,
\begin{eqnarray}
\label{assV1}
&& V_1(0) >0\mbox{ and } E=0 \mbox{ is a non-trapping energy for } V_1; \\
&&V_1 \; \mbox{has a strictly negative limit as} \; \vert x\vert \rightarrow \infty;\\
\label{assV2}
&&V_2\geq 0\; ; \; V_2^{-1}(0) =\{ 0\} \; ; \;  {\rm Hess}V_2(0) >0 \; ; \; 
\liminf_{\vert x\vert\rightarrow\infty} V_2 >0.
\end{eqnarray}
}
\begin{center}  
{\includegraphics[scale=0.40]{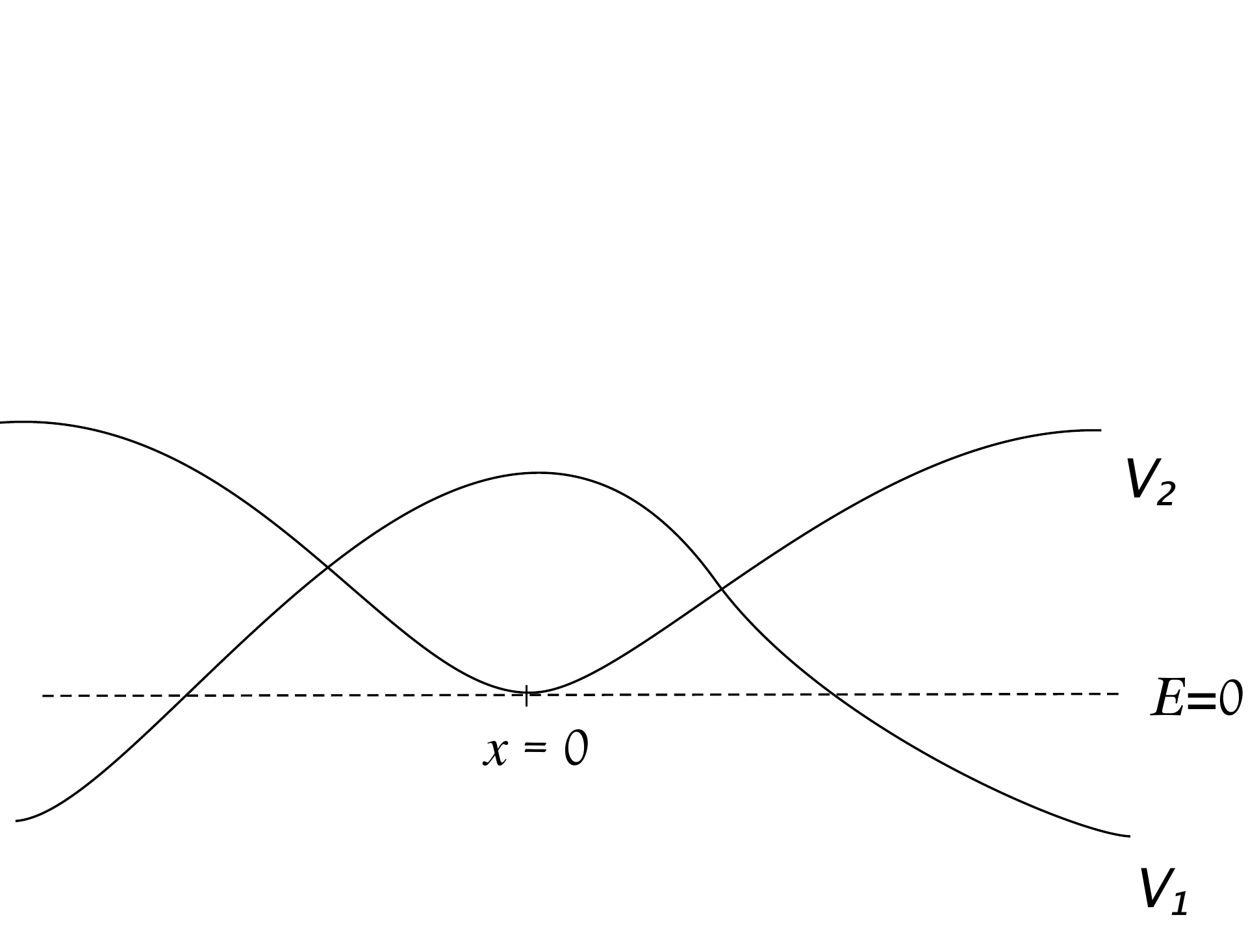}}\end{center}
\vskip 1cm

In particular we assume that $V_2$ has a unique non degenerate well at $x=0$.  This well is included in the island  $ \ddot O$ which is the bounded open set defined as,

\be
\label{island}
{\ddot O} = \lbrace x \in \R^n; V_1 (x) >0 \rbrace.
\ee

Next, we define the sea as the set where $V_1(x) <0$.

The fact that 0 is a non-trapping energy for $V_1$ means that, for any $(x,\xi)\in p_1^{-1}(0)$, one has $|\exp tH_{p_1}(x,\xi)|\rightarrow +\infty$ as $t\rightarrow \infty$, where $p_1(x,\xi):=\xi^2+V_1(x)$ is the symbol of $P_1$, and $H_{p_1}:=(\nabla_\xi p_1, -\nabla_x p_1)$ is the Hamilton field of $p_1$.
\vskip 0.2cm
Such conditions (\ref{assV1})-(\ref{assV2}) correspond to the situation of {\it molecular predissociation} as described in  \cite{Kl}.
\vskip 0.2cm

Since we plan to study the resonances of $P$ near the energy level $E=0$, we also assume,

{\bf Assumption 2.}\label{Ass2} {\it The potentials $V_1$ and $V_2$ extend to bounded holomorphic functions near a complex sector of the form,
${\mathcal S}_{R_0,\delta} :=\{x\in \C^n\, ;\, |\Re x|\geq R_0\, ,\,  \vert \Im x\vert \leq \delta |\Re x| \}$, with $R_0,\delta >0$. Moreover  $V_1$ tends to its limit at $\infty$ in this sector and $\Re V_2 $ stays away from $0$ in this sector.}

{\bf Assumption 3.} \label{Ass3}{\it The  symbols $r_{j,k}(x,\xi)$ for $(j,k) = (1,1), (1,2), (2,2)$ extend to  holomorphic functions near,
$$
\widetilde {\mathcal S}_{R_0,\delta}:= {\mathcal S}_{R_0,\delta}\times \{ \xi\in \C^n\,; |\Im\xi| \leq \max (\delta\la \Re x \ra, \sqrt{M_0})\},
$$
with,
$$
M_0> \sup_{x\in\R^n}\min (V_1(x), V_2(x)).
$$
and, for any $\alpha\in \N^{2n}$,  they satisfy
\be
\label{assR}
\partial^\alpha r_{j,k}(x,\xi) ={\mathcal O}(\la \Re \xi\ra)\,\, \mbox{ uniformly on } \widetilde {\mathcal S}_{R_0,\delta}.
\ee}

Now we define the cirque $\Omega$ as,

\be
\label{cirque}
\Omega = \lbrace x \in \R^n; V_2 (x) < V_1 (x) \rbrace
\ee

Hence, the well is in the cirque and the cirque is in the island.

 We also consider the Agmon distance associated to the pseudo-metric
 $$(\min(V_1, V_2))_+ dx^2.$$
 Such a metric is considered in Pettersson \cite{Pe}.  There are three places where this metric is not a standard one. 

First, near the well $0$, but this case is well known. It has been treated by Helffer and Sj\" ostrand \cite{HeSj1}. It is also considered in Pettersson. The Agmon distance,
\be
\label{Agmon distance}
\varphi(x) = d(x,0)
\ee
is smooth at $0$.
The point $(x,\xi) = (0,0)$ is a hyperbolic singular point of the Hamilton vector field $H_{q_2}$, where $q_2= \xi^2 - V_2(x)$, and the stable and unstable manifold near this point are respectively the Lagrangian manifolds $\lbrace \xi = \nabla \varphi(x) \rbrace$ and $\lbrace \xi = - \nabla \varphi(x) \rbrace$.

Secondly on $\partial \Omega$, precisely at the points where $V_1 = V_2$. This case has been also considered in Pettersson.  At such a point, if one assume that $\nabla V_1 \not = \nabla V_2$, then, any geodesic which is tranversal to the hypersurface $\lbrace V_1= V_2 \rbrace$ is ${C}^1$.

Finally there is the boundary of the island $\partial \ddot O$, where $V_1=0$. This situation was considered in Helffer-Sj\"ostrand \cite{HeSj2}. We will follow them in a next assumption.

Now we consider the distance from the well to the sea, that means to $\partial \ddot O$~:
\be
\label{S}
S = d(0,\partial \ddot O),
\ee

Setting $B_{S}= \lbrace x \in \ddot O, \varphi( x) <S\rbrace$ and denoting by $\bar {B}_{S}$ its closure, we also consider  the set $\bar {B}_{S} \cap \partial \ddot O$, that consists in the  points of the boundary of the island that are joined to the well by a minimal $d$-geodesic included in the island. These points are called points of type 1 in \cite{HeSj2}, and we denote by $G$ the set of the minimal geodesics joining such a point to $0$ in $\ddot O$.
\vskip 0.3cm
We make the following assumption.
\vskip 0.3cm
{\bf Assumption 4.} \label{Ass4}{\it For all $\gamma\in G$,  $\gamma$ intersects $\partial \Omega$  at a finite number of points and the intersection is transversal at each of these points. Moreover,  $\nabla V_1 \not= \nabla V_2$ on $\gamma\cap \partial \Omega$.}

\vskip 0.3cm
Let us recall that the assumption that $0$ is a non trapping energy for $V_1$ implies that $\nabla V_1  \not = 0$ on $\partial \ddot O$, and therefore that $\partial \ddot O$ is a smooth hypersurface.
\vskip 0.3cm
We define the caustic set $\mathcal C$ as the closure of the set of points $x \in {\ddot O}$ with $\varphi(x) = S + d(x, \partial \ddot O)$. In particular, the points of type 1 are in $\mathcal C$. As in \cite{HeSj2} we assume,
\vskip 0.3cm
{\bf Assumption 5.} 
\label{Ass5}{\it
The points of type 1 form a submanifold $\Gamma$,
and $\mathcal C$ has a contact of order exactly two with $\partial \ddot O$ along $\Gamma$.}
\vskip 0.3cm
We denote by $n_{\Gamma}$ the dimension of $\Gamma$.  Moreover, for any $\gamma\in G$, we denote by $N_\gamma:= \# (\gamma\cap\partial\Omega)$ the number of points where $\gamma$ crosses the boundary of the cirque,
and we set,
$$
n_0:= \min_{\gamma\in G} N_\gamma\quad ;\quad G_0:=\{ \gamma\in G\, ;\, N_\gamma = n_0\}.
$$
Then, we make an assumption that somehow  insures that an interaction between the two Schr\" odinger operators does exist. 

{\bf Assumption 6.} \label{Ass6}{\it
There exists at least one $\gamma\in G_0$ for which one has the ellipticity condition  $r_{12} (x, i\nabla \varphi (x)) \not = 0$ at every  point $x\in \gamma\cap\partial \Omega$.}


\section{Main Result}
\label{secres}

Under the previous assumption we plan to study the resonances of the operator $P$ given in (\ref{operator}), where $R(x, hD_x)$ is defined as
$$
R(x,hD_x):= \left( \begin{array}{cc}
{\rm Op}^L_h (r_{1,1})&{\rm Op}^L_h (r_{1,2})\\
{\rm Op}^R_h (\overline{r_{1,2}})& {\rm Op}^L_h (r_{2,2})
\end{array} \right)
$$
where for any symbol $a(x,\xi)$ we use the following quantizations,
\begin{eqnarray*}
&& {\rm Op}^L_h (a)u(x) = \frac{1}{(2 \pi h)^n} \int e^{i(x-y)\xi/h} a(x, \xi) u(y) dy d\xi;\\
&&{\rm Op}^R_h (a)u(x) = \frac{1}{(2 \pi h)^n} \int e^{i(x-y)\xi/h} a(y, \xi) u(y) dy d\xi.
\end{eqnarray*}

In order to define the resonances we consider the distortion given as follows: Let $F(x) \in C^\infty (\R^n, \R^n)$ such that $F(x) = 0$ for $\vert x \vert \leq R_0$, $F(x) = x$ for $\vert x\vert$ large enough. For $\theta >0$ small enough, we define the distorded operator $P_{\theta}$ as the value at $ \nu = i \theta$ of  the extension to the complex of the  operator  $U_\nu P U_\nu^{-1}$ which is  defined for $\nu$ real small enough, and analytic in $\nu$, where we have set
\be
U_\nu \phi(x) = \det ( 1 + \nu dF(x))^{1/2} \phi ( x + \nu F(x)).
\ee 
Since we have a pseudodifferential operator $R(x, hD)$ the fact that  $U_\nu P U_\nu^{-1}$ is analytic in $\nu$ is not completely standard but can be done without problem (thanks to Assumption 3), and by using the Weyl Perturbation Theorem, one can also see  that there exists $\varepsilon_0 >0$ such that for any $\theta >0$ small enough, the spectrum of $P_\theta$ is discrete in $[ -\varepsilon_0, \varepsilon_0 ] - i [0, \varepsilon_0 \theta]$. The eigenvalues of $P_\theta$ are called the resonances of $P$ \cite{Hu, HeSj2, HeMa}. 

We will need another small parameter $k>0$ related to the semiclassical parameter $h>0$, defined as,
\be
\label{defk}
k := h \ln\frac1{h}.
\ee

In the sequel, we will study the resonances in the domain $[- \varepsilon_0, Ch] -i [0, Ck]$, where $C>0$ is arbitrarily large. In this case, we can adapt the WKB constructions near the well made in \cite{HeSj1}, and show that these resonances form a finite set $\{\rho_1,\dots, \rho_m\}$, with asymptotic expansions as $h \rightarrow 0$, of the form,
$$
\rho_j \sim h\sum_{\ell \geq 0}\rho_{j,\ell}h^{\ell/2},
$$
where $\rho_{j,\ell}\in \R$ and $\rho_{j,0}=e_j+r_{2,2}(0,0)$, $e_j$ being the $j$-th eigenvalue of the
harmonic oscillator $-\Delta + \la V_2''(0)x,x\ra /2$ (actually, to be more precise, one must also assume that the arbitrarily large constant $C$ does not coincide with one of the $e_j$'s).

In this paper we are interested in the imaginary part of these resonances. We have,

\begin{theorem}\sl
\label{mainth}
Under Assumptions 1 to 6,
 the  first resonance
$\rho_1$ of $P$ is such that,
$$
\Im \rho_1 = -h^{n_0 +(1-n_\Gamma)/2}f(h, \ln\frac1{h})e^{-2S/h}
$$
where  $f(h, \ln\frac1{h})$ admits an asymptotic expansion of the form,
$$
f(h, \ln\frac1{h})\sim \sum_{0\leq m\leq\ell}f_{\ell, m}h^\ell (\ln\frac1{h})^m,\quad  (h\rightarrow 0),
$$
with $f_{0,0} >0$, and  $S>0$ is defined in (\ref{S}).

Moreover the other resonances in $[- \varepsilon_0, Ch] -i [0, Ck]$ verify
$$
\Im \rho_j = {\mathcal O} (h^{\beta_j} e^{-2S/h}),
$$
for some real $\beta_j$, uniformly as $h \rightarrow 0$.
\end{theorem}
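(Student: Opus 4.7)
The plan is to extend the Helffer--Sj\"ostrand strategy for shape resonances \cite{HeSj2} to the two-level crossing situation: construct, starting from the well $x=0$, a sufficiently precise WKB representative of the resonant eigenfunction $u_1$ of $P_\theta$, transport it across the boundary $\partial\Omega$ of the cirque, extend it into the classically allowed sea, and then extract $\Im\rho_1$ from a Green-type identity. Near the well I would build an asymptotic pair $u_1^{\mathrm{WKB}}\sim e^{-\varphi/h}a(x,h)$, where $\varphi$ is the Agmon distance of (\ref{Agmon distance}), whose upper component is the first Hermite eigenfunction of $-\Delta+\langle V_2''(0)x,x\rangle/2$ and whose lower component is of order $h$ and driven by the off-diagonal coupling $r_{1,2}$; this is a direct extension of \cite{HeSj1}. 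Transporting the WKB symbols along the classical flow of the eikonal equation propagates $u_1^{\mathrm{WKB}}$ along each geodesic $\gamma\in G$ as long as $\gamma$ stays inside $\Omega$.

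The technical heart of the argument is the passage across the transversal crossings of $\partial\Omega$, where the matrix principal symbol degenerates and standard WKB fails. Here I would generalize Pettersson's construction \cite{Pe} to the pseudodifferential setting by developing a formal calculus whose amplitudes are formal series of Weber parabolic cylinder functions, matched to the two incoming semiclassical solutions. Each of the $n_0$ crossings of a minimal $\gamma\in G_0$ then contributes an explicit transition factor and produces a well-defined WKB solution on the far side. After the last crossing the resulting amplitudes no longer satisfy analytic estimates, so the method of \cite{HeSj2} is not directly available; instead I would use the non-analytic construction of \cite{FLM} to extend the WKB solution up to a distance of order $k^{2/3}=(h\ln\frac1{h})^{2/3}$ inside the sea. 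The pseudodifferential Agmon estimates of Section \ref{secagm}, built with smooth $h$-dependent weights of bounded gradient, together with the propagation arguments of Section \ref{compsol}, then give $L^2$-control on the difference $u_1-u_1^{\mathrm{WKB}}$ in the relevant neighborhood of $\partial\ddot O$.

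Finally, to compute $\Im\rho_1$ I would apply Green's formula to $(P-\rho_1)u_1=0$ on a subdomain $D\subset\ddot O$ whose boundary lies at distance $\sim k^{2/3}$ from $\partial\ddot O$ in the sea. Since the distortion is trivial on $D$ and $\|u_1\|_{L^2(D)}\sim 1$, one reduces, as in \cite{HeSj2}, to evaluating a commutator-type boundary integral of the form $\mathrm{Im}\langle[P,\chi]u_1,u_1\rangle$ for a suitable cutoff $\chi$. Substituting the WKB form of $u_1$ on $\partial D$, performing an Airy-type analysis transverse to $\partial\ddot O$ (justified by the contact condition in Assumption 5) and stationary phase along the submanifold $\Gamma$ of type-one points, one recovers the prefactor $h^{n_0+(1-n_\Gamma)/2}e^{-2S/h}$ and the asymptotic expansion of $f(h,\ln\frac1{h})$ in powers of $h$ and $\ln\frac1{h}$. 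Positivity of the leading coefficient $f_{0,0}$ is forced by the modulus-squared structure of the boundary integrand together with the ellipticity hypothesis in Assumption 6, while the same argument applied to the resonant states associated with higher Hermite eigenvalues yields the announced bound on the remaining $\rho_j$. The main obstacle I anticipate is precisely the development of the formal Weber-pseudodifferential calculus at $\partial\Omega$, since it requires simultaneous asymptotic control of both the semiclassical and the Weber-function expansions of the matrix amplitudes, along with effective matching formulas across each crossing.
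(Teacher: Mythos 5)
Your proposal follows essentially the same route as the paper: WKB from the well à la Helffer--Sj\"ostrand, Weber-function (parabolic-cylinder) crossing at $\partial\Omega$ via a pseudodifferential generalization of Pettersson, non-analytic extension to depth $\sim(h\ln\frac1h)^{2/3}$ past $\partial\ddot O$ following \cite{FLM}, pseudodifferential Agmon estimates with smooth $h$-dependent weights, propagation estimates, and a Green/commutator identity plus Airy and stationary-phase analysis along $\Gamma$ to extract $\Im\rho_1$. One small slip: in (\ref{BKWinit}) it is the \emph{second} component (the one attached to $P_2$, whose potential has the well) that is elliptic and carries the Hermite mode, while the \emph{first} component is of size $h$ and driven by $r_{1,2}$ — you have them interchanged, but this does not affect the strategy.
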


\section{WKB constructions}\label{constWKB}
In this chapter, we fix some minimal $d$-geodesic $\gamma\in G$, and we denote by $x^{(1)}, \dots, x^{(N_\gamma)}$ the sequence of points that constitute $\gamma\cap\partial\Omega$, ordered from the closest to $0$ up to the closest to $\ddot O$ (note that $N_\gamma$ is necessarily an odd number). We also denote by $\gamma^{(1)}, \gamma^{(2)}, \dots, \gamma^{(N_\gamma +1)}$ the portions of $\gamma\backslash \partial\Omega$ that are in-between  $0$ and $x^{(1)}$, $x^{(1)}$ and $x^{(2)}$, ..., $x^{(N_\gamma)}$ and $\ddot O$, respectively, in such a way that we have,
$$
\gamma = \gamma^{(1)} \cup \{x^{(1)}\} \cup \gamma^{(2)}\cup \dots \cup \{x^{(N_\gamma)}\}\cup \gamma^{(N_\gamma +1)},
$$
where the union is disjoint (in particular, by convention we assume that $0\in \gamma^{(1)}$). Moreover, we start by considering the first resonance $\rho_1$ only.
\subsection{In the cirque}
As in \cite{Pe}, the starting point of the
construction consists in the WKB asymptotics
given near the well $x=0$ by a method due to
Helffer and Sj\"ostrand \cite{HeSj1}. More precisely,
because of the matricial nature of the
operator and the fact that $p_1$ is
elliptic above $x=0$, one finds a formal
solution $w_1$ of $Pw_1=\rho_1 w_1$ of the form,
\be
\label{BKWinit}
w_1(x;h)=\left(\begin{array}{clcr} 
ha_1(x,h) \\ 
a_2(x,h)\end{array}\right) e^{-\varphi
(x)/h}.
\ee
where $\varphi$ is defined in (\ref{Agmon distance}), and  $a_j$ ($j=1,2$) is a classical symbol of order
$0$ in $h$, that is, a formal series
in $h$ of the form,
\be
\label {AE1}
a_j(x,h) = \sum_{k=0}^\infty h^ka_{j,k}(x)
\ee
with $a_{j,k}$ smooth near 0 (here no half-powers of $h$ appear since we consider the first resonance $\rho_1$ only). Moreover, $a_2$ is
elliptic in the sense that $a_{2,0}$ never
vanishes. Note that the generalization of the
constructions of \cite{HeSj1} to the case of
pseudodifferential operators is done by the
use of a so-called {\it formal semiclassical
pseudodifferential calculus},  which in our case is based on
the following result:
\begin{lemma}\sl
\label{FPDO}
 Let 
$\widetilde\varphi =\widetilde\varphi (x)$ be a real bounded
$C^\infty$ function on
$\R^n$ and let 
$p=p(x,\xi )\in S(1)$, that
extends to a bounded function, holomorphic with
respect to $\xi$ in a neighborhood of the set,
$$
\{ (x,\xi )\in 
{\rm Supp}\nabla\widetilde\varphi\times \C^n\; ;\; \vert \Im\xi\vert
\leq\vert\widetilde\nabla\varphi (x)\vert\}.
$$
Then, denoting by ${\rm Op}_h^L$ the left (or standard)
semiclassical quantization of symbols, the operator
$e^{\widetilde\varphi /h}{\rm Op}^L_h(p) e^{-\widetilde\varphi /h}$ is
uniformly bounded on
$L^2(\R^n)$, and for any $a\in C_0^\infty
(\R^n)$ and $N\geq 1$, one has,
\begin{eqnarray}
\label{pseudoformel}
&& \left( e^{\widetilde\varphi /h}{\rm
Op}^L_h(p)e^{-\widetilde\varphi /h}a\right) (x;h)\\
&& =
\sum_{\vert\alpha\vert\leq N}
\frac1{\alpha
!}\left(\frac{h}{i}\right)^{\vert\alpha\vert}
\partial_\xi^\alpha p(x,i\nabla \widetilde\varphi (x))
\partial_y^\alpha\left(a(y)e^{\Phi (x,y)/h}
\right)_{y=x} +{\mathcal O}(h^{N/2}),\nonumber
\end{eqnarray}
locally uniformly with respect to $x$, and
uniformly with respect to $h$ small enough.
Here, $\Phi (x,y):=\widetilde\varphi
(x)-\widetilde\varphi (y)-(x-y)\widetilde\nabla \varphi (x)$.
\end{lemma}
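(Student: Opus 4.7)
The plan is to reduce the conjugated operator to a pseudodifferential operator with a shifted symbol via a contour deformation in $\xi$, and then to extract the stated expansion by a Taylor expansion of the amplitude in $y$.

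Starting from
$$
\bigl(\text{Op}^L_h(p)e^{-\widetilde\varphi/h}a\bigr)(x)=\frac{1}{(2\pi h)^n}\iint e^{i(x-y)\xi/h-\widetilde\varphi(y)/h}p(x,\xi)a(y)\,dy\,d\xi,
$$
I would use the holomorphy of $\xi\mapsto p(x,\xi)$ on a neighborhood of $\{|\Im\xi|\leq|\nabla\widetilde\varphi(x)|\}$ for $x\in\text{Supp}\,\nabla\widetilde\varphi$, together with $p\in S(1)$ to control the boundary contributions at infinity, in order to deform the $\xi$-contour to $\xi+i\nabla\widetilde\varphi(x)$ at each fixed $x$ (the deformation is trivial where $\nabla\widetilde\varphi(x)=0$). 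The linear-in-$(x-y)$ piece of the shift cancels the weight $e^{\widetilde\varphi(x)/h-\widetilde\varphi(y)/h}$ up to the Taylor remainder $\Phi(x,y)$, producing
$$
\bigl(e^{\widetilde\varphi/h}\text{Op}^L_h(p)e^{-\widetilde\varphi/h}a\bigr)(x)=\frac{1}{(2\pi h)^n}\iint e^{i(x-y)\xi/h+\Phi(x,y)/h}\,p(x,\xi+i\nabla\widetilde\varphi(x))\,a(y)\,dy\,d\xi.
$$

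The formal expansion is then obtained by Taylor-expanding $b(x,y;h):=a(y)e^{\Phi(x,y)/h}$ in $y$ about $y=x$, and applying the identity $(y-x)^\alpha e^{i(x-y)\xi/h}=(-h/i)^{|\alpha|}\partial_\xi^\alpha e^{i(x-y)\xi/h}$ followed by integration by parts in $\xi$. This transfers each factor $(y-x)^\alpha/\alpha!$ onto $p(x,\xi+i\nabla\widetilde\varphi(x))$, and performing the $(y,\xi)$ Fourier integration evaluates the outcome at $\xi=0$, yielding exactly the $\alpha$-th term
$$
\frac{1}{\alpha!}\Bigl(\frac{h}{i}\Bigr)^{|\alpha|}\partial_\xi^\alpha p\bigl(x,i\nabla\widetilde\varphi(x)\bigr)\,\partial_y^\alpha b(x,y;h)_{y=x}
$$
of the claimed sum. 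The crucial size count is that $\Phi(x,y)$ vanishes to order two at $y=x$, so Fa\`a di Bruno gives $\partial_y^\alpha e^{\Phi(x,y)/h}|_{y=x}=\mathcal{O}(h^{-|\alpha|/2})$; hence the $\alpha$-th term has size $h^{|\alpha|/2}$, and the Taylor remainder $R_N$, which carries a factor $|x-y|^{N+1}$, contributes $\mathcal{O}(h^{N/2})$ after repeated integration by parts in $\xi$ exploits the $|x-y|$ decay against the oscillation.

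The main technical obstacle is the uniform $L^2$-boundedness, since $\widetilde\varphi$ is only assumed bounded, so $\Phi(x,y)/h$ may a priori grow like $|x-y|^2/h$ off the diagonal and a naive Schur bound fails. The way out is that, after the contour shift, the Schwartz kernel of $e^{\widetilde\varphi/h}\text{Op}^L_h(p)e^{-\widetilde\varphi/h}$ is represented as an oscillatory integral in $\xi$ whose amplitude $p(x,\xi+i\nabla\widetilde\varphi(x))$ lies in $S(1)$ uniformly in $h$; iterated integration by parts in $\xi$ then produces factors $\langle(x-y)/h\rangle^{-N}$ that absorb the exponential weight $e^{\Phi(x,y)/h}$ to any desired order, leading to Schur-type kernel bounds independent of $h$. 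The boundedness then follows from Schur's lemma, or equivalently from Calder\'on--Vaillancourt applied to the shifted symbol.
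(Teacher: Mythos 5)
Your outline of the asymptotic expansion (\ref{pseudoformel}) is in the right spirit: the fixed-$x$ contour shift producing the phase $\Phi(x,y)/h$, the exchange of $(y-x)^\alpha$ for $\xi$-derivatives, and the Fa\`a di Bruno count $\partial_y^\alpha e^{\Phi/h}\vert_{y=x}={\mathcal O}(h^{-|\alpha|/2})$ leading to a remainder of size $h^{N/2}$ are all sound. (The paper itself omits the proof, citing \cite{Ma1}; the same conjugation technique does reappear, however, in the paper's proof of Proposition~\ref{poidssing}, and there a different deformation is used.)

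The $L^2$-boundedness part of your argument has a genuine gap. After the shift $\xi\mapsto\xi+i\nabla\widetilde\varphi(x)$, the kernel carries the factor $e^{\Phi(x,y)/h}$. Since $\widetilde\varphi$ is only bounded, $\Phi(x,y)=\widetilde\varphi(x)-\widetilde\varphi(y)-(x-y)\nabla\widetilde\varphi(x)$ is an ${\mathcal O}(1)$ quantity minus a term growing linearly in $|x-y|$, so $e^{\Phi(x,y)/h}$ can grow like $e^{c|x-y|/h}$ in directions with $(x-y)\cdot\nabla\widetilde\varphi(x)<0$. The factors $\langle(x-y)/h\rangle^{-N}$ produced by integrating by parts in $\xi$ give only polynomial decay in $|x-y|/h$ and cannot absorb an exponential growth in the same quantity, so the Schur bound you invoke does not close. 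The correct move --- and it is exactly what the paper does in proving the analogous Proposition~\ref{poidssing} --- is to make the contour shift depend on both endpoints: $\xi\mapsto\xi+i\Psi(x,y)$ with $\Psi(x,y):=\int_0^1\nabla\widetilde\varphi\bigl((1-t)x+ty\bigr)\,dt$. Since $(x-y)\Psi(x,y)=\widetilde\varphi(x)-\widetilde\varphi(y)$, this shift cancels the exponential weight identically, turning the conjugated operator into a pseudodifferential operator with symbol $p(x,\xi+i\Psi(x,y))$, to which Calder\'on--Vaillancourt applies directly. One then recovers (\ref{pseudoformel}) either by Taylor-expanding $p(x,\xi+i\Psi(x,y))$ about $\Psi(x,x)=\nabla\widetilde\varphi(x)$, using $\Phi(x,y)=(x-y)\bigl(\Psi(x,y)-\nabla\widetilde\varphi(x)\bigr)$, or by restricting your fixed-$x$ expansion to a small neighborhood of the diagonal and controlling the off-diagonal tail via the $\Psi$-shift representation.
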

The proof of this lemma is rather standard
and we omit it (see
 e.g. \cite{Ma1}).
 
 Then, the construction can be performed by using the formal series given in (\ref{pseudoformel}) in order to define the formal action of $R(x,hD_x)$ on $w_1$. Afterwards, these constructions can be continued along the
integral curves of the  vector field
$\nabla_\xi
p_2(x,i\nabla\varphi (x))D_x=
2\nabla\varphi (x).\nabla_x$ (that is, along the minimal geodesic of $d$ starting at $0$),
as long as
$p_1 (x,i\nabla 
\varphi (x))$ does not vanish (that is, as long as these minimal geodesics stay inside the cirque $\Omega$). In that way, after resummation and multiplication by a cut-off function, we obtain a function $w_1$  of the form
(\ref{BKWinit}), that satisfies,
\be
\label{eqcirque}
Pw_1-\rho_1 w_1={\mathcal O}(h^\infty e^{-\varphi /h}),
\ee
locally uniformly in $\bigcup \gamma$, where the union is taken over all the minimal $d$-geodesics $\gamma$ coming from the well 0 and staying in $\Omega$. In particular, (\ref{eqcirque}) is satisfied in a neighborhood ${\mathcal N}_1$ of $\gamma^{(1)}$. 

\subsection{At the boundary of the cirque}
Now, we study the situation  near the point  $x^{(1)}\in \partial\Omega$. By  \cite{Pe}, Theorem 2.14, we know that there exists a neighborhood ${\mathcal V}_1$ of $x^{(1)}$ and two positive functions $\varphi_1, \varphi_2\in C^\infty ( {\mathcal V}_1)$ such that,
\begin{eqnarray*}
&&\varphi_1 = \varphi \mbox{ on } {\mathcal V}_1\cap\{ V_1 <V_2\};\\
&&\varphi_2= \varphi \mbox{ on } {\mathcal V}_1\cap\{ V_2 <V_1\};\\
&&|\nabla\varphi_j(x)|^2 = V_j(x) \quad (j=1,2);\\
&&\varphi_1 =\varphi_2 \mbox{ and } \nabla\varphi_1 =\nabla\varphi_2 \mbox{ on } {\mathcal V}_1\cap\partial\Omega;\\
&&\varphi_2(x) -\varphi_1(x) \sim d(x, \partial\Omega)^2.
\end{eqnarray*}
Actually, $\varphi_2$ is nothing but $d_2(0,x)$, where $d_2$ is the Agmon distance associated with the metric $V_2(x)dx^2$, and $\varphi_1$ is the phase function of the Lagrangian manifold obtained as the flux-out of $\{(x,\nabla\varphi_2(x))\, ;\, x\in {\mathcal V}_1\cap\partial\Omega \}$ under the Hamilton flow of $q_1(x,\xi):=\xi^2 - V_1(x)$.
\vskip 0.3cm
Then, we set,
\be
\psi := \frac12\left(\varphi_1+
\varphi_2\right) ,
\ee
and we consider 
the smooth function $z(x)$ defined for $x\in {\mathcal V}_1$ by,
\begin{eqnarray}
\label{z}
&& z(x)^2=2\left(\varphi_2(x)-
\varphi_1(x)\right) \nonumber\\
&& z(x)<0\mbox{ on } {\mathcal V}_1\cap\{ V_2 <V_1\}.
\end{eqnarray}

\vskip 0.2cm
In order to extend the WKB construction
(\ref{BKWinit}) accross $\partial\Omega$ near $x^{(1)}$, we still follow \cite{Pe} and try a formal ansatz of the form,
\be
\label{BKWcross}
w_2(x;h)=\sum_{k\geq 0}h^k\left( \alpha_k(x,h)
Y_{k,0}\left(\frac{z(x)}{\sqrt h}\right)
+{\sqrt h}\beta_k(x,h)Y_{k,1}
\left(\frac{z(x)}{\sqrt h}\right)\right)
e^{-\psi (x)/h},
\ee
where,
\be
\alpha_k(x,h)=\left(\begin{array}{clcr} 
h\alpha_{k,1}(x,h) \\ 
\alpha_{k,2}(x,h)\end{array}\right)
 \quad ;\quad
\beta_k(x,h)=\left(\begin{array}{clcr} 
\beta_{k,1}(x,h) \\ 
h\beta_{k,2}(x,h)\end{array}\right),
\ee
$\alpha_{k,j}$ and $\beta_{k,j}$ are formal
symbols of the form,
\be
\label{symbcross}
\sum_{l\geq 0}\sum_{m=0}^l h^l({\rm ln}h)^m
\gamma^{l,m}(x)
\ee
(with $\gamma^{l,m}$ smooth in ${\mathbf\omega}_1$), and for any $k\geq 0$ and $\varepsilon \in\C$, the function $Y_{k,\varepsilon}$ is the so-called Weber function, defined by,
\be
Y_{k,\varepsilon}(z)=\partial_\varepsilon^k
Y_{0,\varepsilon}(z)
\ee
where $Y_{0,\varepsilon}$ is the unique 
entire function with respect to $\varepsilon$ and $z$,
solution of the Weber equation,
\be
Y_{0,\varepsilon}''+(\frac12 -\varepsilon
-\frac{z^2}4)Y_{0,\varepsilon}=0
\ee
such that,  for $\varepsilon >0$, one has,
\be
Y_{0,\varepsilon}(z)\sim
\frac{\sqrt {2\pi}}{\Gamma (\varepsilon )}
e^{z^2/4}z^{\varepsilon -1}
\qquad (z\rightarrow +\infty ).
\ee
As it is shown in \cite{Pe}, Theorem 4.3, a resummation of
(\ref{BKWcross}) is possible up to an error
of order ${\mathcal O}(h^\infty e^{-  \varphi
/h})$ . 
\vskip 0.3cm
Now, since $\varphi$ is not $C^\infty$ (but only $C^1$) near $x^{(1)}$, we need to
find some generalization of lemma \ref{FPDO}. For
technical reasons, in the rest of this section we prefer
to work with the {\it right} semiclassical quantization
of symbols, that we denote by ${\rm Op}^R_h$.
\vskip 0.3cm
For $\nu_0>0$ and $g\in C^\infty (\R^{2n}\; ;
\R_+)$,
we denote by
$S_{\nu_0}(g(x,\xi ))$ the set of (possibly $h$-dependent)
functions $p\in C^\infty (\R^{2n})$ that
extend to holomorphic functions with respect to
$\xi$  in the strip,
$$
{\mathcal A}_{\nu_0}:=\{(x,\xi )\in  \R^n\times\C^n\; ;\; 
\vert\Im\xi\vert < \nu_0\},
$$
and such that,
 for all $\alpha\in\N^{2n}$, one has,
\be
\partial^\alpha p(x,\xi )
={\mathcal O}(g(x,{\rm Re}\xi )),
\ee
uniformly with respect to $(x,\xi )\in
{\mathcal A}_{\nu_0}$ and $h>0$ small enough. We also denote by $S_0(g)$ the analogous  space of smooth symbols obtained by substituting $\R^{2n}$ to ${\mathcal A}_{\nu_0}$, and ``smooth'' to ''holomorphic''.

We first show,
\begin{lemma}\sl
\label{FPDO'}
Let $\nu_0>0$,  $m\in\R$, $p=p(x,\xi )\in
S_{\nu_0}(\la
\xi\ra^m)$, and let
$\phi=\phi (x)$ be a real bounded Lipschitz
function on
$\R^n$ such that
$$
\Vert\nabla \phi (x)\Vert_{L^\infty}< \nu_0;\\
$$
Let also $a=a(x\; ;h)\in C^\infty (\R^n )$
be such that, for all $\alpha\in\N^n$,
$$
(hD_x)^\alpha a(x\; ;h)
={\mathcal O}(e^{-\phi (x)/h}),
$$
uniformly with respect to $h$ small enough
and  $x\in\R^n$. Then,
$$\left( {\rm Op}^R_h(p)a\right) (x\; ;h)
={\mathcal O}(e^{-\phi (x)/h})$$
uniformly with respect to $h$ small enough
and  $x\in\R^n$.
\end{lemma}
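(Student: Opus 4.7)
The proof adapts the standard strategy for exponentially weighted estimates on semiclassical pseudodifferential operators with analytic symbols, to the case of a Lipschitz (rather than smooth) weight $\phi$. I start from the oscillatory-integral representation
$$
({\rm Op}_h^R(p)a)(x;h) = \frac{1}{(2\pi h)^n}\iint e^{i(x-y)\xi/h}\, p(y,\xi)\, a(y)\,dy\,d\xi,
$$
and aim, for each pair $(x,y)$ with $x\neq y$, to deform the $\xi$-contour into the complex domain by a purely imaginary vector $i\nu(x-y)/|x-y|$ (smoothly regularised near the diagonal), where $\Vert\nabla\phi\Vert_{L^\infty}<\nu<\nu_0$. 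This deformation is permitted by the holomorphy of $p$ in $\xi$ throughout $|\Im\xi|<\nu_0$, and it produces an extra factor $e^{-\nu|x-y|/h}$ which, combined with $|a(y)|\leq e^{-\phi(y)/h}$ and the Lipschitz bound $\phi(y)\geq \phi(x)-\Vert\nabla\phi\Vert_{L^\infty}|x-y|$, is designed to beat the a priori growth $e^{\Vert\nabla\phi\Vert_{L^\infty}|x-y|/h}$ and to yield the desired decay $e^{-\phi(x)/h}$.

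After the contour shift, absolute convergence of the resulting double integral is ensured by two sets of integrations by parts. In $\xi$, the identity $(1-h^2\Delta_\xi)e^{i(x-y)\xi/h}=(1+|x-y|^2)e^{i(x-y)\xi/h}$ is iterated $M$ times to produce factors $(1+|x-y|^2)^{-M}$, while the shifted symbol estimates on $p$ stay of order $\langle\xi\rangle^m$. In $y$, I apply $N$ times the transpose of $L=-\xi\cdot hD_y/|\xi|^2$, an operator that fixes $e^{i(x-y)\xi/h}$ on the region $|\xi|\geq 1$; this provides factors $|\xi|^{-N}$ and consumes derivatives of both $p$ (still $\mathcal O(\langle\xi\rangle^m)$) and of $a$ (still $\mathcal O(e^{-\phi(y)/h})$ by hypothesis). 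Taking $M$ and $N$ large enough (depending on $m$ and $n$) and collecting the estimates, the integration in $y$ then yields
$$
|({\rm Op}_h^R(p)a)(x;h)|\leq C\, e^{-\phi(x)/h},
$$
uniformly in $x\in\R^n$ and in $h$ small.

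The main technical delicacy I anticipate is twofold. First, the shift direction $(x-y)/|x-y|$ is singular on the diagonal, so one has to smooth it out near $\{x=y\}$ (for instance by a smooth interpolation that coincides with $(x-y)/|x-y|$ for $|x-y|\geq \varepsilon$ and vanishes for $|x-y|\leq \varepsilon/2$), and check that the final bound does not depend on $\varepsilon$, the contribution of $\{|x-y|\leq\varepsilon\}$ being directly controlled by the symbol bounds on $p$ together with the $\xi$-IBP. Second, since $\phi$ is only Lipschitz, the more customary shift by $i\nabla\phi(x)$ (as in Lemma \ref{FPDO}) is not available; the present shift, which depends on the difference $x-y$ rather than on the pointwise gradient of $\phi$, is precisely tailored to bypass this lack of regularity.
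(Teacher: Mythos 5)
Your proposal rests on the same key idea as the paper: a contour shift $\xi\mapsto\xi+i\nu_1(x-y)/|x-y|$ (with $\Vert\nabla\phi\Vert_{L^\infty}<\nu_1<\nu_0$), yielding the gain $e^{-\nu_1|x-y|/h}$ that beats the Lipschitz growth of $\phi$; both you and the paper note that this $x$--$y$ dependent shift is precisely what one uses in place of the unavailable $i\nabla\phi(x)$. Where you diverge is in the treatment of the high symbol order $m\geq -n$. You propose integration by parts in $\xi$ (to generate $(1+|x-y|^2)^{-M}$, giving $y$-integrability) and in $y$ with the operator $L=-\xi\cdot hD_y/|\xi|^2$ (to generate $|\xi|^{-N}$, giving $\xi$-integrability). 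The paper instead sandwiches: ${\rm Op}_h^R(p)={\rm Op}_h^R(p)(2\nu_0-h^2\Delta_x)^{-k}(2\nu_0-h^2\Delta_x)^k$, observing that ${\rm Op}_h^R(p)(2\nu_0-h^2\Delta_x)^{-k}$ has symbol in $S_{\nu_0}(\langle\xi\rangle^{m-2k})\subset S_{\nu_0}(\langle\xi\rangle^{-n-1})$ for $k$ large, while $(2\nu_0-h^2\Delta_x)^k a$ still satisfies the same exponential bounds, thereby reducing everything to the already-settled low-order case with no IBP at all. Your route is workable but carries an extra delicacy you only half-address: after the shift, $y$-derivatives hit the shift direction $(x-y)/|x-y|$, producing factors of order $|x-y|^{-1}$; away from the diagonal these are absorbed by $e^{-\delta|x-y|/h}$, but near the diagonal you must either suppress the shift (and then the $\xi$-integral still diverges without the $y$-IBP, which is however clean there since the shift is absent) or keep careful track of the blow-up against the regularization scale $\varepsilon$. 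The paper's algebraic factorization sidesteps this bookkeeping entirely, which is why it is the tidier choice, but your argument reaches the same conclusion.
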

\begin{proof} - We write,
\be
e^{\phi (x)/h}{\rm Op}^R_h(p)a (x\; ;h)
=\frac1{(2\pi h)^n}\int e^{i(x-y)\xi /h
+  \phi (x) /h}
p(y,\xi )a(y\; ;h)dyd\xi,
\ee
and, following \cite{Sj},
we make the change of contour
of integration in $\xi$,
\be
\R^n\ni\xi\mapsto \xi+i\nu_1\frac{x-y}{\vert
x-y\vert} ,
\ee
where $\Vert\nabla \phi (x)\Vert_{L^\infty} <\nu_1<\nu_0$. We obtain,
\begin{eqnarray}
\label{e1}
&& e^{\phi (x)/h}{\rm Op}^R_h(p)a (x\; ;h)\\
&&
=\frac1{(2\pi h)^n}\int e^{i(x-y)\xi /h
}
p\left( y, \xi+i\nu_1\frac{x-y}{\vert
x-y\vert} \right)\theta (x,y\; ;h)dyd\xi,\nonumber
\end{eqnarray}
with,
$$
\theta (x,y\; ;h)=a(y\; ;h)e^{\left( 
\phi (x) -\nu_1\vert x-y\vert\right)/h}
={\mathcal O}(e^{\phi (x) -\phi (y)-\nu_1\vert x-y\vert /h}).
$$
Therefore,
\be
\label{e2}
\theta (x,y\; ;h)=
{\mathcal O}(e^{-\delta\vert x-y\vert /h}),
\ee
with $\delta =
\nu_1-\Vert\nabla\phi\Vert_{L^\infty}>0$.

Then, in the case $m<-n$, the result follows
immediately from (\ref{e1})-(\ref{e2}) (and standard estimates on
oscillatory integrals). In
the general case, we just write,
\be
{\rm Op}^R_h(p)={\rm Op}^R_h(p)(2\nu_0 -
h^2\Delta_x)^{-k} (2\nu_0 - h^2\Delta_x)^k,
\ee
with $k$ integer large enough (e.g. $k=1+\vert
[m]\vert+n$), and, since
${\rm Op}^R_h(p)(2\nu_0 -
h^2\Delta_x)^{-k}$ is a
semiclassical pseudodifferential operators with
($h$-dependent) symbol in
$S_{\nu_0}(\la\xi\ra^{m-2k})$ $\subset
S_{\nu_0}(\la\xi\ra^{-n-1})$, the result follows by applying
the previous case with $a$ replaced by $(2\nu_0 -
h^2\Delta_x)^ka$.
\end{proof}
\vskip 0.5cm
Now, in view of defining a formal
pseudodifferential calculus acting on
expressions such as (\ref{BKWcross}),  for $j=1,...,n$ and $x\in{\mathbf\omega}_1$, we set,
\be
A_j(x):=
\left(\begin{array}{clcr} 
\frac{\partial\varphi_2(x)}{\partial
x_j} & 0
\\  0 &
\frac{\partial\varphi_1(x)}{\partial
x_j}
\end{array}\right) \in {\mathcal M}_2(\R ) .
\ee
Then, for any $k\geq 0$, we have (see \cite{Pe}
formula (4.18)),
\begin{eqnarray}
\label{Weber}
&& \left( hD_{x_j}-iA_j(x)\right)
\left(\begin{array}{clcr} 
Y_{k,0}\left(\frac{z(x)}{\sqrt h}\right)
\\
Y_{k,1}
\left(\frac{z(x)}{\sqrt h}\right)
\end{array}\right)
e^{-\psi (x)/h}\\
&&\hskip 3cm  =\frac{\sqrt h}i\left( \partial_{x_j}z(x)\right)
\left(\begin{array}{clcr} 
kY_{k-1,1}\left(\frac{z(x)}{\sqrt h}\right)
\\
Y_{k,0}
\left(\frac{z(x)}{\sqrt h}\right)
\end{array}\right)
e^{-\psi (x)/h} \nonumber.
\end{eqnarray}
If $a$ and $b$ are (scalar) formal
symbols of the type (\ref{symbcross}), and $k\in\N$, we
set,
\be
I_k(a,b)(x\; ;\; h)= a(x\; ;\;  h)Y_{k,0}\left(
\frac{z(x)}{\sqrt h}\right) +b(x\; ;\;  h)Y_{k,1}\left(
\frac{z(x)}{\sqrt h}\right),
\ee 
and we make act any diagonal matrix-valued function
$B(x)={\rm diag}(B_1(x),B_2(x))$ $\in {\mathcal M}_2(\R )$ 
on
$I_k(a,b)(x\; ;h)$ by setting,
\be
 B(x)I_k(a,b)(x\; ;h):=I_k(B_1a,B_2b)(x\; ;h).
\ee
(this is possible since the $Y_{k,0}$ and $Y_{k,1}$ are
linearly independent). Then, using (\ref{Weber}), 
for all $j=1,...,n$ we have,
\begin{eqnarray}
\label{action}
&& \!\left( hD_{x_j}-iA_j(x)\right)\left(
I_k(a,b)e^{-\psi /h}\right)\\
&&\quad =\left(
I_k(hD_{x_j}a +{\sqrt {h}}bD_{x_j}z\; ,\;
hD_{x_j}b)
+I_{k-1}(0\; ,\; k{\sqrt {h}}aD_{x_j}z)
\right) e^{-\psi /h} .\nonumber
\end{eqnarray}
\par For  $A(x)=\left( A_1(x),...,A_n(x)\right)
\in\left( {\mathcal M}_2(\R )\right)^n$ and $\alpha \in \N^n$, we also use the notation,
\be
 A(x)^\alpha = A_1(x)^{\alpha_1}...
A_n(x)^{\alpha_n}
\in {\mathcal M}_2(\R ),
\ee
and we identify any $\xi\in\R^n$ with
$(\xi_1{\mathcal I}_2,...,\xi_n{\mathcal I}_2
)\in
\left( {\mathcal M}_2(\R )\right)^n$.

Then, we have,
\begin{lemma}\sl
\label{Reste}
Let $\nu_0> \sup_{x\in{\mathbf\omega}_1}\min (\sqrt{V_1(x)}, \sqrt{V_2(x)})$ and $m\in\R$. Then, for any $B=B(x,\xi )={\rm
diag}(B_1(x,\xi ),B_2(x,\xi ))\in
{\mathcal M}_2\left( S_{\nu_0}(\la
\xi\ra^m)\right)$,
$k\geq 0$, $a$ and $b$ in
$C_0^\infty ({\mathbf\omega}_1 )$, and
$\alpha\in\N^n$, one has,
\be
{\rm Op}^R_h\left( B(x,\xi )\left( \xi - iA (x)
\right)^\alpha\right) 
\left( I_k(a,b)
 e^{-\psi (x)/h}\right)
={\mathcal O}\left( \vert{\rm ln}h\vert^k
h^{\vert\alpha\vert /2} e^{- \varphi (x)/h}\right),
\ee
where the estimates holds uniformly for $h$ small enough and $x\in\R^n$.
\end{lemma}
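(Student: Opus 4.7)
The plan is to prove the estimate by induction on $|\alpha|$, combining two ingredients: the Weber-algebra identity~(\ref{action}), which yields a gain of $h^{1/2}$ each time a factor $hD_{x_j}-iA_j(x)$ acts on an expression of the form $I_k(a,b)e^{-\psi/h}$, and the Agmon-type Lemma~\ref{FPDO'}, which propagates exponential decay through any remaining pseudodifferential factor.

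The first step is to convert the symbolic factor $(\xi-iA(x))^\alpha$ into genuine compositions of operators $(hD_{x_j}-iA_j(x))$ acting on the argument. Integration by parts in $y$ in the definition of $\mathrm{Op}^R_h$ (using $\xi_j e^{i(x-y)\xi/h}=-(h/i)\partial_{y_j}e^{i(x-y)\xi/h}$) gives, for any symbol $q$,
\[
\mathrm{Op}^R_h\bigl(q(x,\xi)(\xi_j-iA_j(x))\bigr)=\mathrm{Op}^R_h(q)\circ(hD_{x_j}-iA_j(x))+h\,\mathrm{Op}^R_h\bigl({\textstyle\frac1i}\partial_{y_j}q\bigr),
\]
and iterating this identity $|\alpha|$ times yields a finite decomposition
\[
\mathrm{Op}^R_h\bigl(B(\xi-iA)^\alpha\bigr)=\sum_{0\leq|\beta|\leq|\alpha|}h^{|\alpha|-|\beta|}\,\mathrm{Op}^R_h(\widetilde B_\beta)\circ\prod_{\ell=1}^{|\beta|}\bigl(hD_{x_{j_\ell(\beta)}}-iA_{j_\ell(\beta)}(x)\bigr),
\]
with each $\widetilde B_\beta\in S_{\nu_0}(\jap{\xi}^{m+|\beta|})$ inheriting the holomorphy of $B$.

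Next I would apply~(\ref{action}) iteratively to $\prod_\ell(hD_{x_{j_\ell}}-iA_{j_\ell}(x))$ acting on $I_k(a,b)e^{-\psi/h}$. Each application produces either a term of prefactor $h$ (from $hD_{x_j}a$ or $hD_{x_j}b$) or a term of prefactor $h^{1/2}$ (from $h^{1/2}bD_{x_j}z$ or $kh^{1/2}aD_{x_j}z$), and the Weber index can only decrease. A straightforward induction therefore yields a finite sum of terms $h^{|\beta|/2}I_{k'}(\widetilde a,\widetilde b)e^{-\psi/h}$ with $0\leq k'\leq k$ and $\widetilde a,\widetilde b$ of type~(\ref{symbcross}). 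By the Weber asymptotics used in~\cite{Pe}, this is $\mathcal{O}\bigl(h^{|\beta|/2}(\ln\frac1h)^k e^{-\varphi/h}\bigr)$ uniformly on $\R^n$, and the same bound holds for every plain derivative $(hD_x)^\gamma$ (write $hD_{x_j}=(hD_{x_j}-iA_j(x))+iA_j(x)$ and use the boundedness of $A$).

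Finally, the hypothesis $\nu_0>\sup_{{\mathbf\omega}_1}\sqrt{\min(V_1,V_2)}$, together with the eikonal inequality $|\nabla\varphi|\leq\sqrt{\min(V_1,V_2)}$ on~${\mathbf\omega}_1$, lets us pick a globally Lipschitz extension $\phi$ of $\varphi|_{\overline{\mathrm{supp}(a,b)}}$ with $\|\nabla\phi\|_{L^\infty(\R^n)}<\nu_0$. Lemma~\ref{FPDO'} applied with this $\phi$ shows that each $\mathrm{Op}^R_h(\widetilde B_\beta)$ preserves the $e^{-\varphi/h}$-decay on the support of $a,b$, and the standard pseudolocality of $\mathrm{Op}^R_h(\widetilde B_\beta)$ gives rapid decay away from that support. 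Collecting prefactors, every term in the decomposition of Step~1 is bounded by $\mathcal{O}\bigl(h^{|\alpha|-|\beta|/2}(\ln\frac1h)^k e^{-\varphi/h}\bigr)\subset\mathcal{O}\bigl(h^{|\alpha|/2}(\ln\frac1h)^k e^{-\varphi/h}\bigr)$ since $|\beta|\leq|\alpha|$, and the lemma follows by summation. The main technical point is the justification of the identity in Step~1 with a uniform remainder in $\mathcal{A}_{\nu_0}$; this is where the $\xi$-analyticity of $B$ and the compact support of $a,b$ in~${\mathbf\omega}_1$ are both essential.
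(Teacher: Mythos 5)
Your proof is correct and relies on precisely the same ingredients as the paper's: integration by parts in the $y$-variable of the right quantization, the Weber recursion~(\ref{action}), Pettersson's bound $(hD_x)^\beta\bigl(Y_{k,j}(z/\sqrt h)e^{-\psi/h}\bigr)=\mathcal{O}(\vert\ln h\vert^k e^{-\varphi/h})$, and Lemma~\ref{FPDO'}. The difference is organizational: the paper argues by induction on $\vert\alpha\vert$, performing a single integration by parts at each inductive step and feeding the resulting terms back into the induction hypothesis, whereas you perform all the integrations by parts up front and only then estimate. Your version is arguably the cleaner one, because it fully separates the factor $\prod(hD_{x_j}-iA_j)$ (to which~(\ref{action}) and the Weber asymptotics apply) from the residual pure pseudodifferential operator $\mathrm{Op}^R_h(\widetilde B_\beta)$ (to which Lemma~\ref{FPDO'} applies), so no ``mixed'' term ever needs to be estimated. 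Two minor bookkeeping slips, neither of which affects the conclusion: first, once all powers of $(\xi-iA)$ have been peeled off, $\widetilde B_\beta$ lies in $S_{\nu_0}(\langle\xi\rangle^m)$ rather than $S_{\nu_0}(\langle\xi\rangle^{m+\vert\beta\vert})$, which is harmless since Lemma~\ref{FPDO'} is indifferent to $m$; second, the prefactor coming from the remainders is $h^r$ where $r$ is the number of remainder steps along a branch, and $r$ need not equal $\vert\alpha\vert-\vert\beta\vert$, because a derivative $\partial_{y_j}$ in a remainder can strike a factor $\xi_\ell-iA_\ell(y)$ and absorb two powers of $(\xi-iA)$ against a single $h$. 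However, one always has $r\geq(\vert\alpha\vert-\vert\beta\vert)/2$, hence $r+\vert\beta\vert/2\geq\vert\alpha\vert/2$, so the claimed total exponent $\vert\alpha\vert/2$ survives.
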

\begin{proof} We prove it by induction on
$\vert\alpha\vert$, being careful  to the fact that we
do not have at disposal a symbolic calculus similar to
that of the usual pseudodifferential operators
(because of the special kind of action of the diagonal
matrices on $I_k(a,b)$, which actually does not commute
with the oscillatory integrations). We first notice that,
by \cite{Pe}-lemma 4.6, for
$\beta\in\N^n$ and
$j\in\{0,1\}$, one has,
\be
(hD_x)^\beta\left( 
Y_{k,j}\left(\frac{z(x)}{\sqrt h}\right)
e^{-\psi (x)/h}\right)
={\mathcal O}\left( \vert{\rm ln}h\vert^ke^{-
\varphi(x)/h}\right) .
\ee
As a consequence, the result for $\alpha = 0$
follows directly from Lemma \ref{FPDO'}. 

Now, assume it is true
for
$\vert\alpha\vert
\leq N$ ($N\in \N$ fixed arbitrarily), and let
$\gamma\in\N^n$,
$\vert\gamma\vert =1$. Then, for
$\vert\alpha\vert \leq N$, we have,
\begin{eqnarray}
&& {\rm Op}^R_h\left( B(x,\xi )\left( \xi - iA (x)
\right)^{\alpha +\gamma}\right)I_k(a,b)e^{-\psi (x)
/h}\\ 
&& = \frac1{(2\pi h)^n}\int e^{i(x-y)\xi
/h}F_\alpha (y,\xi )\left( \xi - iA(y)
\right)^\gamma I_k(a,b)(y)e^{-\psi (y)/h}dyd\xi,\nonumber 
\end{eqnarray}
with $F_\alpha (y,\xi )=B(y,\xi )\left( \xi - iA(y)
\right)^\alpha$. Now, assuming without loss of
generality that $\gamma =(1,0,...,0)$, and using the
fact that,
$$
\xi_1e^{i(x-y)\xi
/h}=-hD_{y_1}(e^{i(x-y)\xi
/h}),
$$
 we obtain (denoting by $F_{\alpha ,1}$ and
$F_{\alpha ,2}$ the two diagonal coefficients of
$F_\alpha$),
\vskip 0.5cm\noindent
$ {\rm Op}^R_h\left( B(x,\xi )\left( \xi - iA (x)
\right)^{\alpha +\gamma}\right)I_k(a,b)e^{-\psi (x)
/h}$
\begin{eqnarray*}= \frac1{(2\pi h)^n}\int e^{i(x-y)\xi
/h}(hD_{y_1}-iA_1(y))I_k(F_{\alpha ,1}(y,\xi )a(y),
F_{\alpha ,2}(y,\xi )b(y))\\
\times e^{-\psi (y)/h}dyd\xi,
\end{eqnarray*}
and therefore, by (\ref{action}),
\vskip 0.5cm\noindent
${\rm Op}^R_h\left( B(x,\xi )\left( \xi - iA (x)
\right)^{\alpha +\gamma}\right)I_k(a,b)e^{-\psi (x)
/h}$
\begin{eqnarray*}
&& = \frac1{(2\pi h)^n}\int e^{i(x-y)\xi
/h}\left[ I_k(hD_{y_1}F_{\alpha
,1}a+{\sqrt h}F_{\alpha ,2}bD_{y_1}z,hD_{y_1}F_{\alpha
,2}b)\right. \\
&& \left. \hskip 4.5cm +I_{k-1}(0,k{\sqrt h}F_{\alpha ,1}
aD_{y_1}z)\right]e^{-\psi (y)/h} dyd\xi\\
&& = h{\rm Op}^R_h(F_\alpha
)I_k(D_{y_1}a,D_{y_1}b)e^{-\psi /h} +h{\rm
Op}^R_h(D_{y_1}F_\alpha )I_k(a,b)e^{-\psi  /h}\\ 
&&\hskip
3.5cm +{\sqrt h}{\rm Op}^R_h({\rm diag}(F_{\alpha
,2},0))I_k(bD_{y_1}z,0)e^{-\psi  /h}\\
&&\hskip
3.5cm  +{\sqrt h}{\rm Op}^R_h({\rm
diag}(0,F_{\alpha ,1}))I_{k-1}(0,kaD_{y_1}z)e^{-\psi /h}.
\end{eqnarray*}
Then, applying the induction hypothesis (and using
the fact that $D_{y_1}F_\alpha$ is a sum of terms of the
type $B'(y,\xi)(\eta -iA(y))^\beta$ with
$\vert\beta\vert =\vert\alpha\vert -1$) this gives,
\begin{eqnarray*}
&&{\rm Op}^R_h\left( B(x,\xi )\left( \xi - iA (x)
\right)^{\alpha +\gamma}\right)I_k(a,b)e^{-\psi (x)
/h}\\
&& ={\mathcal O}\left(\vert {\rm
ln}h\vert^kh^{1+\frac{\vert\alpha\vert}2}+\vert {\rm
ln}h\vert^kh^{1+\frac{\vert\alpha\vert
-1}2} +\vert {\rm
ln}h\vert^kh^{\frac{1+\vert\alpha\vert}2}+\vert {\rm
ln}h\vert^{k-1}h^{\frac{1+\vert\alpha\vert}2}\right)
e^{-\varphi /h}\\
&& ={\mathcal O}\left(\vert {\rm
ln}h\vert^kh^{\frac{1+\vert\alpha\vert}2}\right) e^{-\varphi
/h}
\end{eqnarray*}
and the proof is complete.
\end{proof}
\vskip 0.5cm

Now, for any smooth function $f$ on $\R^n$, we set,
\be
f(A (x))={\rm diag}\left(
f(\frac{\partial\varphi_2(x)}{\partial x}),
f(\frac{\partial\varphi_1(x)}{\partial x})\right)
\in C^\infty (\R^n ,{\mathcal M}_2(\C )).
\ee
Then, for any $p\in
S_{\nu_0}(\la\xi\ra^m)$ and  
for any
$N\geq 1$, Taylor's formula gives,
\begin{eqnarray}
p(x,\xi ){\bf I}_2=\sum_{\vert\alpha\vert\leq N}
\frac1{\alpha !}
\partial_\xi^\alpha p(x, iA(x))
(\xi -iA (x))^\alpha \nonumber\\
+\sum_{\vert\alpha\vert
=N+1}B_\alpha (x,\xi )
(\xi -iA(x))^\alpha ,
\end{eqnarray}
where ${\bf I}_2$ is the 2$\times$2 identity matrix, and
the $B_\alpha$'s are in
${\mathcal M}_2\left( S_{\nu_0}(\la
\xi\ra^{m})\right)$.
In particular, using Lemma \ref{Reste},
 for any $a$ and $b$ in $C_0^\infty (
{\mathbf\omega}_1 )$, we obtain,
\vskip 0.5cm\noindent
${\rm Op}_h^R\left( p\right) 
\left( I_k(a,b)
 e^{-\psi /h}\right)$
\begin{eqnarray*}
&=&\sum_{\vert\alpha\vert\leq N}
\frac1{\alpha !}
{\rm Op}_h^R\left(\partial_\xi^\alpha p(x, iA(x))
(\xi -iA(x))^\alpha\right)
\left( I_k(a,b)
 e^{-\psi(x)/h}\right)\\
&& \hskip 8cm +{\mathcal O}(h^{N/2}e^{-\varphi/h})\\
&=& \sum_{{\vert\alpha\vert\leq N}\atop
{\beta\leq\alpha}}\frac1{i^{\vert\beta\vert}
\beta !(\alpha
-\beta )!}
{\rm Op}_h^R\left(\partial_\xi^\alpha p(x,iA(x))
A(x)^{\beta}\xi^{\alpha -\beta}\right)
\left( I_k(a,b)
 e^{-\psi(x)/h}\right)\\
&& \hskip 8cm +{\mathcal O}(h^{N/2}e^{-\varphi/h}),
\end{eqnarray*}
and thus, writing down as before the corresponding
oscillatory integral, in the same way we deduce,
\vskip 0.5cm\noindent
${\rm Op}_h^R\left( p\right) 
\left( I_k(a,b)
 e^{-\psi/h}\right)$
\begin{eqnarray}
\label{F1}
&=&\sum_{{\vert\alpha\vert\leq N}\atop
{\beta\leq\alpha}}\frac1{i^{\vert\beta\vert}
\beta !(\alpha
-\beta )!}(hD_x)^{\alpha
-\beta}\left[A(x)^{\beta}\partial_\xi^\alpha p(x,iA(x)) I_k(a,b)
 e^{-\psi /h}\right]\nonumber\\
&& \hskip 7cm +{\mathcal O}(h^{N/2}e^{-\varphi/h}) .
\end{eqnarray}

Now, for $M\in\Z$ and $\Omega\subset\R^n$ open,
we consider the space of sequences of formal
symbols,
\begin{eqnarray*}
S^M({\mathbf\omega}_1 ) := \{ a = (a_k)_{k\in\N}\; ;
a_k (x,h)=\sum_{l=-M}^\infty\sum_{m=0}^l
h^{l}({\rm ln}h)^m\gamma_k^{l,m}(x)\; ;\\
\gamma_k^{l,m}\in C^\infty ({\mathbf\omega}_1 )\} .
\end{eqnarray*}
and, for $a,b\in S^M({\mathbf\omega}_1 )$, we set,
\be
I(a,b)e^{-\psi /h}:=
\sum_{k\geq 0}h^k I_k(a_k,{\sqrt{h}}b_k)
e^{-\psi  /h} .
\ee
Using (\ref{action}), we see that, for $j=1,...,n$, the
action of $\left( hD_{x_j}-iA_j(x)\right)$ on such
formal series satisfies,
\be
\left( hD_{x_j}-iA_j(x)\right) I(a,b)e^{-\psi /h}
=I(L_j(a,b))e^{-\psi /h},
\ee
where $L_j$ is the operator,
\begin{eqnarray}
L_j\; :  S^M\times S^M &&\!\!\! \rightarrow
\quad S^{M-1}\times S^{M-1} \nonumber\\
(a,b) &&\!\!\! \mapsto \quad (\widetilde a^j,\widetilde
b^j)
\nonumber
\end{eqnarray}
defined by,
\begin{eqnarray}
&&
\widetilde a_k^j= hD_{x_j}a_k + hb_kD_{x_j}z\; ;
\nonumber\\
&&
\widetilde b_k^j = hD_{x_j}b_k+ (k+1)ha_{k+1}
D_{x_j}z ,
\end{eqnarray}
($k\in\N$).
In particular, using the notations $L=(L_1,...,L_n)$ and $L^\alpha =L_1^{\alpha_1}...L_n^{\alpha_n}$, for all $\alpha\in\N^n$, we have,
\be
\label{F2}
L^\alpha\;\;{\rm maps\;\;} S^M({\mathbf\omega}_1 )\times S^M({\mathbf\omega}_1)\;\;
{\rm into\;\;} S^{M-\vert\alpha\vert}({\mathbf\omega}_1
)\times S^{M-\vert\alpha\vert}({\mathbf\omega}_1 ) .
\ee
We also make naturally act  any smooth diagonal ${\mathcal
M}_2(\C )$-valued function $B(x)={\rm
diag}(B_1(x),B_2(x))$ on
$S^M\times S^M$ by setting,
\be
B(a,b)=(B_1a,B_2b),
\ee
and we define the formal action of a pseudodifferential
operator with symbol $p\in S_{\nu_0}(\la \xi\ra^m)$, on
expressions of the type $I(a,b)e^{-\psi
/h}$, by the formula,
\begin{eqnarray}
&&\hskip 0.8cm{\rm Op}_h^F (p)\left( I(a,b)e^{-\psi
/h}\right)\\
\label{FPDOcross}
&&=\sum_{{\alpha\in\N^n}\atop
{\beta\leq\alpha}}\frac1{i^{\vert\beta\vert}
\beta !(\alpha
-\beta )!}I\left( (iA(x)+L)^{\alpha 
-\beta}A(x)^{\beta}\partial_\xi^\alpha p(x, iA(x))
(a,b)\right)
 e^{-\psi /h}. \nonumber
\end{eqnarray}
Then, in view of Lemma \ref{Reste} and (\ref{F1}),
 we immediately obtain,
\begin{proposition}\sl
\label{resum}
Let $a,b\in S^M({\mathbf\omega}_1 )$ and denote by $\widetilde I(a,b)e^{-\psi  /h}$
any resummation of $I(a,b)e^{-\psi /h}$ up to
a ${\mathcal O}(h^\infty e^{-\varphi /h})$-error term. Then, for
any $\chi\in C_0^\infty ({\mathbf\omega}_1)$, the
quantity
${\rm Op}_h^R (p)\left(
\chi\widetilde I(a,b)e^{-\psi  /h}\right)$ is a resummation of
$\; {\rm Op}_h^F (p)\left( I(\chi a,\chi b)e^{-\psi
/h}\right)$, up to
a ${\mathcal O}(h^\infty e^{-\varphi /h})$-error term.
\end{proposition}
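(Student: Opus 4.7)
The plan is to reduce the statement to a finite-order comparison by truncating the formal series, applying the Taylor expansion formula~(\ref{F1}) term by term, and then recognizing the binomial rearrangement as the formal operator defined in~(\ref{FPDOcross}). The control of the tails is provided by Lemma~\ref{FPDO'}.

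First I would fix $N$ large and truncate $\widetilde I(a,b)e^{-\psi/h}$ to a partial sum $\widetilde I^{[N]}e^{-\psi/h}$ including only powers $h^{k+l}(\ln h)^m$ with $k+l\leq N$. By the definition of resummation and the polynomial-in-$\ln h$ growth of Weber functions (\cite{Pe}-Lemma~4.6), both $\chi(\widetilde I-\widetilde I^{[N]})e^{-\psi/h}$ and all its $hD_x$-derivatives are $\mathcal O(h^{N/2}e^{-\varphi/h})$. Lemma~\ref{FPDO'} applied with the Lipschitz weight $\varphi-\delta$ for arbitrarily small $\delta>0$ (which is admissible since $|\nabla\varphi|$ is bounded on $\supp\chi$) then yields
$$
{\rm Op}^R_h(p)\bigl(\chi(\widetilde I-\widetilde I^{[N]})e^{-\psi/h}\bigr)=\mathcal O(h^{N/2}e^{-\varphi/h}).
$$
Thus it suffices to analyse ${\rm Op}^R_h(p)(\chi\widetilde I^{[N]}e^{-\psi/h})$ modulo this error.

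Second, I would apply (\ref{F1}) to each single piece $h^kI_k(\chi a_k,\sqrt h\,\chi b_k)e^{-\psi/h}$, choosing the Taylor truncation order $N'$ large enough that the $\mathcal O(h^{N'/2}e^{-\varphi/h})$ remainder is absorbed. This recasts the result as a finite sum of expressions
$$
\frac1{i^{|\beta|}\beta!(\alpha-\beta)!}(hD_x)^{\alpha-\beta}\bigl[A(x)^\beta\partial_\xi^\alpha p(x,iA(x))\,I_k(\chi a_k,\sqrt h\,\chi b_k)e^{-\psi/h}\bigr].
$$
Writing $hD_{x_j}=iA_j(x)+(hD_{x_j}-iA_j(x))$ and iterating~(\ref{action}), each factor $(hD_{x_j}-iA_j)$ contributes, in the $I(\cdot,\cdot)$-formalism, one application of $L_j$. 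Collecting the intermediate multiplications by $iA(x)$ reconstructs precisely the operator $(iA(x)+L)^{\alpha-\beta}A(x)^\beta\partial_\xi^\alpha p(x,iA(x))$ acting on $(\chi a,\chi b)$ wrapped by $I(\cdot,\cdot)e^{-\psi/h}$. By the mapping property~(\ref{F2}), the term indexed by $\alpha$ lies at order $h^{|\alpha|/2}$, so the truncated sum is, up to the above errors, a partial resummation of the formal series~(\ref{FPDOcross}). Letting $N,N'\to\infty$ gives the claimed $\mathcal O(h^\infty e^{-\varphi/h})$ equivalence.

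The main technical obstacle is the bookkeeping in the second step: since $iA(x)$ (a multiplication) and $L_j$ (which embeds $hD_{x_j}$ together with $D_{x_j}z$-terms) do not commute, the non-commutative expansion of $(hD_x)^{\alpha-\beta}\bigl[A^\beta\partial_\xi^\alpha p(x,iA)\cdot\bigr]$ via Leibniz's rule generates commutators of size $\mathcal O(h)$. One has to verify that these reassemble into exactly the non-commutative product $(iA+L)^{\alpha-\beta}A^\beta\partial_\xi^\alpha p(x,iA)$ appearing in (\ref{FPDOcross}), with each commutator contributing an extra half-power of $h$ that is absorbed into a higher-order term as allowed by~(\ref{F2}). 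Because $A$ is diagonal and the action of $L$ preserves the diagonal structure, the combinatorics reduce to the same identity that underlies classical pseudodifferential calculus, transplanted to the Weber-function ansatz.
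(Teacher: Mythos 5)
Your proof is correct and follows essentially the same route the paper takes: the paper's own proof is the single line ``in view of Lemma~\ref{Reste} and (\ref{F1}), we immediately obtain,'' and your three steps (truncate and control the tail, apply the Taylor expansion (\ref{F1}) term by term, and identify the resulting $(hD_x)^{\alpha-\beta}$-expressions with the $(iA+L)^{\alpha-\beta}$ operators via (\ref{action}) and (\ref{F2})) flesh out exactly what ``immediately'' stands for. Two small remarks. First, the weight ``$\varphi-\delta$'' you invoke for Lemma~\ref{FPDO'} does not change the gradient bound, since $\nabla(\varphi-\delta)=\nabla\varphi$; what is actually used is that $|\nabla\varphi|^2=\min(V_1,V_2)<\nu_0^2$ near $\omega_1$, so one takes a globally bounded Lipschitz modification of $\varphi$ that agrees with it on $\supp\chi$. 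Second, the commutator worry you raise at the end is a non-issue: the identification between $(hD_{x_j})^{m_j}$ acting on $I_k(\cdot,\cdot)e^{-\psi/h}$ and $(iA_j+L_j)^{m_j}$ acting on the symbol pair is exact by construction (writing $hD_{x_j}=iA_j+(hD_{x_j}-iA_j)$ and using (\ref{action}) is the \emph{definition} of that non-commutative product, not an approximation of it), so no extra half-powers of $h$ need to be absorbed. Your concluding sentence already says the terms ``reassemble into exactly the non-commutative product,'' so the argument lands correctly despite the misleading framing.
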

In particular, the operator $P$  naturally acts (up to ${\mathcal O}(h^\infty e^{-\varphi /h})$-error terms)
on expressions of the type,
\be
w_2=\left(\begin{array}{clcr} 
I(h\alpha_1,\beta_1) \\ 
I(\alpha_2,h\beta_2)\end{array}\right)
e^{-\psi /h},
\ee
where $\alpha_j=(\alpha_{j,k})_{k\geq 0}$
and $\beta_j=(\beta_{j,k})_{k\geq 0}$ are in
$S^0({\mathbf\omega}_1 )\;$ ($j=1,2$). 
\vskip 0.3cm
Writing down the equation
$\widetilde Pw_2=\rho_1 w_2$,  setting, 
$$
\alpha_{j,k} = \sum_{l\geq 0}\sum_{m=0}^l
h^l({\rm ln}h)^m\alpha_{j,k}^{l,m}(x),
$$
and the
analog formula for $\beta_{j,k}$, and identifying the coefficients of
$h^l({\rm ln}h)^m$ for $0\leq m\leq l\leq 1$, we find (denoting by $
 p=\left(\begin{array}{clcr} 
 p_1 +hr_{1,1} && h r_{1,2} \\ 
h r_{2,1} &&  p_2 +hr_{2,2} \end{array}\right)
$ the right-symbol of $P$),
\begin{eqnarray}
&&
p_1(x,i\nabla\varphi_2)\alpha_{1,0}^{0,0} +
r_{1,2}(x,i\nabla\varphi_2)\alpha_{2,0}^{0,0}
+\left[\frac1i\nabla_\xi
p_1(x,i\nabla\varphi_1)(\nabla
z)\right. \nonumber\\
\label{eq1}&& \hskip 3.5cm \left. +\frac12\la ({\rm
Hess}_\xi
p_1)(x,i\nabla\varphi_1)\nabla z ,\nabla
(\varphi_2 -\varphi_1)\ra\right]
\beta_{1,0}^{0,0}=0;\\
&& \left[\partial_\xi
p_1(x,i\nabla\varphi_1)D_x
-i(\nabla_x.\nabla_\xi
p_1)(x,i\nabla\varphi_1)\right.\nonumber \\
\label{eq2}
&&\hskip 6cm\left. +
r_{1,1}(x,i\nabla\varphi_1)-\rho_1\right]
\beta_{1,0}^{0,0}=0;\\
&&
p_2(x,i\nabla\varphi_1)\beta_{2,0}^{0,0} +
r_{2,1}(x,i\nabla\varphi_1)\beta_{1,0}^{0,0}
+\left[ \frac1i\partial_\xi
p_2(x,i\nabla\varphi_2)(\nabla
z)\right. \nonumber\\
\label{eq3}&&\hskip 3.5cm \left. +\frac12\la ({\rm
Hess}_\xi
p_2)(x,i\nabla\varphi_2)\nabla z ,\nabla
(\varphi_1
-\varphi_2)\ra\right]\alpha_{2,1}^{0,0}=0;\\
&& \left(\partial_\xi
p_2(x,i\nabla\varphi_2)D_x
-i(\nabla_x.\nabla_\xi
p_2)(x,i\nabla\varphi_2)\right.\nonumber\\
\label{eq4}
&&\hskip 6cm \left.
+r_{2,2}(x,i\nabla\varphi_2)-\rho_1\right)
\alpha_{2,0}^{0,0}=0;
\end{eqnarray}
(Here we also have used the fact  that $\rho\sim
\sum_{k\geq 1}h^k\rho_k$ as $h\rightarrow 0$.)
Identifying the other coefficients, one
obtains a series of equations that (in a way
similar to \cite{Pe}-section 4) can be solved in
 ${\mathcal V}_1$ (possibly after having shrunk  it a little bit around $x^{(1)}$), and in such a way
that one also has,
\be
\widetilde w_2-\widetilde w_1={\mathcal O}(h^\infty
e^{-\varphi /h})
\quad {\rm locally\; uniformly\; in}\quad {\mathcal V}_1\cap \{V_2 <V_1\}. 
\ee
where $w_1$ is defined in (\ref{BKWinit}), and
$\widetilde w_1$, $\widetilde w_2$ are resummations of
$w_1$ and $w_2$. Among other things, this implies,
\be
\label{eq5}
\alpha_{2,0}^{0,0}=a_{2,0}
\quad {\rm in}\quad {\mathcal V}_1\cap \{V_2 <V_1\} .
\ee
Moreover, we see on
(\ref{eq2}) and (\ref{eq4}) that
$\beta_{1,0}^{0,0}$ (respectively
$\alpha_{2,0}^{0,0}$) are solutions of a differential
equation of order 1 on each integral curve of
the real vector field
$\nabla\varphi_1(y).\nabla_y$ (respectively
$\nabla\varphi_2(y).\nabla_y$).
In particular, because of the ellipticity
of $a_{2,0}$, we deduce from (\ref{eq4}) and
(\ref{eq5}) that we have,
\be
\label{ellip}
\alpha_{2,0}^{0,0} \quad {\rm never\;
vanishes\; in \;} \; {\mathcal V}_1.
\ee 
Now, Assumption 6 implies that, if $\gamma\in G_0$, then,
\be
 r_{1,2}(x,i\nabla\varphi_2)\not= 0\quad \mbox{ on } {\mathcal V}_1.
 \ee 
Since 
$p_1(y,i\nabla\varphi_2)=
p_1(y,i\nabla\varphi_1)=0$ on ${\mathbf\omega}_1\cap \partial\Omega$, we
deduce from (\ref{eq1}) and (\ref{ellip}) that, if $\gamma\in G_0$, then
$\beta_{1,0}^{0,0}$ does not vanish on ${\mathbf\omega}_1\cap \partial\Omega$. As before, because of
(\ref{eq2}) (and the fact that $R(x,hD_x)$ is formally selfadjoint), this implies,
\be
\label{ellip'}
\mbox{ If } \gamma\in G_0, \mbox{ then, } \beta_{1,0}^{0,0} \quad {\rm never\;
vanishes\; in \;}{\mathcal V}_1 .
\ee 

\subsection{In the island, outside the cirque}
Now, we look at what happens on $\gamma^{(2)}$, and, at first, near $x^{(1)}$.
Using the asymptotics of $Y_{k,\varepsilon}
(z/{\sqrt h})$ given in \cite{Pe}-section 4, one
also finds that, in ${\mathcal V}_1\cap\{V_1<V_2\}$, $w_2$
can be formally identified with,
\be
w_3(x,h)={\sqrt {2\pi
h}}\left(\begin{array}{clcr}  b_1(x,h) \\ 
hb_2(x,h)\end{array}\right) 
e^{-\varphi
(x)/h}
\ee
where $b_1,b_2$ are symbols of the form,
\be
\label{symblog}
b_j(x;h)=\sum_{l\geq 0}
\sum_{m=0}^lh^l({\rm ln}h)^mb_j^{l,m}(x)
\ee
($j=1,2$), with
$b_j^{l,m}\in
C^\infty ({\mathcal V}_1\cap\{V_1<V_2\})$, in the sense that, for any
resummations
$\widetilde w_2$ and $\widetilde w_3$ of $w_2$
and $w_3$, one has,
\be
\widetilde w_2-\widetilde w_3={\mathcal O}(h^\infty 
e^{-\varphi
/h})
\quad {\rm locally\; uniformly\; in}\quad \Omega\cap \Gamma_+ . 
\ee
Moreover, one also has,
\be
b_1^{0,0}=\beta_{1,0}^{0,0}
\ee
which, by (\ref{ellip'}), shows that, when $\gamma\in G_0$, $b_1$ is
elliptic in ${\mathcal V}_1\cap\{V_1<V_2\}$.
\vskip 0.3cm
Since $p_2(x,i\nabla\varphi
(x))\not= 0$  in $\{V_1<V_2\}$, we can
formally solve the equation $
Pw_3 =\rho_1 w_3$, and we see again that
$b_1$ and
$b_2$ can be continued along the integral
curves of $\nabla\varphi$, as long as these curves stay inside $\{V_1<V_2\}$ and $\varphi_1$
does not develop caustics. In particular, they can be continued in a neighborhood ${\mathcal N}_2$ of $\gamma^{(2)}$, and
 the continuation of $b_1$ remains
elliptic in $\Omega_2$.

\vskip 0.4cm
Clearly, the previous steps can be repeated  near $x^{(2)}$, $\gamma^{(3)}$, etc... (in the case $N_\gamma\geq 3$), up to finally reach $\gamma^{(N_\gamma +1)}$, obtaining in that way (after having pasted everything in a standard way by using a partition of unity) a function ${\mathbf w}(x,h)$, smooth on a neighborhood ${\mathcal N}(\gamma)$ of $\gamma$ in $\ddot O$, and satisfying,
$$
(P-\rho_1){\mathbf w} ={\mathcal O}(h^\infty e^{-\varphi /h}),
$$
locally uniformly in ${\mathcal N}(\gamma)$. Moreover, ${\mathcal N}(\gamma)$ can be decomposed into,
$$
{\mathcal N}(\gamma) = {\mathcal N}_1\cup {\mathcal V}_1\cup\dots \cup {\mathcal V}_{N_\gamma}\cup {\mathcal N}_{N_\gamma +1},
$$
where,
for all $j$, ${\mathcal V}_j$ is  a neighborhood  of $x^{(j)}$ and ${\mathcal N}_j$ is a neighborhood  of $\gamma^{(j)}$, in such a way that, in each ${\mathcal N}_j$, $\mathbf w$ admits a WKB asymptotics of the form,
\be
{\mathbf w}(x;h)\sim h^{\frac{j-1}2}\left(\begin{array}{clcr} 
h^{\frac{1-(-1)^j}2}a_1^{(j)}(x,h) \\ 
h^{\frac{1+(-1)^j}2}a_2^{(j)}(x,h)\end{array}\right) e^{-\varphi
(x)/h},
\ee
where $a_1^{(j)}$ and $a_2^{(j)}$ are symbols of the same form as in (\ref{symblog}), and $a_1^{(j)}$ is elliptic if $j$ is even, while $a_2^{(j)}$ is elliptic if $j$ is odd (in particular, $a_1^{(N_\gamma + 1)}$ is elliptic). On the other hand, in each ${\mathcal V}_j$, $\mathbf w$ can be representated by means of the Weber function, in a way similar to that of (\ref{BKWcross}).

\subsection{At and after the boundary of the island}
Let us denote by
$$
x_\gamma\in \gamma \cap \partial\ddot O,
$$
the point of type 1 where $\gamma$ touch the boundary of the island.
When $x\in\gamma\cap \ddot O$ is close enough to $x_\gamma$, we know from the previous subsection that the asymptotic solution $\mathbf w$ is of the form,
\be
{\mathbf w}(x;h)\sim h^{\frac{N_\gamma}2}\left(\begin{array}{clcr} 
b_1(x,h) \\ 
h b_2(x,h)\end{array}\right) e^{-\varphi
(x)/h},
\ee
where $b_1,b_2$ are smooth symbols on ${\mathcal N}_{N_\gamma +1}$, of the same form as in (\ref{symblog}), and $b_1$ is elliptic. Moreover, as $x$ approaches $x_\gamma$, $b_1$ and $b_2$ (together with $\varphi$) develop singularities on some set ${\mathcal C}$ (called the caustic set). However, following an idea of \cite{HeSj2}, we can represent
$h^{-\frac{N_\gamma}2}e^{S/h}w$ in the integral (Airy) form,
\begin{equation}
\label{airy}
I[c_1,c_2](x,h)=h^{-1/2}\int_{\gamma(x)}\left(\begin{array}{clcr} 
c_1(x',\xi_n,h) \\ 
h c_2(x',\xi_n,h)\end{array}\right)e^{-(x_n\xi_n+
g(x',\xi_n))/h} d\xi_n,
\end{equation}
where we have used local Euclidean coordinates $(x',x_n)\in \R^{n-1}\times \R$ centered at $\gamma\cap\partial\ddot O$, such that $V_1(x) =-C_0x_n+{\mathcal O}(x^2)$ near this point.
For $x$ in $\ddot O$ close to $\gamma\cap\partial\ddot O$, the phase function
$\xi_n\mapsto x_n\xi_n+g(x',\xi_n)$ admits two real critical points that are close to $0$.
Then, choosing  conveniently the $x$-dependent interval $\gamma(x)$, the steepest descent method at one of these points gives us the asymptotic
expansion of
$I[c_1,c_2]$. Comparing this with the symbols $b_1$ and $b_2$, one can determine 
$c_1$ and $c_2$ so that the asymptotic expansion of
$h^{-\frac{N_\gamma}2}e^{S/h}w$ coincides with that of $I[c_1,c_2]$ in $\ddot O$. In particular, when $\gamma\in G_0$, one finds that $c_1$ remains elliptic near 0.
\vskip 0.3cm
At this point, since we did not assume any analyticity of the potentials near $\ddot O$, we have to follow the methods of \cite{FLM} where a similar situation is considered. Indeed, following the constructions of \cite{FLM}, Section 4 (that are made in the scalar case, but can be generalized without problem to our vectorial case), we see that there exists a constant $\delta >0$ such that, for any $N\geq 1$, one can construct a (vectorial) function $w_N$, smooth on the set,
\be
{\mathcal W}_N(\gamma) :=\{ |x-x_\gamma|<\varepsilon\}\cap \{ {\rm dist}(x,\ddot O)<2 (Nk)^{2/3}\}
\ee
 with $\varepsilon >0$ small enough (recall from (\ref{defk}) that $k=|h\ln h|$), such that 
 (see \cite{FLM}, Propositions 4.5 and 4.6),
\begin{itemize}
\item $(P-\rho_1)w_N ={\mathcal O}(h^{\delta N}e^{-\Re \widetilde \varphi_N /h})$ uniformly in ${\mathcal W}_N(\gamma) $;
\item For any $\alpha\in\Z_+^n$, there exists $m_\alpha\geq 0$ independent of $N$ such that,
$$
\partial_x^\alpha w_N ={\mathcal O}(h^{-m_\alpha}e^{-\Re \widetilde \varphi_N /h})
$$
 uniformly in ${\mathcal W}_N(\gamma) $;
 \item $w_N$ can be represented by an integral of the form (\ref{airy}) (with $\gamma (x)=\gamma_N(x)$ depending on $N$) in all of ${\mathcal W}_N(\gamma) $;
 \item  $w_N = {\mathbf w}$ in ${\mathcal N}_{N_\gamma +1}\cap {\mathcal W}_N(\gamma) $;
 \item For any large enough $L$, there exist $C_L>0$ and $\delta_L>0$, both independent of $N$ such that, uniformly in ${\mathcal W}_N(\gamma) \cap \{ {\rm dist} (x, \ddot O)\geq (Nk)^{2/3}\}$, one has,
\begin{eqnarray}
\label{summarywkb}
&& \hskip 0.8cm w_N(x,h)\\
\nonumber
&&=h^{\frac{N_\gamma}2}
\left (\sum_{{\ell=0}\atop{0\leq m\leq \ell}}^{L+[Nk/C_Lh]} h^\ell(\ln h)^m \left(\begin{array}{clcr} 
f_{1,N}^{\ell,m}(x) \\ 
h f_{2,N}^{\ell,m}(x) \end{array}\right)+{\mathcal O}(h^{\delta_L N}+h^{L})\right )e^{-\widetilde\varphi_N(x)/h},
\end{eqnarray}
as $h\to 0$,
with $f_{1,N}^{\ell,m}(x) , f_{2,N}^{\ell,m}(x) $ independent of $h$, and of the form,
\be
\label{taj}
\widetilde f_{j,N}^{\ell,m} (x) = ({\rm dist}(x,{\mathcal C}))^{-3\ell/2-1/4}\beta_{j,N}^{\ell,m}(x,{\rm dist}(x,{\mathcal C})),
\ee 
($j=1,2$) where $\beta_{j,N}^{\ell,m}$ is  smooth near $(x_\gamma, 0)$, and $\beta_1^{\ell,m}(x_\gamma, 0)\not=0$ in the case $\gamma\in G_0$. 
\end{itemize}
Here, $\widetilde\varphi_N$ is a (complex-valued) $C^1$ function on ${\mathcal W}_N(\gamma) $, smooth on ${\mathcal W}_N(\gamma) \backslash{\mathcal C}$, such that (see \cite{FLM}, Lemma 4.1),
\begin{itemize}
\item $\widetilde\varphi_N =\varphi + {\mathcal O}(h^\infty)$ uniformly  in ${\mathcal N}_{N_\gamma +1}\cap {\mathcal W}_N(\gamma) $;
\item $(\nabla \widetilde\varphi_N)^2 =V_1(x) + {\mathcal O}(h^\infty)$ uniformly  in ${\mathcal W}_N(\gamma) $;
\item  There exists $\varepsilon (h)={\mathcal O}(h^\infty )$ real, such that, for $x\in  {\mathcal W}_N(\gamma) \backslash \ddot O$, one has,
\begin{equation}
\label{reimphi}
\Re\widetilde\varphi_N (x)\geq S-\varepsilon (h);
\end{equation}
\item One has,
$$
\Im \nabla \varphi_N(x) = -\nu_N(x)\sqrt{{\rm dist}(x,{\mathcal C})}\hskip 5pt\nabla {\rm dist}(x,{\mathcal C})+{\mathcal O}({\rm dist}(x,{\mathcal C})),
$$
uniformly with respect to $h>0$ small enough and $x\in  {\mathcal W}_N(\gamma) \backslash \ddot O$,  with $\nu_N(x)\geq \delta$.
\end{itemize}
\vskip 0.3cm
The previous results show that we can extend ${\mathbf w}$ by taking $w_N$ in ${\mathcal W}_N(\gamma) $, and  we obtain in that way a function  ${\mathbf w}_N$ smooth on  ${\mathcal N}(\gamma)\cup {\mathcal W}_N(\gamma) $,
such that $(P-\rho_1){\mathbf w}_N ={\mathcal O}(h^{\delta N}e^{-\Re \widetilde \varphi /h})$ uniformly in ${\mathcal N}(\gamma)\cup {\mathcal W}_{N}(\gamma) $. Note that, thanks to Assumption 4, the number $N_\gamma$ is constant on each connected component of $\Gamma$.

\section{Agmon estimates}\label{secagm}

\subsection{Preliminaries} In order to perform Agmon estimates in the same spirit as in \cite{HeSj1}, we need some preliminary results because of the fact that we have to deal with pseudodifferential operators (and not only Schr\"odinger operators). For this reason, we prefer to work with $C^\infty$ weight functions (instead of Lipschitz ones), and the idea is to take $h$-dependent regularizations of Lipschitz weights.
\vskip 0.3cm
At first, we need,
\begin{proposition}\sl\label{poidssing}
\label{poidsvar}
Let  $\nu_0 >0$, $m\geq 0$, $a=a(x,\xi )\in S_{\nu_0}(\langle\xi\rangle^{2m})$.
For $h>0$ small enough, let also
$\Phi_h\in C^\infty (\R^n)$ real valued, such that,
\be
\label{estreg1}
 \sup\vert\nabla\Phi_h
\vert < \nu_0,
\ee
and, for 
 any multi-index $\alpha\in\N^n$ with $|\alpha|\geq 2$,
\be
\label{estreg}
\partial^\alpha\Phi_h (x)= {\mathcal O}\left( h^{1-\vert
\alpha\vert} \right),
\ee
uniformly for $x\in\R^n$ and $h>0$ small enough. 
Then,  for any $\widetilde\Sigma\subset \R^n$ with dist$(\Sigma ,\R^n\backslash \widetilde\Sigma )>0$,  the operator $e^{\Phi_h /h}Ae^{-\Phi_h /h}:=e^{\Phi_h /h} {\rm Op}_h^W(a)e^{-\Phi_h /h}$  satisfies,
\be
\label{estA}
\Vert e^{\Phi_h /h}Ae^{-\Phi_h /h}u\Vert_{L^2} \leq C_1  \Vert\langle hD_x\rangle^m u\Vert_{L^2}
\ee
uniformly
for all $h>0$ small enough and $u\in H^m(\R^n )$.
\end{proposition}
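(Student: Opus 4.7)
\medskip

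\textbf{Proof plan.} The natural approach is to express $e^{\Phi_h/h}Ae^{-\Phi_h/h}$ as an oscillatory integral, use the holomorphic extension of $a$ to deform the contour in $\xi$ (which removes the exponential weights entirely), and then bound the resulting kernel by Schur's test. The key point is that if we are careful to integrate by parts only in the frequency variable, we never have to differentiate $\Phi_h$ past first order, so the ``bad'' high-order bounds (\ref{estreg}) cause no trouble.

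First, for $u\in\mathcal{S}(\R^n)$, I would write
\[
e^{\Phi_h/h}Ae^{-\Phi_h/h}u(x) = \frac{1}{(2\pi h)^n}\iint e^{i(x-y)\xi/h + (\Phi_h(x)-\Phi_h(y))/h}\,a\!\left(\tfrac{x+y}{2},\xi\right)u(y)\,dy\,d\xi,
\]
and use $\Phi_h(x)-\Phi_h(y) = (x-y)\cdot\widetilde{\nabla}\Phi_h(x,y)$ with $\widetilde{\nabla}\Phi_h(x,y):=\int_0^1\nabla\Phi_h(tx+(1-t)y)\,dt$, which by (\ref{estreg1}) satisfies $|\widetilde{\nabla}\Phi_h|\leq\|\nabla\Phi_h\|_\infty<\nu_0$. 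I would then shift the contour $\xi\mapsto\zeta = \xi-i\widetilde{\nabla}\Phi_h(x,y)$, the deformation being justified by Cauchy's theorem, the holomorphy of $a$ in the strip $|\Im\xi|<\nu_0$, and its polynomial growth (absorbing the boundary contribution by a temporary Gaussian regularization that is removed at the end). This identifies the Schwartz kernel of the conjugated operator as
\[
K(x,y) = \frac{1}{(2\pi h)^n}\int e^{i(x-y)\zeta/h}\,\widetilde a(x,y,\zeta)\,d\zeta,\qquad \widetilde a(x,y,\zeta):=a\bigl(\tfrac{x+y}{2},\zeta+i\widetilde{\nabla}\Phi_h(x,y)\bigr),
\]
with $|\partial_\zeta^\beta\widetilde a|\leq C_\beta\langle\zeta\rangle^{2m-|\beta|}$ uniformly in $(x,y,h)$.

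Next, I would obtain pointwise kernel bounds by integrating by parts in $\zeta$ using $L = (h/i|x-y|^2)(x-y)\cdot\partial_\zeta$ (regularized near the diagonal in the usual way). Because $\widetilde a$ has uniformly controlled $\zeta$-derivatives \emph{and no $x,y$-derivatives are ever taken}, this yields, for every $N$,
\[
|K(x,y)|\leq C_N\,h^{-n-2m}\,\langle(x-y)/h\rangle^{-N}.
\]
When $2m<-n$, Schur's test applied to $K$ already gives uniform $L^2$-boundedness (and the right-hand side of (\ref{estA}) controls $\|u\|_{L^2}$). For general $m$, I would reduce to this case by writing $A = (A\,\langle hD_x\rangle^{-m})\langle hD_x\rangle^m$ and noting that $b := a\#_h\langle\xi\rangle^{-m}$ is the Weyl symbol of $A\langle hD_x\rangle^{-m}$; by the semiclassical Moyal product formula (whose terms involve only $\xi$-derivatives of $\langle\xi\rangle^{-m}$ paired with $x$-derivatives of $a$), $b$ lies in $S_{\nu_0'}(1)$ for any $\nu_0'<\nu_0$, so the preceding analysis applies to $b$.

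\textbf{Main obstacle.} The delicate step is to ensure that the contour deformation survives the reduction of order (the passage from $m\geq 0$ to $2m<-n$). The order-reduction must preserve a holomorphic symbol class slightly smaller than $S_{\nu_0}$, because a naive differentiation of the conjugated symbol in $(x,y)$ would pull out factors $\partial^\alpha\Phi_h = \mathcal{O}(h^{1-|\alpha|})$ that blow up. The crucial observation — and the reason for the two-parameter hypothesis (\ref{estreg1})-(\ref{estreg}) — is that once one adopts the Weyl quantization and works with the kernel formula above, all differentiations are in $\zeta$ only, so the singular high-order derivatives of $\Phi_h$ never appear; the weight enters solely through $\widetilde{\nabla}\Phi_h$, whose supremum is controlled by (\ref{estreg1}). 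This is precisely what makes the $h$-dependent $C^\infty$-regularization of Lipschitz weights compatible with the pseudodifferential calculus.
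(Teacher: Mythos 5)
Your contour deformation is exactly the one in the paper's proof: both set $\Psi(x,y)=\int_0^1\nabla\Phi_h((1-t)x+ty)\,dt$, shift $\xi$ by $i\Psi(x,y)$ to kill the exponential weight, and obtain the conjugated operator as the $h$-quantization of the $(x,y,\xi)$-symbol $a\bigl(\tfrac{x+y}{2},\xi+i\Psi(x,y)\bigr)$. After that, however, the two arguments diverge, and yours has a gap.

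The paper does \emph{not} avoid $(x,y)$-derivatives. It differentiates the deformed symbol in all $3n$ variables, uses (\ref{estreg}) to get $\partial_x^\alpha\partial_y^\beta\partial_\xi^\gamma\bigl(a(\tfrac{x+y}{2},\xi+i\Psi)\bigr)={\mathcal O}(h^{-|\alpha+\beta|}\langle\xi\rangle^m)$, and concludes by the Calder\'on-Vaillancourt theorem (citing \cite{Ma2}, ex.~2.10.15, which is tailored to precisely this $h$-dependent loss). The hypothesis (\ref{estreg}) is the mechanism that makes $\partial_x^\alpha\partial_y^\beta\Psi={\mathcal O}(h^{-|\alpha+\beta|})$ for $|\alpha+\beta|\geq 1$; without it the $(x,y)$-regularity of the deformed symbol is uncontrolled and Calder\'on-Vaillancourt is not available. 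Your final paragraph, which asserts that ``the singular high-order derivatives of $\Phi_h$ never appear'' and that (\ref{estreg}) is effectively decorative, is therefore wrong as a description of what the hypothesis is for, and does not match the actual proof.

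The concrete gap in your argument is the order reduction. Your kernel bound plus Schur's test is fine when the integral in $\zeta$ converges absolutely, i.e.\ when $2m<-n$; there, only (\ref{estreg1}) is used, as you say. But the proposition assumes $m\geq 0$, so one is never in that regime, and everything hinges on the reduction. Composing $a$ with $\langle\xi\rangle^{-m}$ gives a symbol $b=a\#_h\langle\xi\rangle^{-m}$ of order $2m-m=m\geq 0$, not of order $0$ as you claim, and in particular still not of order $<-n$: the reduction does not reach the Schur regime. Moreover, even if you compose with $\langle\xi\rangle^{-M}$ for $M$ large enough to make $A\langle hD\rangle^{-M}$ of order $<-n$, you are then left to control $e^{\Phi_h/h}\langle hD\rangle^{M}e^{-\Phi_h/h}\langle hD\rangle^{-M}$ (or an equivalent commutator term), and the symbol of $e^{\Phi_h/h}\langle hD\rangle^{M}e^{-\Phi_h/h}$ involves $\partial^\alpha\Phi_h$ for $|\alpha|$ up to $M$. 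This is exactly where (\ref{estreg}) enters, and it is not a step you can skip by ``integrating by parts only in $\zeta$''. As it stands, the order-reduction step is circular (it needs a bound of the same type it is meant to prove) and the role of (\ref{estreg}) is misattributed. The paper's route through $(x,y,\xi)$-symbol estimates and Calder\'on-Vaillancourt avoids all of this.
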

\begin{proof} For $u\in C_0^\infty (\R^n)$, we write,
$$
e^{\Phi /h}Ae^{-\Phi /h}u(x) =\frac1{(2\pi h)^n}\int e^{i(x-y)\xi /h + (\Phi (x)-\Phi (y))/h}
a(\frac{x+y}2, \xi )u(y)dyd\xi,
$$
and  the property (\ref{estreg1}) shows that we can make the change of contour of integration given by,
$$
\R^n\ni \xi \mapsto \xi +i\Psi (x,y),
$$
where $\Psi (x,y):= \int_0^1 \nabla\Phi ((1-t)x +ty)dt$ (in particular, one has: $\Phi (x)-\Phi (y) = (x-y)\Psi (x,y)$). Then, denoting by ${\rm Op}_h$  the semiclassical quantization of symbols depending on
$3n$ variables (see e.g. \cite{Ma2} Section 2.5), we obtain,
$$
e^{\Phi /h}Ae^{-\Phi /h}= {\rm Op}_h\left(a(\frac{x+y}2, \xi +i\Psi (x,y) )\right),
$$
and, using (\ref{estreg}), we see that, for any $\alpha, \beta, \gamma\in \Z_+^n$, we have
\be
\label{opconjug}
\partial_x^\alpha\partial_y^\beta \partial_\xi^\gamma\left(a(\frac{x+y}2, \xi +i\Psi (x,y) )\right) ={\mathcal O}(h^{-|\alpha+\beta|}\la\xi\ra^m).
\ee
Then, the results is an easy consequence of
 the Calder\'on-Vaillancourt Theorem: see, e.g., \cite{Ma2}, exercise 2.10.15.
\end{proof}
\vskip 0.2cm
We also need,
\begin{proposition}\sl\label{regulLip}
Let $\phi$ and $V$ be two  bounded real-valued Lipschitz functions on $\R^n$, such that $|\nabla\phi(x)|^2\leq V(x)$ almost everywhere. Let also $\chi_1\in C_0^\infty (\R^n; [0,1])$ supported in the ball $\{ |x|\leq 1\}$, such that $\int \chi_1 (x)dx =1$. For any $h>0$, we set $\chi_h(x)=h^{-n}\chi (x/h)$. Then, the smooth function,
$$
\phi_h := \chi_h\, \ast \phi
$$
(where $\ast$ stands for the standard convolution) satisfies,
\begin{itemize}
\item
$\phi_h = \phi +{\mathcal O}(h)$ uniformly for  $h>0$ small enough and $x\in\R^n$;\
\item For all $x\in\R^n$, one has
$|\nabla\phi_h(x)|^2 \leq V(x) + h\Vert \nabla V\Vert_{L^\infty}$; 
\item For all $\alpha\in \Z_+^n$ with $|\alpha|\geq 1$, one has 
$\partial^\alpha \phi_h ={\mathcal O}(h^{1-|\alpha|})$.
\end{itemize}
\end{proposition}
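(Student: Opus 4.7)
My plan is to verify the three bullet points directly from the definition $\phi_h=\chi_h\ast\phi$, using only standard mollifier estimates together with Jensen's inequality in the form of Cauchy--Schwarz against the probability measure $\chi_h(y)dy$. Throughout, I will use that $\phi$ and $V$ are bounded and Lipschitz, that $\chi_h$ is supported in $\{|y|\leq h\}$, and that $\int\chi_h=1$.

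For the first bullet, since $\phi$ is Lipschitz with constant $L:=\Vert\nabla\phi\Vert_{L^\infty}<\infty$, the identity
$$
\phi_h(x)-\phi(x)=\int \chi_h(y)\bigl(\phi(x-y)-\phi(x)\bigr)\,dy
$$
immediately gives $|\phi_h(x)-\phi(x)|\leq L h$, uniformly in $x$ and $h$ small.

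The second (key) bullet is the place where the assumption $|\nabla\phi|^2\leq V$ a.e.\ is used, together with a convexity argument. First, $\nabla\phi_h=\chi_h\ast\nabla\phi$, and Cauchy--Schwarz with respect to the probability measure $\chi_h(y)\,dy$ yields, componentwise then summed over $j$,
$$
|\nabla\phi_h(x)|^2\;\leq\;\int \chi_h(y)\,|\nabla\phi(x-y)|^2\,dy\;\leq\;\int \chi_h(y)\,V(x-y)\,dy=(\chi_h\ast V)(x).
$$
Then, because $V$ is Lipschitz,
$$
(\chi_h\ast V)(x)-V(x)=\int \chi_h(y)\bigl(V(x-y)-V(x)\bigr)\,dy\;\leq\;h\Vert\nabla V\Vert_{L^\infty},
$$
since $\chi_h$ is supported in the ball of radius $h$. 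Combining the two estimates gives exactly the claimed bound $|\nabla\phi_h(x)|^2\leq V(x)+h\Vert\nabla V\Vert_{L^\infty}$.

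For the third bullet, I would move one derivative onto $\phi$ and put the rest on $\chi_h$: for $|\alpha|\geq 1$, pick some index $j$ with $\alpha_j\geq 1$ and write $\partial^\alpha\phi_h=\bigl(\partial^{\alpha-e_j}\chi_h\bigr)\ast\partial_{x_j}\phi$, which makes sense since $\partial_{x_j}\phi\in L^\infty$. A scaling shows $\Vert\partial^{\alpha-e_j}\chi_h\Vert_{L^1}=h^{1-|\alpha|}\Vert\partial^{\alpha-e_j}\chi_1\Vert_{L^1}$, hence
$$
\bignorm{\partial^\alpha\phi_h}_{L^\infty}\leq \bignorm{\partial^{\alpha-e_j}\chi_h}_{L^1}\bignorm{\nabla\phi}_{L^\infty}=\mathcal{O}(h^{1-|\alpha|}),
$$
uniformly in $x$ and $h$. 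This is the last conclusion.

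There is no real obstacle in this statement; the only slightly delicate point is the second bullet, because a naive bound via $\sup|\nabla\phi|^2\leq\sup V$ would not capture the \emph{pointwise} inequality $|\nabla\phi_h|^2\leq V+\mathcal{O}(h)$. The right device is the Jensen/Cauchy--Schwarz step above, which transfers the pointwise inequality $|\nabla\phi|^2\leq V$ through the convolution. After that, the error $h\Vert\nabla V\Vert_{L^\infty}$ is simply the Lipschitz error incurred by averaging $V$ over the ball of radius $h$.
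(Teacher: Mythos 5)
Your proof is correct. The paper explicitly omits its proof of this proposition as ``very standard and almost obvious,'' and your argument --- Lipschitz bound for the first bullet, Jensen/Cauchy--Schwarz against the probability measure $\chi_h(y)\,dy$ plus a Lipschitz error for the second, and the scaling $\Vert\partial^{\alpha-e_j}\chi_h\Vert_{L^1}=h^{1-|\alpha|}\Vert\partial^{\alpha-e_j}\chi_1\Vert_{L^1}$ with Young's inequality for the third --- is precisely the standard mollifier computation the authors evidently have in mind.
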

The proof of this proposition is very standard and almost obvious, and we leave it to the reader. Observe that, in particular, $\phi_h$ satisfies the estimates (\ref{estreg}).
\subsection{Agmon estimates}
As a corollary of the two previous propositions, we have,
\begin{corollary}
\label{AGMON}
Let $\phi$ and $\phi_h$ be as in Proposition \ref{regulLip}, with $V=\min(V_1,V_2)_+$. Then, for any $u=(u_1,u_2)\in H^2(\R^n)\oplus H^2(\R^n)$, one has,
\begin{eqnarray*}
\Re \la e^{\phi_h/h}Pu, e^{\phi_h/h}u\ra \geq \Vert h\nabla (e^{\phi_h/h}u)\Vert^2 +\sum_{j=1}^2\la (V_j -|\nabla\phi_h|^2)e^{\phi_h/h}u_j, e^{\phi_h/h}u_j\ra \\- C_Rh(\Vert e^{\phi_h/h}u\Vert^2+\Vert h\nabla (e^{\phi_h/h}u)\Vert^2),
\end{eqnarray*}
where $C_R>0$ is a constant that depends on $R(x,hD_x)$, $\chi_1$ and $\sup|\nabla\phi|$ only.
\end{corollary}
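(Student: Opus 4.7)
The plan is the standard Agmon energy identity for each scalar diagonal entry, plus a pseudodifferential estimate (Proposition \ref{poidsvar}) to absorb the off-diagonal first-order perturbation $hR$.

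First I would set $v=(v_1,v_2):=e^{\phi_h/h}u$ and split $P = P_0 + hR$ with $P_0 = \mathrm{diag}(P_1,P_2)$. For each scalar component $P_j = -h^2\Delta + V_j$, a direct computation gives
$$e^{\phi_h/h}P_j e^{-\phi_h/h}v_j = -h^2\Delta v_j + 2h\nabla\phi_h\cdot\nabla v_j + h(\Delta\phi_h) v_j - |\nabla\phi_h|^2 v_j + V_j v_j.$$
Taking the real part of the inner product with $v_j$, the transport term $2h\nabla\phi_h\cdot\nabla v_j$ integrates by parts (using $2\Re(\nabla v_j\,\overline{v_j}) = \nabla|v_j|^2$ and the fact that $\phi_h$ is real) to $-h\langle \Delta\phi_h\, v_j, v_j\rangle$, which exactly cancels the conjugation contribution $+h\langle \Delta\phi_h\, v_j, v_j\rangle$. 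This yields the standard scalar Agmon identity
$$\Re\langle e^{\phi_h/h}P_j u_j, e^{\phi_h/h}u_j\rangle = \|h\nabla v_j\|^2 + \langle(V_j - |\nabla\phi_h|^2)v_j, v_j\rangle$$
for $j=1,2$, which sums to the desired principal term.

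Next I would estimate the contribution of $hR$. By Proposition \ref{regulLip}, the regularized weight $\phi_h$ satisfies the hypotheses (\ref{estreg1})--(\ref{estreg}) of Proposition \ref{poidsvar}, provided we choose $\nu_0 > \sup|\nabla\phi|$ (which is admissible, as $\sup|\nabla\phi| \le \sqrt{M_0}$ by the Agmon condition $|\nabla\phi|^2 \le \min(V_1,V_2)_+$, keeping us inside the strip of holomorphy of $r_{j,k}$ from Assumption \ref{Ass3}). The entries of $R$ are first-order symbols, i.e.\ lie in $S_{\nu_0}(\langle\xi\rangle) = S_{\nu_0}(\langle\xi\rangle^{2m})$ with $m=1/2$, so Proposition \ref{poidsvar} gives
$$\bignorm{e^{\phi_h/h}Re^{-\phi_h/h}v}_{L^2} \le C\bignorm{\langle hD_x\rangle^{1/2} v}_{L^2}.$$
Proposition \ref{poidsvar} is stated for the Weyl quantization, but the entries of $R$ use ${\rm Op}^L$ and ${\rm Op}^R$; the argument goes through unchanged since all three quantizations differ by symbols in the same class and the bound (\ref{opconjug}) is quantization-independent. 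Cauchy--Schwarz combined with the interpolation estimate $\|\langle hD_x\rangle^{1/2}v\|^2 \le \|v\|\,\|\langle hD_x\rangle v\| \le C(\|v\|^2 + \|h\nabla v\|^2)$ then yields
$$h\bigabs{\langle e^{\phi_h/h}Re^{-\phi_h/h}v, v\rangle} \le C_R h\bigpare{\|v\|^2 + \|h\nabla v\|^2},$$
which is exactly the error term in the corollary.

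The main obstacle, modest as it is, is bookkeeping the constants: we need $C_R$ to depend only on $R$, $\chi_1$ and $\sup|\nabla\phi|$. This dependence is visible because the Calder\'on--Vaillancourt bound in the proof of Proposition \ref{poidsvar} involves only a finite number of seminorms of the symbol $a((x+y)/2, \xi + i\Psi(x,y))$, each of which is controlled by $\sup|\nabla\Phi_h|$ (hence by $\sup|\nabla\phi|$ via Proposition \ref{regulLip}), by constants in (\ref{estreg}) inherited from $\chi_1$, and by the symbol seminorms of $R$. Combining the diagonal identity and the perturbation bound completes the proof.
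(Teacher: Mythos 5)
Your proof is correct and follows essentially the same strategy as the paper: split $P = \mathrm{diag}(P_1,P_2) + hR$, apply the standard scalar Agmon identity to each diagonal Schr\"odinger term, and invoke Proposition~\ref{poidssing} (with the regularized weight $\phi_h$ from Proposition~\ref{regulLip}) to bound the conjugated $hR$ by $C_R h(\|v\|^2 + \|h\nabla v\|^2)$, tracking the dependence of $C_R$ on $R$, $\chi_1$, and $\sup|\nabla\phi|$. The only cosmetic difference is that the paper directly asserts boundedness of $e^{\phi_h/h}Re^{-\phi_h/h}\langle hD_x\rangle^{-1}$ and applies Cauchy--Schwarz, whereas you pass through $\langle hD_x\rangle^{1/2}$ and an interpolation step, but these are equivalent.
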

\begin{proof} It is standard (and elementary) to show that,
\begin{eqnarray*}
&&\Re \la e^{\phi_h/h}(-h^2\Delta + V_j)u_j, e^{\phi_h/h}u_j\ra \\
&& \hskip 3cm = \Vert h\nabla (e^{\phi_h/h}u_j)\Vert^2 +\la (V_j -|\nabla\phi_h|^2)e^{\phi_h/h}u_j, e^{\phi_h/h}u_j\ra.
\end{eqnarray*}
Therefore, it is enough to estimate $ \la e^{\phi_h/h}R(x,hD_x)u, e^{\phi_h/h}u\ra$. Applying Proposition \ref{poidssing} , we see that the operator $e^{\phi_h/h}R(x,hD_x)e^{-\phi_h/h}\la hD_x\ra^{-1}$ is uniformly bounded on $L^2$.

Moreover, since the constants appearing in the estimates (\ref{opconjug}) depend on $a$, $\alpha$, and on the estimates on  the $\partial^\beta\Phi$'s only, we see that the norm of $e^{\phi_h/h}R(x,hD_x)e^{-\phi_h/h}\la hD_x\ra^{-1}$ depends on $r$ and
on estimates on  $\partial^\beta (\chi_h\ast\nabla\phi) = (\partial^\beta\chi_h)\ast \nabla\phi$ ($\,|\beta|\leq |\alpha|$) only. Since the latter depend on  $\alpha$, $\chi_1$ and $\sup|\nabla\phi|$ only, the result follows.
\end{proof}
\section{Global asymptotic solution}\label{secglob}
The constructions of Section \ref{constWKB} can be done in a neighborhood of any minimal geodesic $\gamma\in G$, and give rise (after having pasted them together with a partition of unity)  to an asymptotic solution (still denoted by ${\mathbf w}_N$) on a neighborhood of $\bigcup_{\gamma\in G}\gamma$. Now, we plan to extend this solution to a whole ($h$-dependent) neighborhood of $\{V_1\geq 0\}$, by using a modified selfadjoint operator with discrete spectrum near 0. 
\vskip 0.3cm
At first, we fix $\varepsilon_0>0$ sufficiently small, and a cut-off function $\chi_0\in C_0^\infty (\ddot O; [0,1])$ such that,
$$
\chi_0(x) = 1  \mbox{ if }  V_1(x)\geq 2\varepsilon_0\,;\, \chi_0(x) = 0 \mbox{ if } V_1(x)\leq \varepsilon_0,
$$
and we set,
\be
\label{V1mod}
\widetilde V_1  := \chi_0 V_1 + \varepsilon_0(1-\chi_0).
\ee
In particular, $\widetilde V_1$ coincides with $V_1$  on $\{ V_1\geq 2\varepsilon_0\}$, and we have $\widetilde V_1\geq \varepsilon_0$ everywhere. Then, we define $\widetilde P_1:= -h^2\Delta + \widetilde V_1$, and we consider the selfadjoint operator,
\be
\widetilde P= 
\left(\begin{array}{cc}
\widetilde P_1 & 0\\
0 & P_2
\end{array}\right) + hR(x,hD_x).
\ee
By construction, for all $C>0$ and $h$ small enough, the spectrum of $\widetilde P$ is discrete in $[-Ch, Ch]$, and a straightforward adaptation of  the arguments used in \cite{HeSj1} shows that its first eigenvalue $E_1$ admits the same asymptotics as $\rho_1$ as $h\to 0_+$. We denote by ${\mathbf v}$ its first normalized eigenfunction, and by ${\mathcal N}_0\subset \{ V_1>2 \varepsilon_0\}$ some fix neighborhood of $\bigcup_{\gamma\in G}\cap \{ V_1>2 \varepsilon_0\}$ where the asymptotic solution ${\mathbf w}_N$ is well defined. We have,

\begin{proposition}\sl 
\label{patchvw}
There exists $\theta_0\in\R$ independent of $h$, such that,
for any compact subset $K$ of ${\mathcal N}_0$, and for any $\alpha\in\Z_+^n$, one has,
$$
\Vert e^{\varphi /h}\partial^\alpha(e^{i\theta_0}{\mathbf v}-h^{\frac{n}4}{\mathbf w}_N)\Vert_K  ={\mathcal O}(h^\infty).
$$
\end{proposition}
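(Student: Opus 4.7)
The plan is to compare $\mathbf{v}$, the exact first eigenfunction of $\widetilde P$, with the quasimode $h^{n/4}\mathbf{w}_N$ in three steps: (i) weighted exponential decay of $\mathbf v$ via the Agmon estimate of Section \ref{secagm}; (ii) matching at $x=0$ via uniqueness of the WKB scheme, which also fixes the phase $e^{i\theta_0}$ and the $h^{n/4}$ normalization; (iii) propagation of this match along the geodesics $\gamma\in G$.

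For (i), I apply Corollary \ref{AGMON} to $u=\mathbf{v}$ with the smoothed weight $\phi_h=\chi_h\ast\varphi$ built in Proposition \ref{regulLip}. Since $|\nabla\varphi|^2=\min(V_1,V_2)_+$ almost everywhere, the quadratic forms $\la (V_j-|\nabla\phi_h|^2)e^{\phi_h/h}\mathbf{v}_j,e^{\phi_h/h}\mathbf{v}_j\ra$ are nonnegative up to an $\mathcal{O}(h)$ loss and the remainder $C_R h\Vert h\nabla(e^{\phi_h/h}\mathbf v)\Vert^2$ is absorbed by the main gradient term for $h$ small; a standard Helffer--Sj\"ostrand bootstrap then yields $\Vert e^{\varphi/h}\mathbf{v}\Vert_{L^2}=\mathcal{O}(h^{-N_0})$ for some fixed $N_0$. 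Semiclassical elliptic regularity of $\widetilde P$ (whose diagonal part is elliptic on $\{\widetilde V_1\geq\varepsilon_0\}$, the coupling $hR$ being of lower order) then upgrades this to pointwise bounds $\partial^\alpha\mathbf{v}=\mathcal{O}(h^{-N_\alpha}e^{-\varphi/h})$ on compact subsets of $\mathcal N_0$; analogous bounds for $\mathbf{w}_N$ hold by construction.

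For (ii), I use the simplicity of $E_1$ and the identity $E_1-\rho_1=\mathcal{O}(h^\infty)$, which follows from both eigenvalues admitting the same formal expansion (discussion just before the statement). Near $x=0$, $\mathbf{v}$ and $\mathbf{w}_N$ are both approximate solutions of $(P-E_1)u=\mathcal{O}(h^\infty e^{-\varphi/h})$ with decay $e^{-\varphi/h}$. The matrix WKB scheme of Section \ref{constWKB}, based on Lemma \ref{FPDO}, has a unique formal solution modulo $\mathcal{O}(h^\infty e^{-\varphi/h})$ once $a_{2,0}(0)$ is fixed, so there exists $c(h)\in\C$ with
$$
\mathbf{v}=c(h)\mathbf{w}_N+\mathcal{O}(h^\infty e^{-\varphi/h})
$$
in a small ball around $0$. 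A Laplace-type computation using $\varphi(x)=\tfrac12\la V_2''(0)^{1/2}x,x\ra+\mathcal{O}(|x|^3)$ and $\mathbf{w}_N(x)\sim (0,a_{2,0}(x))^T e^{-\varphi(x)/h}$ gives $\Vert\mathbf{w}_N\Vert_{L^2}=C_0\, h^{n/4}(1+\mathcal{O}(h))$ with $C_0>0$; normalizing $a_{2,0}(0)$ so that $C_0=1$ and adjusting the phase of $\mathbf{v}$, the constraint $\Vert\mathbf{v}\Vert=1$ forces $c(h)=e^{i\theta_0}h^{n/4}+\mathcal{O}(h^\infty)$.

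For (iii), set $r:=e^{i\theta_0}\mathbf{v}-h^{n/4}\mathbf{w}_N$. On $\mathcal N_0$ one has $\widetilde P=P$, hence $(P-E_1)r=\mathcal{O}(h^\infty e^{-\varphi/h})$ and, by (ii), $r=\mathcal{O}(h^\infty e^{-\varphi/h})$ near $0$. On each segment $\gamma^{(j)}$ of each $\gamma\in G$, $r$ inherits the WKB form of $\mathbf{w}_N$, with symbols satisfying the homogeneous versions of (\ref{eq1})--(\ref{eq4}) and their higher-order analogues, with vanishing initial data at the entry point of $\gamma^{(j)}$; uniqueness for these first-order ODEs along $\nabla\varphi_j\cdot\nabla_x$ propagates the vanishing throughout $\gamma^{(j)}$. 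At each crossing $x^{(j)}\in\partial\Omega$ the Weber-function calculus (Lemma \ref{Reste}, Proposition \ref{resum}), combined with the ellipticities (\ref{ellip}) and (\ref{ellip'}), transmits the vanishing from $\gamma^{(j)}$ to $\gamma^{(j+1)}$. A Borel resummation of these vanishing formal series then gives $r=\mathcal{O}(h^\infty e^{-\varphi/h})$ on $\mathcal N_0$, and elliptic regularity (as in step (i)) yields the statement with all derivatives $\partial^\alpha$. The main obstacle is precisely this transmission through the crossings $x^{(j)}$: classical analytic continuation is unavailable because $r$ is represented in terms of Weber functions with $\ln(1/h)$-dependent coefficients, so one must rely on the uniqueness built into the formal Weber-function calculus of Section \ref{constWKB} itself.
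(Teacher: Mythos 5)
Your step (iii) has a genuine gap that invalidates the proposal. You claim that on each segment $\gamma^{(j)}$ the difference $r:=e^{i\theta_0}\mathbf{v}-h^{n/4}\mathbf{w}_N$ ``inherits the WKB form of $\mathbf{w}_N$, with symbols satisfying the homogeneous versions of (\ref{eq1})--(\ref{eq4})'' and then propagate vanishing along the transport equations and through the crossings via the Weber-function calculus. But $\mathbf{v}$ is the \emph{exact} first eigenfunction of $\widetilde P$; away from $x=0$ it has no a priori formal WKB expansion, and consequently neither does $r$. The whole content of Proposition \ref{patchvw} is precisely to establish that $\mathbf{v}$ admits WKB-type asymptotics on a large set, so asserting this in order to run the transport/uniqueness argument begs the question. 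Uniqueness of the formal scheme constrains only the symbols of $\mathbf{w}_N$ and cannot, by itself, control an exact solution. (There is also a smaller weakness in step (i): applying Corollary \ref{AGMON} to $\mathbf{v}$ with weight $\varphi$ directly does not absorb the $\mathcal{O}(h)$ error terms because $|\nabla\varphi|^2=V$ has no margin near $0$; the paper's (\ref{agmonfaible}) is stated only for the weaker $(1-\varepsilon)\widetilde\varphi$ weight, and reaching $\varphi$ requires the logarithmic correction described below.)

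The paper's proof avoids any formal propagation of $r$. After invoking HeSj1 Proposition 2.5 for the local match near $0$ (this is essentially your step (ii)), it applies the weighted Agmon estimate of Corollary \ref{AGMON} directly to $u=\chi(e^{i\theta_0}\mathbf{v}-h^{n/4}\mathbf{w}_N)$ on $\mathcal{N}_0$, using the Lipschitz weight $\phi=\min(\phi_1,\phi_2)$: here $\phi_1$ replaces $\varphi$ by $\varphi-Ch\ln(\varphi/h)$ where $\varphi\geq Ch$, producing the margin $|\nabla\phi_h|^2\leq V-(\delta_0 C-\|V\|_{L^\infty})h$ for any large $C$, while $\phi_2$ is a scaled Agmon distance making the cut-off error $e^{-2\varepsilon\delta_1/h}$ exponentially small. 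The local matching near $0$ enters through the region $\{\varphi\leq Ch\}$. Letting $C$ be arbitrary then yields the $\mathcal{O}(h^\infty)$ bound with the full weight $e^{\varphi/h}$ on $K$. This energy-estimate route is exactly what replaces your formal transport argument, and cannot be sidestepped.
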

\begin{proof} The existence of $\theta_0$ such that $\partial^\alpha(e^{i\theta_0}{\mathbf v}-h^{\frac{n}4}{\mathbf w}_N)={\mathcal O}(h^\infty)$ uniformly near 0, is a consequence of \cite{HeSj1}, Proposition 2.5, and standard Sobolev estimates. Let $\chi\in C_0^\infty ( {\mathcal N}_0; [0,1])$, with $\chi =1$ in a neighborhood of  $K\cup\{0\}$. Following \cite{HeSj1, Pe}, we plan to apply Corollary \ref{AGMON} to $u:= \chi (e^{i\theta_0}{\mathbf v}-h^{\frac{n}4}{\mathbf w}_N)$, with a suitable weight function $\phi$. Let us first observe that, using Corollary \ref{AGMON},  for any $\varepsilon >0$, one has,
\be
\label{agmonfaible}
\Vert e^{(1-\varepsilon)\widetilde\varphi /h}\la hD_x\ra{\mathbf v}\Vert_{H^1} ={\mathcal O}(1).
\ee
where $\widetilde\varphi(x) \geq\varphi (x)$ is the Agmon distance associated with $\min(\widetilde V_1, V_2)$ between 0 and $x$.
Now, for $C\geq 1$ arbitrarily large, we define,
$$
\phi (x) := \min (\phi_1, \phi_2),
$$
where,
\begin{eqnarray*}
&&\phi_1(x):=\left\{\begin{array}{ll}
\varphi (x) - Ch\ln (\varphi (x)/h) & \mbox{ if } \varphi (x)\geq Ch;\\
\varphi (x) - Ch\ln C & \mbox{ if } \varphi (x)\leq Ch,
\end{array}\right.\\
&&\phi_2(x):=\left\{\begin{array}{ll}
\displaystyle{\inf_{\chi(y)\not= 1}}(1-2\varepsilon)(\varphi (y)+d(y,x)) &\mbox{ if } x\in \supp \chi;\\
(1-2\varepsilon)\varphi (x) &\mbox{ if } x\notin \supp\chi.
\end{array}\right.
\end{eqnarray*}
Here, $\varepsilon>0$ is taken sufficiently small in order to have $\phi_2(x) > \varphi(x)$ when $x\in K$. Then, $\phi$ is Lipschitz continuous,  and one has $\phi = \phi_1$ on $K$, and $\phi =\phi_2$ on $\R^n\backslash\{\chi =1\}$. Moreover, one sees as in \cite{Pe} (proof of Theorem 5.5) that, if we set $V:=\min(V_1,V_2)$, $\phi$ satisfies,
\begin{eqnarray*}
|\nabla \phi|^2 = V \mbox{ in } \{ \varphi \leq Ch\};\\
|\nabla \phi|^2\leq V - \delta_0Ch \mbox{ in } \{ \varphi \geq Ch\},
\end{eqnarray*}
where $\delta_0 =\inf_{x\in \supp\chi\,; \,x\not=0} (V(x)/\varphi(x) ) >0$. As a consequence, by Proposition \ref{regulLip}, the regularized $\phi_h$ of $\phi$ satisfies,
\begin{eqnarray*}
|\nabla \phi_h|^2 \leq V +h\Vert V\Vert_{L^\infty}\mbox{ in } \{ \varphi \leq Ch\};\\
|\nabla \phi_h|^2\leq V - (\delta_0C -\Vert V\Vert_{L^\infty})h \mbox{ in } \{ \varphi \geq Ch\}.
\end{eqnarray*}
Then, choosing $C$ sufficiently large, and setting $u:= \chi (e^{i\theta_0}{\mathbf v}-h^{\frac{n}4}{\mathbf w}_N)$,  we see that Corollary \ref{AGMON} implies,
 \be
\label{compDR}
\Vert h\nabla(e^{\phi_h/h}u)\Vert^2 + C' h\Vert e^{\phi_h/h}u\Vert_{\{\varphi\geq Ch\}}^2 \leq \la e^{\phi_h/h}(\widetilde P-E_1)u, e^{\phi_h/h} u\ra,
\ee
with $C'=C'(C)$ arbitrarily large. Moreover, if $\widetilde \chi\in C_0^\infty ({\mathcal N}_0)$ is such that $\widetilde\chi \chi = \chi$, we have,
$$
(\widetilde P-E_1)u=[\widetilde P, \chi]\widetilde\chi u+{\mathcal O}(h^\infty e^{-\varphi/h}),
$$
and since $\phi_h =(1-2\varepsilon )\varphi +{\mathcal O}(h)$ on $\supp\nabla\chi$, $\min_{\supp\nabla\chi}\varphi =:\delta_1>0$, and, by Proposition \ref{poidssing}, the operator $e^{\phi_h/h}[R,\chi]e^{-\phi_h/h}$ is uniformly bounded, we obtain (using also (\ref{agmonfaible})),
\begin{eqnarray*}
\la e^{\phi_h/h}(\widetilde P-E_1)u, e^{\phi_h/h}u\ra&=&{\mathcal O}\left(\Vert e^{(1-\varepsilon)\varphi /h }\la hD_x\ra\widetilde\chi u\Vert_{\supp\nabla\chi}^2 + h\Vert e^{\phi_h/h}u\Vert ^2\right)\\
&=& {\mathcal O}\left( e^{-2\varepsilon\delta_1/h}+ h\Vert e^{\phi_h/h}u\Vert ^2\right).
\end{eqnarray*}
Inserting this estimate into (\ref{compDR}), and taking $C$ sufficiently large,
this permits us to obtain,
$$
\Vert h\nabla(e^{\phi_h/h}u)\Vert^2 + h\Vert e^{\phi_h/h}u\Vert^2={\mathcal O}(e^{-2\varepsilon\delta_1/h}+ \Vert e^{\phi_h/h}u\Vert_{\{\varphi\leq Ch\}}^2).
$$
In particular, since $\phi_h = \phi_1 + {\mathcal O}(h)$ on $K$, and $\phi_h=(1-2\varepsilon)\varphi \leq Ch$ on $\{\varphi\leq Ch\}$,
$$
\Vert h^C\varphi^{-C}e^{\varphi/h}h\nabla u\Vert_K^2+\Vert h^C\varphi^{-C}e^{\varphi/h}u\Vert_K^2={\mathcal O}(e^{-2\varepsilon\delta_1/h} +  \Vert u\Vert_{\{\varphi\leq Ch\}}^2).
$$
Therefore,
$$
\Vert e^{\varphi/h}\nabla u\Vert_K^2+\Vert e^{\varphi/h}u\Vert_K^2={\mathcal O}(h^\infty),
$$
and the result follows by standard Sobolev estimates.
\end{proof}
\vskip 0.3cm
Now, following \cite{FLM}, Section 4.3, we observe that, if $\varepsilon_0$ has been taken small enough, the asymptotic solution ${\mathbf w}_N$ is 
${\mathcal O}(h^{\delta N}e^{-S/h})$ uniformly on the set,
$$
\{ {\rm dist}(x, {\bigcup_{\gamma\in G}}\gamma )\geq \varepsilon_0\}\cap \{ V_1\leq 2\varepsilon_0\}\cap \left(\bigcup_{\gamma\in G}{\mathcal N}(\gamma)\cup {\mathcal W}_N(\gamma)\right)
$$ Moreover, by (\ref{agmonfaible}), the same is true for ${\mathbf v}$ on $\{ {\rm dist}(x, {\displaystyle\bigcup_{\gamma\in G}}\gamma )\geq \varepsilon_0\}\cap \{ V_1\leq 2\varepsilon_0\}$. Therefore, using also Proposition \ref{patchvw}, we can paste together $e^{i\theta_0}{\mathbf v}$ and $h^{-n/4}{\mathbf w}_N$ in order to obtain a function ${\mathbf u}_N$ that satisfies to the properties of the following proposition (see also \cite{FLM}, Proposition 4.6):
\begin{proposition}\sl\label{globalbkw}
There exists a function ${\mathbf u}_N$, smooth on $\ddot O_N:=\{{\rm dist}(x,\ddot O)\}< 2(Nk)^{2/3}$, such that,
\begin{eqnarray*}
(P-\rho){\mathbf u}_N ={\mathcal O}(h^{\delta N}e^{-\Re \varphi_N/h});\\
\partial^{\alpha}{\mathbf u}_N ={\mathcal O}(h^{-m_\alpha} e^{-\Re \varphi_N/h}),
\end{eqnarray*}
uniformly on $\ddot O_N$, where $\widetilde\varphi_N$ is as in (\ref{reimphi}).
Moreover, ${\mathbf u}_N$ can be written as in (\ref{summarywkb}) in $\displaystyle\bigcup_{\gamma\in G}{\mathcal W}_N(\gamma) \cap \{ {\rm dist} (x, \ddot O)\geq (Nk)^{2/3}\}$ (with $\beta_1^{\ell,m}(x_\gamma, 0)\not=0$), while ${\mathbf u}_N$ is ${\mathcal O}(h^{\delta N}e^{-\Re \varphi_N/h})$ away from $\displaystyle\bigcup_{\gamma\in G}{\mathcal W}_N(\gamma) \cap \{ x\notin\ddot O\}$.
\end{proposition}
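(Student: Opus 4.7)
The plan is to paste together the two asymptotic objects already at disposal. On one side, $e^{i\theta_0}\mathbf{v}$ is defined globally and, by (\ref{agmonfaible}), decays exponentially at rate close to $\widetilde\varphi\geq\varphi$ outside a neighborhood of the well; on the other side, $h^{-n/4}\mathbf{w}_N$ is defined on $\mathcal{N}(\gamma)\cup\mathcal{W}_N(\gamma)$ for each $\gamma\in G$, with the asymptotic form (\ref{summarywkb}) in $\mathcal{W}_N(\gamma)\cap\{\dist(x,\ddot O)\geq(Nk)^{2/3}\}$. Proposition \ref{patchvw} tells us they agree up to $\mathcal{O}(h^\infty e^{-\varphi/h})$ (with all derivatives) on compacts inside $\mathcal{N}_0\subset\{V_1>2\varepsilon_0\}$. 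I would pick $\chi_1\in C^\infty(\R^n;[0,1])$ equal to $1$ on a neighborhood of $\bigcup_{\gamma\in G}\mathcal{W}_N(\gamma)\cup(\gamma\cap\{V_1\leq 2\varepsilon_0\})$ and vanishing on $\{V_1\geq 3\varepsilon_0\}\setminus\mathcal{N}_0$, with $\supp\nabla\chi_1$ contained in a compact subset of $\mathcal{N}_0\cap\{V_1\geq 2\varepsilon_0\}$, and set
$$\mathbf{u}_N := \chi_1 h^{-n/4}\mathbf{w}_N + (1-\chi_1) e^{i\theta_0}\mathbf{v}.$$
This is smooth on $\ddot O_N$ because $\chi_1$ is supported in the domain of $\mathbf{w}_N$, and it automatically inherits the form (\ref{summarywkb}) on $\bigcup_\gamma\mathcal{W}_N(\gamma)\cap\{\dist(x,\ddot O)\geq(Nk)^{2/3}\}$, where $\chi_1\equiv 1$.

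To bound $(P-\rho_1)\mathbf{u}_N$ I split
$$(P-\rho_1)\mathbf{u}_N = \chi_1(P-\rho_1)\bigpare{h^{-n/4}\mathbf{w}_N} + (1-\chi_1)(P-\rho_1)\bigpare{e^{i\theta_0}\mathbf{v}} + [P,\chi_1]\bigpare{h^{-n/4}\mathbf{w}_N - e^{i\theta_0}\mathbf{v}}.$$
The first term is $\mathcal{O}(h^{\delta N}e^{-\Re\widetilde\varphi_N/h})$ by (\ref{summarywkb}). On $\supp(1-\chi_1)$ we have $V_1=\widetilde V_1$, hence $P=\widetilde P$, so the second term equals $(1-\chi_1)(E_1-\rho_1)\mathbf{v}$; since $E_1$ and $\rho_1$ share the same asymptotic expansion, $E_1-\rho_1=\mathcal{O}(h^\infty)$, and combined with (\ref{agmonfaible}) this yields $\mathcal{O}(h^\infty e^{-(1-\varepsilon)\varphi/h})$. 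The third (commutator) term is supported in the compact set $\supp\nabla\chi_1\subset\mathcal{N}_0$, where by Proposition \ref{patchvw} the difference $g:=e^{i\theta_0}\mathbf{v}-h^{-n/4}\mathbf{w}_N$ satisfies $\partial^\alpha g=\mathcal{O}(h^\infty e^{-\varphi/h})$ for every $\alpha$. The differential part $[-h^2\Delta,\chi_1]$ is then immediate; for the pseudodifferential part $h[R,\chi_1]$, I would introduce an auxiliary $\widetilde\chi_1\in C_0^\infty(\mathcal{N}_0)$ equal to $1$ near $\supp\nabla\chi_1$ and split $h[R,\chi_1]g = h[R,\chi_1](\widetilde\chi_1 g) + h[R,\chi_1]((1-\widetilde\chi_1)g)$. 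The first piece is $\mathcal{O}(h^\infty e^{-\varphi/h})$ by Proposition \ref{poidssing} applied to the compactly supported function $\widetilde\chi_1 g$ (since $|\nabla\varphi|^2\leq\min(V_1,V_2)<M_0$ lies within the holomorphy strip of Assumption 3). The second piece is handled by pseudolocality, a direct consequence of the contour-deformation argument of Lemma \ref{FPDO'}/Proposition \ref{poidssing}, since $\chi_1(1-\widetilde\chi_1)\equiv 0$.

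The derivative bound $\partial^\alpha\mathbf{u}_N=\mathcal{O}(h^{-m_\alpha}e^{-\Re\widetilde\varphi_N/h})$ follows on $\supp\chi_1$ from the corresponding bound in (\ref{summarywkb}), and on $\supp(1-\chi_1)$ from elliptic regularity applied to $(\widetilde P-E_1)\mathbf{v}=0$ together with (\ref{agmonfaible}) (using $\Re\widetilde\varphi_N\leq\varphi\leq\widetilde\varphi$ in this region). The fall-off of $\mathbf{u}_N$ away from $\bigcup_\gamma\mathcal{W}_N(\gamma)\cap\{x\notin\ddot O\}$ is inherited from the $\mathcal{O}(h^{\delta N}e^{-S/h})$ bound on $\mathbf{w}_N$ recalled just before the proposition and from (\ref{agmonfaible}) for $\mathbf{v}$. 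The main obstacle is the control of the pseudodifferential commutator $[hR,\chi_1]$ acting on a function that is pointwise small only on a compact set; this step relies essentially on the analyticity of the symbols (Assumption 3) and on the weighted conjugation estimate (\ref{opconjug}) to dispose of the nonlocal tails. The rest of the argument is essentially \emph{bookkeeping}.
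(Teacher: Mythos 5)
Your overall strategy — define $\mathbf{u}_N=\chi_1 h^{-n/4}\mathbf{w}_N+(1-\chi_1)e^{i\theta_0}\mathbf{v}$ and control the commutator $[P,\chi_1]g$ — is exactly what the paper does (the paper only sketches it, so your write-up of the commutator term, including the pseudolocal treatment of $h[R,\chi_1]$ via $\widetilde\chi_1$, is a useful elaboration). However, there is a genuine flaw in the construction of the cutoff. You ask simultaneously that $\supp\chi_1$ lie in the domain of $\mathbf{w}_N$, i.e.\ in the tube $\bigcup_{\gamma\in G}\bigl({\mathcal N}(\gamma)\cup{\mathcal W}_N(\gamma)\bigr)$, and that $\supp\nabla\chi_1$ be contained in a compact subset of ${\mathcal N}_0\cap\{V_1\geq 2\varepsilon_0\}$. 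These two requirements are incompatible: the band $\{V_1<2\varepsilon_0\}\cap\ddot O_N$ is a connected neighborhood of $\partial\ddot O$ which the tube intersects but does not contain, so $\chi_1$ must pass from $1$ to $0$ somewhere inside that band, forcing $\nabla\chi_1\neq 0$ at points with $V_1<2\varepsilon_0$. Consequently $\supp\nabla\chi_1$ necessarily meets $\{V_1<2\varepsilon_0\}\setminus{\mathcal N}_0$, and on that ``lateral'' piece Proposition \ref{patchvw} gives you nothing (it is stated for compacts in ${\mathcal N}_0$, which sits inside $\{V_1>2\varepsilon_0\}$), while your commutator estimate relies on it entirely.

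The missing ingredient is precisely the one the paper spells out just before the proposition: if $\varepsilon_0$ is taken small enough, both $h^{-n/4}\mathbf{w}_N$ (by the constructions of \cite{FLM}, Section 4.3) and $e^{i\theta_0}\mathbf{v}$ (by (\ref{agmonfaible})) are separately ${\mathcal O}(h^{\delta N}e^{-S/h})$, together with enough derivatives, on the region $\{\,{\rm dist}(x,\bigcup_{\gamma}\gamma)\geq\varepsilon_0\}\cap\{V_1\leq 2\varepsilon_0\}$ where the lateral transition of $\chi_1$ occurs. You do quote these bounds, but only at the end to justify the fall-off of $\mathbf{u}_N$ away from the tubes — you need them already to bound the commutator $[P,\chi_1]g$ on the lateral transition region, where $g$ is small not because the two pieces agree, but because each piece is small separately. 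With $\supp\nabla\chi_1$ split accordingly into its ${\mathcal N}_0$ part (handled by Proposition \ref{patchvw}, as you do) and its lateral part (handled by the joint ${\mathcal O}(h^{\delta N}e^{-S/h})$ smallness), the estimate closes; your treatment of the pseudodifferential tails and the term $(1-\chi_1)(\widetilde P-E_1)\mathbf{v}$ is otherwise correct.
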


\section{Comparison between asymptotic and true solution}\label{compsol}
\subsection{A priori estimates}
In the same spirit as in \cite{FLM}, Theorem 2.2, we start with an a priori estimate for the resonant state of $P$. From now on, we denote by ${\mathbf u}$ the out-going solution of 
\be
\label{resoEq}
P{\mathbf u} = \rho_1{\mathbf u},
\ee
 normalized in the following way: we fix some analytic distorted space (also known, more recently, as a Perfectly Matched Layer), of the form,
\be
\label{PLM}
\widetilde\R^n_\theta := \{ x+i\theta F(x) \, ;\, x\in\R^n\},
\ee
where $F\in C^\infty (\R^n; \R^n)$, $F(x) = 0$ if $|x|\leq R_0$,  $F(x) =x$ for $|x|$  large enough, and where $\theta>0$ is sufficiently small, and may also tend to 0 with $h$, but not too rapidly (here, we take $\theta =h|\ln h| =k$). Then, by definition, the fact that $\rho_1$ is a resonance of $P$ means that Equation (\ref{resoEq}) admits a solution in $L^2(\widetilde\R^n_\theta)$, and here we take $ {\mathbf u}$ in such a way that,
\be
\label{normU}
\Vert{\mathbf u}\Vert_{L^2(\widetilde\R^n_\theta)} =1.
\ee
As before, $d$ stands for the Agmon distance associated with the pseudometric $\min(V_1,V_2)_+dx^2$, and we denote by $B_d(S):=\{ x\in\R^n)\, ;\, d(0,x)<S\}$ the corresponding open ball of radius $S=d(0, \partial\ddot O)$.
Then, we first have,
\begin{proposition}\sl 
\label{estprior1}
For any compact subset $K\subset \R^n$, there exists $N_K\geq 0$ such that,
$$
\Vert e^{s(x)/h}{\mathbf u}\Vert_{H^1(K)} ={\mathcal O}(h^{-N_K}),
$$
uniformly as $h\to 0$, where $s(x)=\varphi (x)$ if $x\in B_d(S)$ and $s(x)=S$ otherwise.
\end{proposition}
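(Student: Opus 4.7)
My plan is a combined Agmon / outgoing-propagation argument, anchored by a crude polynomial bound on $\mathbf{u}$ derived from its normalization on the distorted space.

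\emph{Polynomial bound.} Since $F\equiv 0$ on $B(0,R_0)$, the distortion is trivial there, so (\ref{normU}) gives $\Vert\mathbf{u}\Vert_{L^2(B(0,R_0))}\leq 1$. Coupled with $(P-\rho_1)\mathbf{u}=0$ and interior elliptic regularity (using the bounded pseudodifferential character of $R$ granted by Assumption 3), this yields $\Vert\mathbf{u}\Vert_{H^2(K')}={\mathcal O}(h^{-N_0})$ on any compact $K'\subset B(0,R_0)$; for compacts extending past $B(0,R_0)$ one uses the analyticity of the distorted equation (in the spirit of \cite{FLM}, Theorem 2.2).

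\emph{Agmon estimate inside $B_d(S)$.} I apply Corollary \ref{AGMON} with a regularized weight $\phi_h$ approximating $s$. To avoid the critical case of equality in the eikonal inequality, I follow the log-correction idea used in the proof of Proposition \ref{patchvw}: take
$$
\phi(x):=\min\!\Bigl(\varphi(x)-Ch\ln(\varphi(x)/h+1),\;S-\delta\Bigr)
$$
for $C$ large and $\delta>0$ small fixed, so that $|\nabla\phi|^2\leq V-\delta_0Ch$ on $\{\varphi\geq Ch\}$ with $V:=\min(V_1,V_2)_+$. Regularizing via Proposition \ref{regulLip} gives a smooth $\phi_h$ with $|\nabla\phi_h|^2\leq V+{\mathcal O}(h)$ and $\phi_h=\phi+{\mathcal O}(h)$. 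Applied to $\chi\mathbf{u}$ for a cut-off $\chi\in C_0^\infty$ equal to $1$ near $K\cap\overline{B_d(S)}$ and supported in a small neighborhood of $\overline{B_d(S)}$, Corollary \ref{AGMON} together with $|\Re\rho_1|={\mathcal O}(h)$ gives a coercive inequality on the left. The commutator $[P,\chi]\mathbf{u}$ is supported where $\chi$ is non-constant, but there $\phi_h$ falls below $S$ by a fixed amount ($\geq \delta$), providing enough exponential gain to absorb the polynomial bound of Step 1. This yields $\Vert e^{\varphi/h}\chi\mathbf{u}\Vert_{H^1}={\mathcal O}(h^{-N})$ for some $N$.

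\emph{Extension to the sea.} On $K\setminus B_d(S)$ one has $s\equiv S$, so the claim amounts to $\Vert\mathbf{u}\Vert_{H^1}={\mathcal O}(h^{-N}e^{-S/h})$ there. For $u_2$ this comes from an Agmon estimate with the larger weight $d_2(0,\cdot)$ (the Agmon distance for the non-degenerate metric $V_2\,dx^2$), which agrees with $\varphi$ inside the cirque but keeps growing outside and exceeds $S$ on $K\setminus B_d(S)$. For $u_1$ the equation reads $(P_1-\rho_1)u_1=-h\,{\rm Op}^L_h(r_{1,1})u_1-h\,{\rm Op}^L_h(r_{1,2})u_2$; the right-hand side is ${\mathcal O}(h^{-N+1}e^{-S/h})$ by the $u_2$-bound, and a non-trapping / outgoing estimate for $P_1-\rho_1$ — valid since $E=0$ is non-trapping for $V_1$ and $\mathbf{u}$ is outgoing by the very definition of the distortion — transfers the same smallness to $u_1$.

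\emph{Main obstacle.} The delicate point is the sea step for $u_1$. It requires a non-trapping microlocal estimate for $P_1-\rho_1$ compatible with the very small distortion angle $\theta=k=h\ln(1/h)$, and robust enough to accommodate a pseudodifferential (not differential) coupling. I expect to import essentially the argument of \cite{FLM}, Theorem 2.2, using Assumption 3 to control the analyticity of $R$ on the distorted domain and the exponential smallness of the forcing produced by $u_2$ to propagate the decay of $u_1$ out to all of $K\setminus\ddot O$.
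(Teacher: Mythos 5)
There is a genuine gap in your Step 2, and it is a circularity that the direct Agmon approach cannot escape. When Corollary \ref{AGMON} is applied to $u=\chi\mathbf{u}$, the left side of the coercive inequality carries the commutator term $\Re\langle e^{\phi_h/h}[P,\chi]\mathbf{u},\,e^{\phi_h/h}\chi\mathbf{u}\rangle$, and $\nabla\chi$ is necessarily supported in a shell near $\partial B_d(S)$, where your capped weight satisfies $\phi_h\approx S-\delta$. On $\supp\nabla\chi$ the only information available is the crude polynomial bound of Step 1, so $\Vert e^{\phi_h/h}[P,\chi]\mathbf{u}\Vert$ is of size $h^{-N_0}e^{(S-\delta)/h}$. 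Feeding this into the coercive inequality yields $\Vert e^{\phi_h/h}\chi\mathbf{u}\Vert\lesssim h^{-N}e^{(S-\delta)/h}$, which — since $\phi_h\approx\varphi\ll S$ near $0$ — is dramatically weaker than the $ {\mathcal O}(h^{-N_K})$ you need; in fact it is weaker than the trivial bound. The ``exponential gain from $\phi_h\leq S-\delta$'' you invoke works in the wrong direction: the factor $e^{\phi_h/h}$ hits the commutator exactly where $\phi_h$ is \emph{largest}, not smallest, and $\mathbf{u}$ carries no a priori exponential smallness there. To make the commutator harmless one would already need $\mathbf{u}={\mathcal O}(h^{-N}e^{-S/h})$ on $\supp\nabla\chi$, which is essentially the conclusion of the proposition. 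The same circularity reappears in your sea step: the weight $d_2(0,\cdot)$ exceeds $S$ on $K\setminus B_d(S)$, so the coupling term $h\,{\rm Op}^L_h(r_{2,1})u_1$ — only polynomially bounded at that stage — blows up once weighted (and, since $R$ is pseudodifferential rather than differential, $[R,\chi]$ is not even compactly supported, so this term cannot be confined to a favorable region).

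The paper's proof sidesteps this by a detour through a self-adjoint comparison operator, which is the ingredient your argument is missing. One fills the bottom of the barrier, replacing $V_1$ by $\hat V_1=\hat\chi V_1+k^{2/3}(1-\hat\chi)\geq k^{2/3}$; the resulting operator $\hat P$ is self-adjoint with discrete spectrum near $0$, and its normalized ground state $\hat v\in L^2(\R^n)$ obeys a \emph{commutator-free} Agmon estimate $\Vert e^{\hat\varphi/h}\hat v\Vert_{H^1}={\mathcal O}(h^{-N_0})$, where $\hat\varphi$ coincides with $s$ up to ${\mathcal O}(k)$, hence $e^{\hat\varphi/h}\gtrsim h^{C_1}e^{s/h}$. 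The exponential localization of $\hat v$ near $0$ is automatic because $\hat v$ is an eigenfunction with unit $L^2$-norm, and the filled potential $\hat V_1\geq k^{2/3}$ gives coercivity everywhere without any cutoff. The estimate is then transferred to $\mathbf{u}$ by a resolvent/spectral-projection argument for $P_\theta$: one shows $\Vert\hat\chi\hat v-\tfrac{1}{2i\pi}\int_\gamma(z-P_\theta)^{-1}\hat\chi\hat v\,dz\Vert_{H^1}={\mathcal O}(h^{-N_1}e^{-S/h})$, so that ${\mathbf u}_\theta=\mu\,\hat\chi\hat v+{\mathcal O}(h^{-N_1}e^{-S/h})$ with $|\mu|=1+{\mathcal O}(e^{-\delta/h})$. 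It is this comparison step — not a direct Agmon estimate on the resonant state — that supplies the exponential concentration near the well that your normalization $\Vert\mathbf{u}\Vert_{L^2(\widetilde\R^n_\theta)}=1$ does not provide by itself.
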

\begin{proof} The proof if very similar to that of \cite{FLM} Theorem 2.2, with the only difference that here we have to deal with pseudodifferentilal operators, forbidding us to use  Dirichlet realizations and non smooth weight functions. Instead, we modify $V_1$ in a way similar to (\ref{V1mod}), and we regularize the weights as in Proposition \ref{regulLip}.
\vskip 0.3cm
We consider a ($h$-dependent) cutoff function $\hat\chi$ such that,
$$
\hat\chi(x) = 1  \mbox{ if }  V_1(x)\geq 2k^{2/3}\,;\, \hat\chi(x) = 0 \mbox{ if } V_1(x)\leq k^{2/3}\, ;\, \partial^\alpha \hat\chi ={\mathcal O}(k^{-2|\alpha|/3}),
$$
and we set,
$$
\hat V_1:= \hat\chi V_1 + k^{2/3} (1-\hat\chi )\quad ; \quad \hat P_1:= -h^2\Delta + \hat V_1;
$$
\be
\hat P= 
\left(\begin{array}{cc}
\hat P_1 & 0\\
0 & P_2
\end{array}\right) + hR(x,hD_x).
\ee
We denote by $\hat E$ the first eigenvalue of $\hat P$, and by $\hat v$ its first normalized eigenfunction. Moreover, we consider the Agmon distance $\hat d$ associated with the pseudometric $\left( \min ( V_1,V_2)_+- \hat E\right)dx^2$, and we set $\hat\varphi (x):= \hat d(0,x)$. Then, the same proof as in \cite{FLM}, Lemma 3.1, shows the existence of a constant $C_1>0$ such that,
\be
s(x) - C_1 k\leq \hat \varphi (x) \leq \varphi (x)\quad (x\in\R^n).
\ee
Moreover, an adaptation of the proof of \cite{FLM}, Lemma 3.2 (obtained by using Proposition \ref{regulLip} in order to regularize the Lipschitz weight) gives,
\be
\label{decagmD}
\Vert e^{\hat\varphi /h}\hat v\Vert_{H^1(\R^n)} ={\mathcal O}(h^{-N_0}),
\ee
for some $N_0\geq 0$. Then, the result follows by considering the function $\hat\chi \hat v$, and by observing that, thanks to (\ref{decagmD}), one has (see \cite{FLM}, Lemma 3.3 and Formula (3.20)),
$$
\Vert \hat\chi \hat v - \frac1{2i\pi}\int_\gamma (z-P_\theta )^{-1}\hat\chi\hat v \,dz\Vert_{H^1} ={\mathcal O}(h^{-N_1}e^{-S/h}).
$$
Here, $\gamma$ is the oriented complex circle $\{ z\in\C\, ;\, |z-\hat E| = h^2\}$, and $P_\theta$ is a convenient distortion of $P$. The previous estimate actually shows that the distorted ${\mathbf u}_\theta$ of ${\mathbf u}$ coincides -- up to  ${\mathcal O}(h^{-N_1}e^{-S/h})$ -- with $\mu \hat\chi\hat v$, where $\mu$ is a complex constant satisfying $|\mu| = 1+{\mathcal O}(e^{-\delta /h})$, for some $\delta >0$.
\end{proof}
\begin{remark}\sl \label{rmkutheta} The previous proof also gives a global estimate on ${\mathbf u}_\theta$,
 $$
 \Vert e^{s(x)/h}{\mathbf u}_\theta \Vert_{H^1(\R^n)}={\mathcal O}(h^{-N_1'}),
 $$
  for some constant $N_1'\geq 0$. See \cite{FLM}, Lemma 3.3 and Formula (3.20).
\end{remark}
\vskip 0.5cm
Now, we plan to give an even better a priori estimate on the difference ${\mathbf u} - {\mathbf u}_N$ near the boundary of the island.
Here again, we follow the arguments given in \cite{FLM}, Section 5. For any $N\geq 1$, we set,
$$
U_N:= \{ x\in\R^n\, ;\, \dist (x, \partial\ddot O )< 2(Nk)^{2/3}\}.
$$
We have (see \cite{FLM}, Propositions 5.1 and 5.2),
\begin{proposition}\sl 
\label{dansUN}
There exists $N_1\geq 0$ and $C\geq 1$ such that, for any $N\geq 1$ large enough, one has,
$$
\Vert {\mathbf u} - {\mathbf u}_{CN}\Vert_{H^1(U_N)}\leq h^{-N_2}e^{-S/h}.
$$
\end{proposition}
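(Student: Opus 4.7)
Set $r:={\mathbf u}-{\mathbf u}_{CN}$. By Proposition \ref{globalbkw},
\begin{equation*}
(P-\rho_1)r=-(P-\rho_1){\mathbf u}_{CN}={\mathcal O}\bigl(h^{\delta CN}e^{-\Re\widetilde\varphi_{CN}/h}\bigr)
\end{equation*}
uniformly on $\ddot O_{CN}$. Proposition \ref{patchvw} together with Remark \ref{rmkutheta} provide the baseline: on any fixed compact subset $K_0\subset \ddot O$ avoiding the caustic, $r={\mathcal O}(h^{\infty}e^{-\varphi/h})+{\mathcal O}(h^{-N_1'}e^{-S/h})$ in $H^1$. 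The whole task is to propagate this smallness into the $h$-dependent layer $U_N$, in which both $\varphi$ and $\Re\widetilde\varphi_{CN}$ are close to $S$ but only up to an error of size $(Nk)^{2/3}$ that is very far from negligible when divided by $h$.

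My plan is to carry out a two-stage Agmon/Carleman argument in the spirit of \cite{FLM}, Sections 5.1-5.2. First, an interior propagation: I apply Corollary \ref{AGMON} to $\chi r$ for a cutoff $\chi\in C_0^\infty$ equal to $1$ on a neighborhood of $\overline{U_N}\cap\ddot O$ and supported in $\ddot O_{CN}$, with the Lipschitz weight $\phi(x)=\min(\varphi(x),S-h)$ regularized via Proposition \ref{regulLip} to a smooth $\phi_h$ meeting the hypotheses (\ref{estreg1})-(\ref{estreg}) of Proposition \ref{poidssing}. The eikonal identity $|\nabla\varphi|^2=\min(V_1,V_2)_+$ and the absorption of the ${\mathcal O}(h)$ term from Corollary \ref{AGMON} yield
\begin{equation*}
\bigl\|e^{\phi_h/h}\chi r\bigr\|_{H^1_h}^2 \;\lesssim\; \bigl|\langle e^{\phi_h/h}(P-\rho_1)(\chi r),e^{\phi_h/h}\chi r\rangle\bigr|,
\end{equation*}
whose right-hand side splits into an equation-error term bounded by $h^{\delta CN}e^{-S/h}$, and a commutator $[P,\chi]r$ whose support lies in the region where the baseline smallness of $r$ and the a priori estimate of Proposition \ref{estprior1} already apply; the pseudodifferential commutator is controlled via Proposition \ref{poidssing}.

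The main obstacle is the second stage, namely the strip $U_N\setminus\ddot O$ of thickness $(Nk)^{2/3}$, where the real-valued Agmon weight no longer grows (since $V_1<0$) and the naive bound from Proposition \ref{estprior1} loses $e^{O(N^{2/3}h^{-1/3}|\ln h|^{2/3})}$. Following \cite{FLM}, I exploit the last bullet of Proposition \ref{globalbkw}: the complex phase $\widetilde\varphi_{CN}$ has $\Im\nabla\widetilde\varphi_{CN}\sim -\nu_N\sqrt{\dist(x,\mathcal C)}\,\nabla\dist(x,\mathcal C)$, which provides an effective damping through the Weber-integral representation (\ref{airy}) of ${\mathbf u}_{CN}$. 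Combining this damping with the distortion estimate of Remark \ref{rmkutheta} on ${\mathbf u}_\theta$ and the pseudodifferential Agmon inequality of Corollary \ref{AGMON} adapted to $h$-dependent domains (as in the propagation arguments of \cite{FLM}, Section 5), one obtains a Carleman estimate of the form $\|e^{\Re\widetilde\varphi_{CN}/h}\chi r\|_{H^1}={\mathcal O}(h^{-N_2}+h^{\delta CN-N_2})$ on $U_N$. Taking $C$ large enough that $h^{\delta CN}$ absorbs all polynomial losses, and invoking $\Re\widetilde\varphi_{CN}\geq S-\varepsilon(h)$ from (\ref{reimphi}) to unwind the weight, yields the stated bound $\|r\|_{H^1(U_N)}\leq h^{-N_2}e^{-S/h}$.
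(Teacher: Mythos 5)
Your Stage-1 weight $\phi(x)=\min(\varphi(x),S-h)$ cannot support the Agmon argument you describe. On $\{\varphi<S-h\}$ one has $\phi=\varphi$, hence $|\nabla\phi|^2=\min(V_1,V_2)_+$ \emph{with equality}, and after the mollification of Proposition \ref{regulLip} the quantity $\min(V_1,V_2)_+-|\nabla\phi_h|^2$ is only bounded below by $-{\mathcal O}(h)$. Corollary \ref{AGMON} then produces no positive term able to absorb the $-C_Rh\Vert e^{\phi_h/h}u\Vert^2$ remainder, so the inequality you invoke collapses. The proof in the paper is built precisely around an $h$- and $N$-dependent Lipschitz weight that avoids this,
$$
\phi_N(x)=\min\Bigl(\varphi (x)+C_1Nk +k(S-\varphi (x))^{1/3},\ S+(1-k^{1/3})(S-\varphi (x))\Bigr),
$$
whose two branches satisfy $|\nabla\cdot|^2\leq\min(V_1,V_2)_+$ with a strict margin of order $k=h\ln\tfrac1h\gg h$ on the relevant region, while staying within ${\mathcal O}(Nk)$ of $S$ on $U_N$; the resulting absorption works and the weight unwinds at the cost of only a polynomial factor $h^{-{\mathcal O}(N)}$. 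The cutoff $\widetilde\chi$ used with this weight is supported in $\{\varphi<S-k\}$, i.e., \emph{strictly inside} $\ddot O$, and equals $1$ on $\{\varphi\leq S-2k\}$; it does not need to reach $\ddot O_{CN}$ as yours does.

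Your Stage 2 both overestimates the difficulty and misuses the geometry. On $U_N\cap\ddot O$, since $V_1\sim C_0|x_n|$ near $\partial\ddot O$, one has $S-\varphi\sim|x_n|^{3/2}={\mathcal O}(Nk)$, not ${\mathcal O}((Nk)^{2/3})$; and $e^{{\mathcal O}(Nk)/h}=h^{-{\mathcal O}(N)}$ is a benign polynomial loss, not an obstruction calling for a separate Carleman argument outside the island. Moreover no Agmon-type propagation across $\partial\ddot O$ is possible in the first place: there $\min(V_1,V_2)_+=0$, so any admissible weight is forced to be constant. The paper instead disposes of the entire set $\{\varphi\geq S-2k\}$ (which contains $U_N\setminus\ddot O$, since $\varphi\equiv S$ outside $\ddot O$) \emph{directly} by a triangle inequality: Proposition \ref{estprior1} gives $\Vert e^{s/h}{\mathbf u}\Vert_{H^1}={\mathcal O}(h^{-N_K})$ with $s=S$ there, and Proposition \ref{globalbkw} together with (\ref{reimphi}) gives the same order for ${\mathbf u}_{CN}$. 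Finally, your appeal to $\Im\nabla\widetilde\varphi_{CN}$ is misplaced: that damping is a structural property of the explicit asymptotic solution ${\mathbf u}_{CN}$ and says nothing about the difference $r={\mathbf u}-{\mathbf u}_{CN}$, so a Carleman estimate of the form $\Vert e^{\Re\widetilde\varphi_{CN}/h}\chi r\Vert_{H^1}={\mathcal O}(\cdots)$ has no basis in the stated hypotheses.
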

\begin{proof}
We just recall the main lines of the proof in \cite{FLM}. At first, thanks to Proposition \ref{estprior1} and the particular form of ${\mathbf u}_{CN}$, we immediately see that the estimate is true on the set $\{\varphi (x)\geq S-2k\}$. Then, we take a cutoff function $\widetilde\chi\in C_0^\infty (\varphi (x)< S-k)$ such that $\widetilde\chi =1$ on $\{\varphi (x)\geq S-2k\}$, and $\partial^\alpha\widetilde\chi ={\mathcal O}(h^{-N_\alpha})$ for some $N_\alpha \geq 0$.
We also consider the Lipschitz weight,
$$
\phi_N(x)=\min\left(
\varphi (x)+C_1Nk +k(S-\varphi (x))^{1/3}, S+(1-k^{1/3})(S-\varphi (x)))
\right),
$$
and, by using Propositions \ref{estprior1} and \ref{globalbkw}, we see that if $C$ is large enough, we have,
$$
\Vert e^{\phi_N/h}(P-\rho_1)\widetilde \chi ({\mathbf u} - {\mathbf u}_{CN})\Vert_{L^2(\R^n)} ={\mathcal O}(h^{-M_1}),
$$
for some $M_1\geq 0$ independent of $N$. Then, regularizing $\phi_N$ as in Proposition \ref{regulLip}, we can perform Agmon estimates as in the proof of \cite{FLM}, Proposition 5.1, and we find,
$$
\Vert e^{\phi_N/h}\widetilde \chi ({\mathbf u} - {\mathbf u}_{CN})\Vert_{L^2(\R^n)} ={\mathcal O}(h^{-M_2}),
$$
for some $M_1\geq 0$ independent of $N$, and the result follows.
\end{proof}
\subsection{Propagation}
Now, we plan to prove (see \cite{FLM}, Proposition 6.1),
\begin{theorem}\sl
\label{estimfinale}
For any $L>0$ and for any $\alpha\in\Z_+^n$, there exists $N_{L,\alpha}\geq 1$ such that, for any $N\geq N_{L,\alpha}$, one has, 
\be
\label{vhl}
\partial_x^\alpha ({\mathbf u} - {\mathbf u}_{CN})(x,h)={\mathcal O}(h^Le^{-S/h})\quad {\rm as}\,\,\,\, h\to 0,
\ee
uniformly in $U_N$.
\end{theorem}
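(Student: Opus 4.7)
The plan is to upgrade the polynomial weighted bound of Proposition \ref{dansUN} into the sharp pointwise estimate \eqref{vhl} by combining three ingredients: the equation satisfied by $r_N:=\mathbf{u}-\mathbf{u}_{CN}$, namely
$$
(P-\rho_1)r_N = -(P-\rho_1)\mathbf{u}_{CN} = {\mathcal O}(h^{\delta CN}e^{-\Re\widetilde\varphi_N/h})\quad\mbox{in } U_N
$$
(Proposition \ref{globalbkw}); a weighted Agmon estimate in the spirit of Corollary \ref{AGMON} for a carefully designed $h$-dependent Lipschitz weight; and semiclassical elliptic regularity to convert $L^2$ decay into pointwise derivative decay. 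Since the exponent $\delta C N$ can be made arbitrarily large by increasing $N$, this mechanism produces arbitrary powers of $h$.

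First, I would choose a Lipschitz weight in the spirit of the one used in Proposition \ref{dansUN},
$$
\phi_{N,L}(x) := \min\bigl(\varphi(x) + C_L Nk + k(S-\varphi(x))_+^{1/3},\ S + (1-k^{1/3})(S-\varphi(x))\bigr),
$$
with $C_L\gg 1$ chosen depending on $L$. The same computation as in the proof of Proposition \ref{dansUN} shows $|\nabla\phi_{N,L}|^2\leq\min(V_1,V_2)_+ + {\mathcal O}(k)$ almost everywhere, while on $U_N\cap\{\varphi\leq S-k\}$ one has $\phi_{N,L}(x) \geq S + C_L N k$, i.e.\ $\phi_{N,L}/h \geq S/h + C_L N|\ln h|$. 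After regularising by convolution as in Proposition \ref{regulLip}, the resulting smooth weight $\phi_{N,L,h}$ still satisfies the derivative bounds \eqref{estreg} required by Proposition \ref{poidssing} and Corollary \ref{AGMON}.

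Second, I would apply Corollary \ref{AGMON} to $\widetilde\chi r_N$, with $\widetilde\chi$ an $h$-dependent cut-off equal to $1$ on a neighborhood of $U_N$ and supported in $\ddot O_{CN}\cap\{\varphi\leq S-k\}$, to get
$$
\Vert h\nabla(e^{\phi_{N,L,h}/h}\widetilde\chi r_N)\Vert^2 + \Vert e^{\phi_{N,L,h}/h}\widetilde\chi r_N\Vert^2 \leq \textrm{(source)} + \textrm{(commutator)} + C h\Vert e^{\phi_{N,L,h}/h}\widetilde\chi r_N\Vert^2.
$$
Using $\phi_{N,L,h}\leq\Re\widetilde\varphi_N + {\mathcal O}(k)$ on $\supp\widetilde\chi$, the source term is ${\mathcal O}(h^{\delta CN - M_0})$ with $M_0$ independent of $N$; the commutator $[P,\widetilde\chi]\widetilde{\widetilde\chi}\, r_N$ is supported in $\supp\nabla\widetilde\chi$ and is ${\mathcal O}(h^{-M_1})$ via Proposition \ref{dansUN} (since $\phi_{N,L,h}$ is only slightly larger than $S$ on that set). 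Choosing $N \geq N_L$ so that $\delta C N \geq L + M_0 + M_1 + n$, the last term on the right is absorbed in the left, yielding
$$
\Vert r_N\Vert_{L^2(U_N)} = {\mathcal O}(h^L e^{-S/h}).
$$

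Third, I would pass from this $L^2$ bound to the pointwise derivative bound \eqref{vhl}. Since $(P-\rho_1)\partial_x^\alpha r_N = -[\partial_x^\alpha,P]\,r_N + \partial_x^\alpha(P-\rho_1)\mathbf{u} - \partial_x^\alpha(P-\rho_1)\mathbf{u}_{CN}$ and $[\partial_x^\alpha,P]$ is a semiclassical pseudodifferential operator of order $|\alpha|$ with controlled symbol, a standard semiclassical bootstrap — commuting $\partial_x^\alpha$ through $P-\rho_1$, estimating in $L^2$ on slightly larger sets $U_{N+j}$ (which still lie in $\ddot O_{CN}$), and using the polynomial-in-$h$ derivative control on $\mathbf{u}_{CN}$ coming from Proposition \ref{globalbkw} — gives $H^s(U_N)$ estimates of the form ${\mathcal O}(h^{L'}e^{-S/h})$ with $L'$ as large as we wish, provided $N$ is taken large enough depending on $L'$ and $s$. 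Sobolev embedding $H^s\hookrightarrow L^\infty$ for $s > n/2$ then yields \eqref{vhl}.

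The main obstacle is the second step, namely the propagation of the Agmon estimate in the thin, $h$-dependent layer $U_N$ that straddles $\partial\ddot O$: outside the island the pseudo-metric $\min(V_1,V_2)_+\,dx^2$ degenerates to zero, forcing any admissible Lipschitz weight to be almost flat there; it is only the source-term smallness $\Re\widetilde\varphi_N\geq S-\varepsilon(h)$ (cf.\ \eqref{reimphi}) combined with the cubic correction $k(S-\varphi)_+^{1/3}$, which produces a positive margin $V-|\nabla\phi_{N,L,h}|^2\gtrsim k^{2/3}$ in the transition zone, that makes the Agmon inequality closable. This is the mechanism introduced in \cite{FLM}, Section 6, adapted here to the matrix-valued pseudodifferential setting through Corollary \ref{AGMON} and Proposition \ref{poidssing}; the extra care needed compared with the scalar case lies in handling the off-diagonal term $hR(x,hD_x)$, which is absorbed thanks to the $h$-smallness factor of the operator norm of $e^{\phi_{N,L,h}/h}R(x,hD_x)e^{-\phi_{N,L,h}/h}\langle hD_x\rangle^{-1}$ provided by Proposition \ref{poidssing}.
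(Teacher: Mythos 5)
Your proposal takes an entirely different route from the paper and, more importantly, it does not close. The paper's proof of Theorem~\ref{estimfinale} is a microlocal propagation argument built on the FBI transform: it first exploits the \emph{outgoing} condition on $\mathbf{u}$ (the fact that its distortion $\mathbf{u}_\theta$ lies in $L^2$) to deduce that $T\mathbf{u}=\mathcal{O}(h^\infty e^{-S/h})$ far along the backward $H_{p_1}$-flow from any $\hat x\in\partial\ddot O$, and then transports this smallness back towards $\hat x$ by three successive mechanisms: standard $C^\infty$ propagation of the frequency set, a non-standard propagation in $h$-dependent phase-space domains via the rescaled transform $T_N$ (Lemma~\ref{propagation1}), and finally an ``analytic-type'' propagation after the change of variables $\widetilde x=((Nk)^{-1/2}x',(Nk)^{-2/3}x_n)$, $\widetilde h=h/(Nk)$, which converts polynomial gains in $h$ into exponential gains in $\widetilde h$. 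Your proposal never invokes the outgoing condition and tries to replace all of this by an Agmon estimate, and that is where it breaks down.

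The concrete failure is in your second step. The commutator $[P,\widetilde\chi]$ is supported on $\supp\nabla\widetilde\chi$, where the Agmon weight cannot exceed $S+\mathcal{O}(Nk)$ (the pseudometric $\min(V_1,V_2)_+dx^2$ degenerates outside $\ddot O$, so any admissible Lipschitz weight must stay within an $\mathcal{O}(Nk)$-neighbourhood of $S$ throughout the layer), while the only available a priori bound for $r_N$ there is the fixed polynomial loss $h^{-N_2}e^{-S/h}$ of Proposition~\ref{dansUN}, a loss independent of $N$. The Agmon inequality therefore yields at best
$$
\Vert e^{\phi_{N,L,h}/h}\widetilde\chi\, r_N\Vert \lesssim h^{\delta CN-M_0}+h^{-M_1},
$$
with $M_1$ independent of $N$; increasing $N$ shrinks the source term but leaves the commutator contribution pinned at $h^{-M_1}$, so you recover only $\Vert r_N\Vert_{L^2(U_N)}=\mathcal{O}(h^{-M_1}e^{-S/h})$, i.e.\ no improvement over Proposition~\ref{dansUN}. (There is also a smaller inconsistency: you ask $\widetilde\chi=1$ on a neighbourhood of $U_N$ while $\supp\widetilde\chi\subset\{\varphi\leq S-k\}$, yet $U_N$ contains points with $\varphi=S$.) The deeper obstruction is conceptual: an Agmon estimate is blind to the direction of propagation, so it cannot exploit the outgoing character of $\mathbf{u}$; but without that input the statement is false, since perturbing $\mathbf{u}$ by an incoming solution of $(P-\rho_1)v=0$ of size $e^{-S/h}$ would preserve all your hypotheses yet destroy the conclusion. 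This is exactly why the paper (following \cite{FLM}) must resort to the FBI-propagation machinery rather than weighted energy estimates.
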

\begin{proof}
As in \cite{FLM}, the proof relies on three different types of microlocal propagation arguments. We fix some $\hat x\in\partial\ddot O$, and we define the Fourier-Bors-Iagolnitzer transform $T$ (see, e.g., \cite{Sj, Ma2}) as,
$$
T u(x,\xi ;h):= \int_{\R^n} e^{i(x-y)\xi /h -(x-y)^2/2h} u(y)dy.
$$
\vskip 0.3cm
1) {\it  Standard $C^\infty$ propagation}
\vskip 0.3cm
Since ${\mathbf u}$ is outgoing (that is, it becomes $L^2$ when restricted to distorted space or Perfectly Matched Layer defined in (\ref{PLM})), one can see as in \cite{FLM}, Lemma 6.2 that, if $t_0>0$ is large enough, then,  one has,
$$
T {\mathbf u}(x,\xi) ={\mathcal O}(h^\infty e^{-S/h}),
$$
uniformly near $\exp (-t_0H_{p_1})(\hat x, 0)$. Moreover, by Proposition \ref{estprior1}, we know that $e^{S/h}{\mathbf u}$ remains ${\mathcal O}(h^{-N_0})$ (for some $N_0\geq 0$) on a neighborhood of the $x$-projection of $\{ \exp (-tH_{p_1})(\hat x, 0)\, ;\, 0<t\leq t_0\}$.

Then, the standard $C^\infty$ propagation of the Frequency Set for the solution to a real principal type operator (see, e.g., \cite{Ma2}) shows that the previous estimate remain valid near $\exp (-tH_{p_1})(\hat x, 0)$ for any $t>0$.
\vskip 0.3cm
2) {\it Non standard propagation in $h$-dependent domains}
\vskip 0.3cm
Thanks to the previous result, we can concentrate our attention to a sufficiently small neighborhood of $\hat x$. As before, we choose local Euclidean coordinates $(x',x_n)\in \R^{n-1}\times \R$ centered at $\hat x$, such that $V_1(x) = -C_0x_n +{\mathcal O}(|x-\hat x|^2)$.
We also set $\mu_N:= (Nk)^{-1/3}$, and we considered the modified Fourier-Bors-Iagolnitzer transform $T_N$ defined by,
\be
\label{TN}
T_N u(x,\xi ;h):= \int_{\R^n} e^{i(x-y)\xi /h -(x'-y')^2/2h-\mu_N(x_n-y_n)^2/2h} u(y)dy.
\ee
Then, using the previous result it is elementary to show that (see \cite{FLM}, Lemma 6.3), for any (fixed) $t>0$ small enough, one has,
$$
T _N{\bf 1}_{K_1}{\mathbf u}(x,\xi) ={\mathcal O}(h^\infty e^{-S/h}),
$$
uniformly near $\exp (-tH_{p_1})(\hat x, 0)$,. Here $K_1$ is of the form $K_1 = K\backslash B_d(S)$, where $K$ is any compact neighborhood of the closure of $\ddot O$. The interest of the latter property is that, as shown in \cite{FLM}, it can be propagated up $h$-dependent times $t$ of order $(Nk)^{1/3}$. More precisely, setting,
$$
\exp tH_{p_1}(\hat x, 0) = (x'(t), x_n(t) ; \xi'(t), \xi_n(t))\quad (t\in\R),
$$
we have (see \cite{FLM}, Lemma 6.4),
\begin{lemma}\sl 
\label{propagation1}
There exists $\delta_0 >0$ such that, for any $\delta\in (0,\delta_0]$, for all $N\geq 1$ large enough, and for $t_{N,\delta}:=\delta^{-1}(Nk)^{1/3}$, one has,
$$
{\bf T}_N {\bf 1}_{K_1}{\mathbf u}={\mathcal O}(h^{\delta N}e^{-S/h} ) \mbox{ uniformly in } {\mathcal W}(t_N,h),
$$
where,
\begin{eqnarray*}
{\mathcal W}_\delta(N,h):= \{  |x_n- x_n(-t_{N,\delta})|\leq \delta (Nk)^{2/3}\, ,\, |\xi_n -\xi_n (-t_{N,\delta})| \leq \delta (Nk)^{1/3},\\
|x'- x'(-t_{N,\delta})|\leq \delta (Nk)^{1/3}\, ,\, |\xi' -\xi' (-t_{N,\delta})| \leq \delta (Nk)^{1/3}\}.
\end{eqnarray*}
\end{lemma}
\begin{proof}
The proof is based on the refined exponential weighted estimates (in the same spirit as in \cite{Ma2}) given in \cite{FLM}, Proposition 8.3 , that we apply here to the operator $P_1$. Since the proof is very similar to that of \cite{FLM}, Lemma 6.4, we omit the details.
\end{proof}
\vskip 0.3cm
On the other hand, using the explicit form of ${\mathbf u}_{CN}$ given in (\ref{summarywkb}), one also sees that, for any $L$ large enough, there exists $\delta_L>0$ such that, for any $N\geq 1$, one has (see \cite{FLM}, Lemma 6.7),
$$
T_N{\bf 1}_{K_1}{\mathbf u}_{CN}={\mathcal O}((h^{\delta_L N}+h^L)e^{-S/h} ) \mbox{ uniformly in } {\mathcal W}_\delta(N,h).
$$
In particular, taking $N=L/\delta_L$ with $L>>1$, we obtain a sequence $N=N_L$ along which,
$$
T_N{\bf 1}_{K_1}{\mathbf u}_{CN}={\mathcal O}(h^{\delta_L N}e^{-S/h} ) \mbox{ uniformly in } {\mathcal W}_\delta(N,h),
$$
and with both $N$ and $\delta_L N$ arbitrarily large.
\vskip 0.3cm
As a consequence, along the same sequence, we also obtain,
$$
T_N {\bf 1}_{K_1}({\mathbf u}-{\mathbf u}_{CN}) ={\mathcal O}(h^{\delta_L' N}e^{-S/h} ) \mbox{ uniformly in } {\mathcal W}_\delta(N,h),
$$
with $\delta'_L=\min(\delta, \delta_L)$.
\vskip 0.3cm
Moreover, we see that, when $y\in U_N\cap B_d(S)$ and $x\in \Pi_x {\mathcal W}_\delta(N,h)$ (where $\Pi_x$ stands for the natural projection onto the $x$-space), we have,
$$
\mu_N(x_n-y_n)^2 +s(x) - S \geq C_\delta Nk,
$$
with $C_\delta>0$ constant (and actually, $C_\delta\to \infty$ as $\delta\to 0$).
Therefore, using Proposition \ref{dansUN} and the expression (\ref{TN}) of $T_N$, we also obtain,
$$
T_N {\bf 1}_{U_N\cap B_d(S)}({\mathbf u}-{\mathbf u}_{CN}) ={\mathcal O}(h^{\delta N}e^{-S/h} ) \mbox{ uniformly in } {\mathcal W}_\delta(N,h).
$$
as a consequence, if we set,
\begin{equation}
\label{defchiN}
\chi_N (x):=\chi_0\left(\frac{|x_n-\hat x_n|}{(Nk)^{2/3}}\right)\chi_0\left(\frac{|x'-\hat x'|}{(Nk)^{1/2}}\right),
\end{equation}
where the function $\chi_0\in C_0^\infty (\R_+ ; [0,1])$ verifies $\chi_0 =1$ in a sufficiently large neighborhood of $0$, and is fixed  in such a way that $\chi_N(x) =1$ in $\{|x_n-\hat x_n| \leq |x_n(-t_N)-\hat x_n| + 2\delta (Nk)^{2/3}\,;\, |x'-\hat x'|\leq |x'(-t_N)-\hat x'|+2\delta (Nk)^{1/2}\}$ (here, $t_N$ and $\delta$ are those of Lemma \ref{propagation1}), then, the function,
$$
v_N:= \chi_N e^{S/h}({\mathbf u}-{\mathbf u}_{CN}),
$$
is such that,
\be
\label{estvN}
T_N v_N={\mathcal O}(h^{\delta_L' N}e^{-S/h} ) \mbox{ uniformly in } {\mathcal W}_\delta(N,h).
\ee
Moreover, we have (see \cite{FLM}, Section 6.2),
$$
(P-\rho_1) v_N =[P,\chi_N]e^{S/h}({\mathbf u}-{\mathbf u}_{CN})+{\mathcal O}(h^{\delta N}),
$$
and thus, on $\{d_N(x, \supp\nabla\chi_N) \geq \varepsilon\}\times \R^n$ (where $\varepsilon >0$ is fixed small enough and $d_N$ is  the distance associated with the metric $(Nk)^{-1}(dx')^2 + (Nk)^{-4/3}dx_n^2$), 
$$
T_N(P-\rho_1) v_N ={\mathcal O}(h^{\delta'N}),
$$
for some $\delta' =\delta'(\varepsilon) >0$.
\vskip 0.3cm
3) {\it (Almost) standard analytic propagation}
\vskip 0.3cm

Although we are in a region where no analytic assumption is made, a re-scaling of the problem makes appear estimates similar to those encountered in the analytic context. Indeed, setting,
$$
\widetilde h=\widetilde h_N:= \frac{h}{Nk} = \left( N\ln\frac1{h}\right)^{-1},
$$
and performing the change of variable (still working in the same coordinates for which $\hat x=0$),
\begin{eqnarray*}
&& x\mapsto 
\widetilde x = (\widetilde x', \widetilde x_n):= ((Nk)^{-1/2}x', (Nk)^{-2/3}x_n);\\
&& \xi\mapsto 
\widetilde \xi = (\widetilde\xi', \widetilde \xi_n):= ((Nk)^{-1/2}\xi', (Nk)^{-2/3}\xi_n)
\end{eqnarray*}
we see that the estimate (\ref{estvN}) implies (see \cite{FLM}, Formula (6.43)),
$$
T \widetilde v_N(\widetilde x, \widetilde\xi ; \widetilde h_N) ={\mathcal O}(e^{-\delta'_L/2\widetilde h_N})
$$
uniformly in the tubular domain,
\be
\begin{array}{rll}
\widetilde {\mathcal W}(\widetilde h):= \{  &|\widetilde x_n- \widetilde x_n(-\delta^{-1})|\leq \delta \, ,\, &|\widetilde \xi_n -\widetilde \xi_n (-\delta^{-1})| \leq \delta,\\
&|\widetilde x'- \widetilde x'(-\delta^{-1})|\leq \delta (Nk)^{-\frac 16}\, ,\, &|\widetilde \xi' -\widetilde \xi' (-\delta^{-1})| \leq \delta (Nk)^{-\frac 16}\},
\end{array}
\ee
where,
\begin{eqnarray*}
&& \widetilde v_N(\widetilde x) := (Nk)^{\frac{n-1}4 +\frac1{3}}v_N ((Nk)^{\frac12}\widetilde x',  (Nk)^{\frac23}\widetilde x_n);\\
&& (\widetilde x(\widetilde t), \widetilde\xi(\widetilde t)):=\exp \widetilde t H_{\widetilde p_1}(0,0); \\
&& \widetilde p_1(\widetilde x, \widetilde \xi) :=(Nk)^{1/3}| \widetilde\xi'|^2 + \widetilde\xi_n^2 +W_1(\widetilde x, \widetilde h);\\
&&  W_1(\widetilde x, \widetilde h):= (Nk)^{-2/3} V_1((Nk)^{1/2}\widetilde x', (Nk)^{2/3}\widetilde x_n) - (Nk)^{-2/3}\rho_1.
\end{eqnarray*}
Moreover, setting,
$$
\widetilde  P:= -(Nk)^{1/3}\widetilde h^2\Delta_{\widetilde  x'} - \widetilde h^2\partial_{\widetilde x_n}^2 + W_1(\widetilde  x),
$$
then, for any $N\geq 1$ large enough, we also have,
$$
T\widetilde  P\widetilde  v_N (\widetilde x, \widetilde\xi ; \widetilde h_N) ={\mathcal O}(e^{-\delta'/2\widetilde h_N}),
$$
uniformly with respectto $h>0$ small enough and $(\widetilde x, \widetilde\xi )\in\R^{2n}$ verifying $d_N(((Nk)^{1/2}\widetilde x', (Nk)^{2/3}\widetilde x_n) , \supp\nabla\chi_N) \geq \varepsilon$.
\vskip 0.3cm
Finally, by Proposition \ref{estprior1} and Proposition \ref{dansUN}, we have the a priori estimate,
$$
\Vert \widetilde v_N\Vert_{H^1} ={\mathcal O}(h^{-N_1}) ={\mathcal O}(e^{N_1/(N\widetilde h)}),
$$
for some $N_1\geq 0$ independent of $N$, and we observe that, for $N=L/\delta_L$, one has $N_1/(\delta_LN )\to 0$ as $L\to +\infty$.
\vskip 0.3cm
At this point, a small refinement of the propagation of the microsupport (see \cite{FLM}, Proposition 6.8) gives the existence of a constant $\delta_1>0$ independent of $L$ such that, for all $L$ large enough and $N=L/\delta_L$, one has,
\begin{equation}
\label{0notinMS}
T\widetilde v_N (\widetilde x,\widetilde\xi ;\widetilde h) ={\mathcal O}(e^{-\delta_1\delta_L /\widetilde h}),
\end{equation}
uniformly in $V(\delta_1)=\left\{\widetilde x;|\widetilde x|\leq \delta_1\right\}\times \left\{\widetilde \xi;\,\,(Nk)^{\frac 16}|\widetilde \xi'|+|\widetilde \xi_n|\leq \delta_1\right\}$.
\vskip 0.3cm
Then, using an ellipticity property of $\widetilde p_1$ away from $ \{\widetilde \xi;\,\,(Nk)^{\frac 16}|\widetilde \xi'|+|\widetilde \xi_n|\leq \delta_1\}$, and reconstructing $\widetilde v_N$ from $T\widetilde v_N$, one finally finds,
$$
\Vert \widetilde v_N\Vert_{H^m(|\widetilde x|\leq \delta_2)} ={\mathcal O}(e^{-\delta_2\delta_L/\widetilde h}),
$$
with $m\geq 0$ arbitrary, $\delta_2>0$ independent of $L$, $N=L/\delta_L$, and $L$ arbitrarily large. Therefore, turning back to the original coordinates $x$ and parameter $h$, and making $\hat x$ vary on all of $\partial\ddot O$, Theorem \ref{estimfinale} follows.
\end{proof}
\section{Asymptotics of the width}\label{secwidt}
As before, we denote by $P_\theta$ the distorted operator obtained from $P$ by means of a complex distortion as in (\ref{PLM}), with $R_0$ sufficiently large in order to have $\ddot O \subset \{ |x|\leq R_0/2\}$. We also denote by ${\mathbf u}_\theta$ the corresponding distorted state obtain from ${\mathbf u}$ by applying the same distortion (see, e.g., \cite{FLM} for more details).
\vskip 0.3cm
Let $\psi_0\in C_0^\infty ([0,2); [0,1])$ with $\psi_0 =1$ near $[0,1]$, and set,
$$
\psi_N (x):= \psi_0\left( \frac{\dist (x,\ddot O)}{(Nk)^{2/3}}\right),
$$
where, as before, $N=L/\delta_L$ with $L\geq 1$ arbitrarily large.
\vskip 0.3cm
Then, since $\psi_N{\mathbf u}=\psi_N{\mathbf u}_\theta$,  $P_\theta {\mathbf u}_\theta=\rho_1 {\mathbf u}_\theta$, and $\psi_N P_\theta \psi_N {\mathbf u}_\theta = \psi_N P \psi_N{\mathbf u}$, we have,
$$
\Im \rho_1 \Vert \psi_N{\mathbf u}\Vert ^2 =\Im \la \psi_N P_\theta{\mathbf u}_\theta, \psi_N{\mathbf u}\ra =\Im \la  [\psi_N, P_\theta] {\mathbf u}_\theta,\psi_N{\mathbf u}\ra,
$$
and thus, 
\be
\label{formres0}
\Im\rho_1=\Im \frac{\la 2h^2(\nabla\psi_N)\nabla {\mathbf u} +h^2(\Delta\psi_N){\mathbf u}, \psi_N {\mathbf u}\ra+h\la  [\psi_N,R_\theta]{\mathbf u}_\theta, \psi_N{\mathbf u}\ra}{\Vert \psi_N{\mathbf u}\Vert ^2}.
\ee

Moreover, we know that $\Vert\psi_N {\mathbf u}\Vert  =1+{\mathcal O}(e^{-\delta /h})$ with $\delta >0$ and, by Theorem \ref{estimfinale}, on $\supp\,\widetilde\psi_N$, we can replace ${\mathbf u}$ by ${\mathbf u}_{CN}$, up to an error ${\mathcal O}(h^L e^{-S/h})$. Also using Proposition \ref{estprior1}, we deduce,
\begin{eqnarray}
\label{formres1}
\Im\rho_1=\Im \la 2h^2(\nabla\psi_N)\nabla {\mathbf u}_{CN} +h^2(\Delta\psi_N){\mathbf u}, \psi_N {\mathbf u}_{CN}\ra\\
+h\la  [\psi_N,R_\theta]{\mathbf u}_\theta, \psi_N{\mathbf u}\ra\nonumber
+{\mathcal O}(h^{L-N_0})e^{-2S/h},
\end{eqnarray}
for some fix $N_0\geq 0$ independent of $L$.
\vskip 0.3cm
Now, we introduce $\widetilde\psi_0\in C_0^\infty ((1,2); [0,1])$ with $\widetilde \psi_0 =1$ near $\supp\nabla\psi_0$, and we set $\widetilde\psi_N(x) =\widetilde\psi_0\left( \frac{\dist (x,\ddot O)}{(Nk)^{2/3}}\right)$. 
\begin{lemma} One has,
\be
\label{reste}
 \la  [\psi_N,R_\theta]{\mathbf u}_\theta, \psi_N{\mathbf u}\ra=\la \psi_N [\psi_N,R]\widetilde\psi_N{\mathbf u}, \widetilde\psi_N{\mathbf u}\ra+{\mathcal O}(h^\infty e^{-2S/h}).
\ee
\end{lemma}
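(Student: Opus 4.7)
The plan is to combine two reductions. First, since $R_0$ has been taken large enough that $\ddot O\subset\{|x|\leq R_0/2\}$, and since $(Nk)^{2/3}\to 0$, the supports of both $\psi_N$ and $\widetilde\psi_N$ lie in $\{|x|<R_0\}$, where the distortion map $F$ vanishes identically. Consequently, the distortion acts as the identity on these supports: in particular $\psi_N{\mathbf u}_\theta=\psi_N{\mathbf u}$ and $\widetilde\psi_N{\mathbf u}_\theta=\widetilde\psi_N{\mathbf u}$. Moreover, by Assumption 3 the symbol of $R$ is holomorphic in $\xi$ in a complex strip, and a standard contour-deformation argument (the same one used in defining $R_\theta$) gives $\chi_1(R_\theta-R)\chi_2={\mathcal O}(e^{-c/h})$ in $L^2$ operator norm, for some $c>0$, whenever $\chi_1,\chi_2\in C_0^\infty(\{|x|\leq R_0-\delta\})$.

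I would first replace $R_\theta$ by $R$ and ${\mathbf u}_\theta$ by ${\mathbf u}$ in the LHS. Picking a cutoff $\widetilde\chi\in C_0^\infty(\{|x|<R_0\})$ equal to $1$ on a neighborhood of $\supp\psi_N\cup\supp\widetilde\psi_N$, I decompose ${\mathbf u}_\theta=\widetilde\chi{\mathbf u}_\theta+(1-\widetilde\chi){\mathbf u}_\theta=\widetilde\chi{\mathbf u}+(1-\widetilde\chi){\mathbf u}_\theta$. The contribution of $(1-\widetilde\chi){\mathbf u}_\theta$ paired with $\psi_N{\mathbf u}$ is ${\mathcal O}(h^\infty e^{-2S/h})$, by combining the exponential kernel decay of the analytic pseudodifferential operator $[\psi_N,R_\theta]$ between the disjoint supports of $\psi_N$ and $(1-\widetilde\chi)$, the bound $\|{\mathbf u}_\theta\|_{L^2}=1$, and the estimate $\|\psi_N{\mathbf u}\|={\mathcal O}(h^{-N_K}e^{-S/h})$ from Proposition \ref{estprior1}. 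For the remaining $\widetilde\chi{\mathbf u}$-contribution, the preceding remark shows that $R_\theta$ may be replaced by $R$ up to an ${\mathcal O}(e^{-c/h})$ error, which is absorbed in ${\mathcal O}(h^\infty e^{-2S/h})$ using again the decay of ${\mathbf u}$ on $\supp\psi_N$. This reduces matters to showing
$$
\la[\psi_N,R]{\mathbf u},\psi_N{\mathbf u}\ra=\la\psi_N[\psi_N,R]\widetilde\psi_N{\mathbf u},\widetilde\psi_N{\mathbf u}\ra+{\mathcal O}(h^\infty e^{-2S/h}).
$$

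I would then rewrite the left side as $\la\psi_N[\psi_N,R]{\mathbf u},{\mathbf u}\ra$ and the right side as $\la\widetilde\psi_N\psi_N[\psi_N,R]\widetilde\psi_N{\mathbf u},{\mathbf u}\ra$, so the difference splits into two contributions involving $[\psi_N,R](1-\widetilde\psi_N)$ and $(1-\widetilde\psi_N)\psi_N[\psi_N,R]\widetilde\psi_N$. For the first, since $\psi_N$ and $\widetilde\psi_N$ are multiplications and therefore commute, one has the key identity
$$
[\psi_N,R](1-\widetilde\psi_N)=[\psi_N,R(1-\widetilde\psi_N)].
$$
Because $\widetilde\psi_N$ does not depend on $\xi$, the Moyal-type symbol expansion of this right-hand commutator has, at every order $|\alpha|\geq 1$, the factor $(1-\widetilde\psi_N(x))\,\partial_x^\alpha\psi_N(x)$, which vanishes identically by the choice $\widetilde\psi_N\equiv 1$ on $\supp\nabla\psi_N$. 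Thus every term in the formal expansion is zero, and the full operator equals the symbolic remainder at any chosen order $M$. An analogous identity handles the second contribution from the left. Combining with the $e^{-S/h}$-decay of ${\mathbf u}$ on $\supp\psi_N$ from Proposition \ref{estprior1} (used on both factors of the inner product) then yields the announced ${\mathcal O}(h^\infty e^{-2S/h})$ bound.

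The main obstacle is the $h$-dependence of the cutoffs: derivatives of $\psi_N,\widetilde\psi_N$ grow like $(Nk)^{-2|\alpha|/3}$, so the symbolic remainder at order $M$ is only ${\mathcal O}(h^M(Nk)^{-2M/3})={\mathcal O}(h^{M/3}|\ln h|^{-2M/3}N^{-2M/3})$ in operator norm, and the usual semiclassical asymptotic expansion does not converge in the naive sense. However, since every principal-type term in the expansion vanishes identically (not merely to leading order), we are not comparing a main term against a remainder, and it suffices to exploit the gain of a factor $h^{M/3}$ for $M$ arbitrarily large, which is more than enough to beat any fixed negative power of $h$ appearing in the a priori estimates of ${\mathbf u}$. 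This is a direct adaptation of the refined semiclassical calculus already used (notably in Section \ref{compsol}) to handle $h$-dependent cutoff scales.
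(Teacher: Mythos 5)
Your observation that every term in the Moyal expansion of $[\psi_N,R(1-\widetilde\psi_N)]$ vanishes (since $(1-\widetilde\psi_N)\partial_x^\alpha\psi_N\equiv 0$ for $|\alpha|\geq 1$, and $\partial_x^\beta\widetilde\psi_N\,\partial_x^\alpha\psi_N\equiv 0$ because $\supp\nabla\widetilde\psi_N\cap\supp\nabla\psi_N=\emptyset$) is correct, and it is the symbol-level restatement of the geometric fact the paper exploits: the integral kernel of $[\psi_N,R](1-\widetilde\psi_N)$ lives on the set $A_N$ where $|x-y|\geq c(Nk)^{2/3}$. But the conclusion you draw from it is too weak for the lemma.

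The gap is in your last step. After truncating the Moyal expansion at any order $M$, what you control is the $L^2$-operator norm of the remainder, which you correctly estimate as ${\mathcal O}(h^{M/3}|\ln h|^{-2M/3})$, i.e.\ ${\mathcal O}(h^\infty)$. This does \emph{not} carry a factor $e^{-2S/h}$. Moreover, your claim that $\|\psi_N{\mathbf u}\|={\mathcal O}(h^{-N_K}e^{-S/h})$ is false: $\psi_N\equiv 1$ on all of $\ddot O$, in particular at the well, so $\|\psi_N{\mathbf u}\|=1+{\mathcal O}(e^{-\delta/h})$ (as the paper itself uses just above (\ref{formres1})). Pairing an operator of norm ${\mathcal O}(h^\infty)$ with $\widetilde\chi{\mathbf u}$ (whose $L^2$ norm is only ${\mathcal O}(h^{-N_K})$) yields ${\mathcal O}(h^\infty)$, which is exponentially larger than the claimed ${\mathcal O}(h^\infty e^{-2S/h})$. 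The Moyal remainder for $C^\infty$ symbols is \emph{not} automatically localized near $\supp\nabla\psi_N\times\supp\nabla\psi_N$, so you cannot import the ${\mathcal O}(e^{-S/h})$ decay of ${\mathbf u}$ from there.

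To repair this you must conjugate the operator by the regularized Agmon weight $e^{s_h/h}$ (as in Proposition \ref{regulLip}) and show that $e^{-s_h/h}[\psi_N,R](1-\widetilde\psi_N)e^{-s_h/h}\langle hD_x\rangle^{-1}$ has norm ${\mathcal O}(h^\infty e^{-2S/h})$. This is precisely where Assumption 3 --- the holomorphy of the $r_{j,k}$ in $\xi$ up to $|\Im\xi|\leq\sqrt{M_0}$ --- enters: the paper deforms the contour by $\xi\mapsto\xi+i\sqrt{M_0}\,(x-y)/\sqrt{(x-y)^2+h^2}$, picks up a kernel factor $e^{-\Phi/h}$ with $\Phi\approx\sqrt{M_0}|x-y|$, and then shows, using the Lipschitz bound $|s(x)-s(y)|\leq\mu|x-y|$ with $\mu<\sqrt{M_0}$ and the fact that $s(x)=S$ or $s(y)=S$ on $A_N$, that $s(x)+s(y)+\Phi\geq 2S+c_0(Nk)^{2/3}$ there. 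This combination of contour deformation and geometric estimate is the essential missing ingredient of your argument, and it is not a cosmetic adaptation of Section \ref{compsol}.
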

\begin{proof}
Thanks to Assumption 3, we can make in $ [\psi_N,R_\theta]$ the (complex) change of contour of integration,
$$
\R^n\ni\xi \mapsto \xi +i\sqrt{M_0}\frac{x-y}{\sqrt{(x-y)^2 + h^2}}.
$$
We obtain,
$$
 [\psi_N,R_\theta]{\mathbf u}_\theta (x)=\frac1{(2\pi h)^n}\int e^{i(x-y)\xi/h -\Phi /h})(\psi_N(x)-\psi_N(y))\widetilde r_\theta \overline{\mathbf u}_\theta (y)dyd\xi
$$
with,
$$
\Phi:= \sqrt{M_0}\frac{(x-y)^2}{\sqrt{(x-y)^2 + h^2}}\quad ; \quad \partial_{x,y}^\alpha\partial_\xi^\beta \widetilde r_\theta (x,y,\xi)={\mathcal O}(h^{-|\alpha| }\la\xi\ra).
$$
By construction, on the set,
$$
A_N:=\supp (\psi_N(x)-\psi_N(y))\cap\{ \widetilde\psi_N(x)\not=1 \mbox{ or } \widetilde\psi_N(y)\not=1\},
$$
we have $|x-y|\geq c(Nk)^{2/3}$ for some constant $c>0$. As a consequence, on this set, the quantity $|x-y|/\sqrt{(x-y)^2 + h^2}$ tends to 1 uniformly as $h\to 0$. Moreover, still on this set, we have either $s(x)=S$ or $s(y)=S$, and since $|s(x)-s(y)|\leq \mu|x-y|$ with $0<\mu <\sqrt{M_0}$, we deduce the existence of a constant $c_0>0$ such that,
$$
\mbox{ For } (x,y)\in A_N,\mbox{ one has } s(x)+s(y) +\Phi \geq 2S + c_0(Nk)^{2/3}.
$$
Therefore, by the Calder\'on-Vaillancourt theorem (and also using Proposition \ref{regulLip} in order to regularize the function $s(x)$), we obtain,
\begin{eqnarray*}
&&\Vert e^{-s/h}[\psi_N,R_\theta]e^{-s/h}(1-\widetilde\psi_N)\la hD_n\ra^{-1}\Vert \\
&& \hskip 2cm +\Vert (1-\widetilde\psi_N)e^{-s/h}[\psi_N,R_\theta]e^{-s/h}\la hD_n\ra^{-1}\Vert ={\mathcal O}(h^\infty e^{-2S/h}).
\end{eqnarray*}
Then, writing,
$$
 \la  [\psi_N,R_\theta]{\mathbf u}_\theta, \psi_N{\mathbf u}\ra= \la  e^{-s/h}[\psi_N,R_\theta]e^{-s/h}(e^{s/h}{\mathbf u}_\theta), \psi_Ne^{s/h}{\mathbf u}\ra,
$$
and using Proposition \ref{estprior1} and Remark \ref{rmkutheta}, the result follows.
\end{proof}
\vskip 0.3cm
Inserting (\ref{reste}) into (\ref{formres1}), and approaching $\widetilde\psi_N{\mathbf u}$ by $\widetilde\psi_N{\mathbf u}_{CN}$, we obtain,
\begin{eqnarray}
\label{resform2}
\Im\rho_1&=&\Im \la 2h^2(\nabla\psi_N)\nabla {\mathbf u}_{CN} +h^2(\Delta\psi_N){\mathbf u}, \psi_N {\mathbf u}_{CN}\ra\\
&& +h\la \psi_N [\psi_N,R]\widetilde \psi_N{\mathbf u}_{CN}, \widetilde\psi_N{\mathbf u}_{CN}\ra\nonumber
+{\mathcal O}(h^{L-N_0})e^{-2S/h}.
\end{eqnarray}
Finally, using Proposition \ref{globalbkw} (in particular the expression (\ref{summarywkb}) of ${\mathbf u}_{CN}$
 in $\displaystyle\bigcup_{\gamma\in G}{\mathcal W}_N(\gamma) \cap\supp\,\widetilde\psi_N$), we can perform a stationary-phase expansion in (\ref{resform2}) (as in \cite{FLM}, Section 7), and, for $L$ large enough, we obtain,
 $$
 \Im\rho_1 =-h^{(1-n_\Gamma)/2} \sum_{j=n_0}^{L}\sum_{0\leq m\leq\ell\leq L}f_{j,\ell, m}h^{j +\ell }|\ln h|^me^{-2S/h}+{\mathcal O}(h^{L/2})e^{-2S/h},
 $$
 with $f_{n_0, 0, 0}>0$. In particular, the result for $\rho_1$ follows. 
 \vskip 0.3cm
 The result for $\rho_j$, $j\geq 2$ can be done along the same lines, by using a representation of $\Im \rho_j$ analogous to (\ref{formres0}), and by approaching ${\mathbf u}$ by a linear combination of WKB expressions similar to ${\mathbf u}_{CN}$ (where the number of terms depends on the asymptotic multiplicity of the resonance: See \cite{HeSj2}, Section 10).
\bigskip

\appendix
\vskip 1cm

\bigskip


{}

\end{document}